\documentclass{article}
\usepackage{amssymb,amsmath,amsthm, graphicx, esint}
\usepackage{tikz-cd}

\newcommand{\td}{\mathbf{d}}
\newcommand{\cZ}{\mathcal{Z}}
\newcommand{\bL}{\mathbb{L}}
\newcommand{\fg}{\mathfrak{g}}

\newcommand{\fk}{\mathfrak{k}}

\newcommand{\C}{\mathbb{C}}
\newcommand{\R}{\mathbb{R}}
\newcommand{\CP}{\mathbb{C}\mathrm{P}}
\newcommand{\Z}{\mathbb{Z}}
\newcommand{\N}{\mathbb{N}}
\newcommand{\one}{\mathbf{1}}
\newcommand{\SU}{\mathrm{SU}}
\newcommand{\U}{\mathrm{U}}
\newcommand{\bS}{\mathbb{S}}
\newcommand{\su}{\mathfrak{su}}
\newcommand{\fu}{\mathfrak{u}}

\newcommand{\cG}{\mathcal{G}}

\newcommand{\cH}{\mathcal{H}}
\newcommand{\Diff}{\mathrm{Diff}}
\newcommand{\SDiff}{\mathrm{SDiff}}
\newcommand{\X}{\mathfrak{X}}
\newcommand{\lra}[2]{\langle#1,#2\rangle}
\newcommand{\oline}{\overline}

\newcommand{\tr}{\mathrm{tr}}
\newcommand{\bra}[1]{\langle#1|}
\newcommand{\ket}[1]{|#1\rangle}
\newcommand{\vol}{\mathrm{Vol}}

\newtheorem{Theorem}{Theorem}
\newtheorem*{TheoremstarA}{Theorem A}
\newtheorem*{TheoremstarB}{Theorem B}

\newtheorem{Proposition}[Theorem]{Proposition}
\newtheorem{Lemma}[Theorem]{Lemma}
\newtheorem{Corollary}[Theorem]{Corollary}

\newtheorem{Definition}[Theorem]{Definition}

\theoremstyle{remark}
\newtheorem{Remark}{Remark}

\title{Central extensions for loop groups of area-preserving diffeomorphisms and their fuzzy sphere limits}
\author{Bas Janssens and Zhenghan Wang}
\begin{document}

\maketitle
\begin{abstract}
We classify central extensions for the loop group $L\SDiff(\bS^2)$ 
of area-preserving diffeomorphisms of the $2$-sphere, and of related twisted loop groups. 
We then show that the corresponding Lie algebra cocycles are `fuzzy sphere limits' of 
Kac-Moody cocycles for (twisted) loop algebras ${L\mathfrak{su}(k+1)}$ with $k\rightarrow \infty$, provided that the cocycles are rescaled by $6/k^3$. 
\end{abstract}

\section{Introduction}

Quantum field theories that are mathematically rigorous as well as experimentally relevant are interesting, important, and rare. 
One such class is the rational conformal field theories in $1+1$ dimension.  A crucial ingredient for success in this class is the 
symmetry enhancement of the group $\Diff(\bS^1)\times \Diff(\bS^1)$ of global conformal symmetries 
of $\bS^1 \times \bS^1$ to a central extension by $\mathrm{U}(1)$. 
It is the unitary representation theory of the centrally extended Lie group that plays a key role in the construction of the CFTs.

An extension of this approach to higher dimensions would be an interesting challenge. In this paper, we explore the possibility 
that in the construction of quantum field theories (QFTs) in  $2+1$ dimensions,
the loop group $L\SDiff(\bS^2)$ with values in the group of volume-preserving diffeomorphisms $\SDiff(\bS^2)$ of the 2-sphere -- considered as diffeomorphisms of $\bS^1 \times \bS^2$ -- may play a role that is similar to that of the conformal group $\Diff(\bS^1)\times \Diff(\bS^1)$ 
for $1+1$ dimensional field theories on $\bS^1 \times \bS^1$.  

There are many mathematical formulations of quantum field theory, but physically relevant examples are daunting to construct.  Our goal is to construct non-trivial $(2+1)$-dimensional conformal field theories mathematically based on the representation theory of extended symmetry groups, with an eye towards physical relevant examples such as the 3D Ising model \cite{Eck26}.

One way to think about general QFTs is as a triple $(|0\rangle, \mathcal{L},\mathcal{S})$, where $|0\rangle$ is the vacuum or ground state, $\mathcal{L}$ the local algebra, and $\mathcal{S}$ the non-local symmetry algebra.  In our case we think of $\mathcal{S}$ 
as the group algebra of a certain infinite dimensional Lie group, and what we wish to generalize to higher dimensions is the chiral algebra/vertex operator algebra formulation of chiral $(1+1)$-dimensional conformal field theories, and the construction of full conformal field theories using the representation theory of chiral algebras.  A generalization of chiral algebra to higher dimensions is formulated without non-trivial examples in \cite{Bor01}.  

Another general framework for QFTs is algebraic quantum field theory (AQFT).  In $(1+1)$-dimensional CFTs, vertex operator algebra and local conformal nets are essentially the same under suitable regularity assumptions \cite{CKLW18}.  It is natural to expect a similar correspondence for higher dimensions with the right formulations of CFTs.  

In either framework, a natural starting point for constructing nontrivial QFTs on the conformal compactification 
$(\mathbb{S}^1\times \mathbb{S}^d)/\Z_2$ of Minkowski space $\mathbb{R}^{1,d}$
would be to look for subgroups of $\Diff(\bS^1 \times \bS^d)^{\Z_2}$ that are large enough to give rise to 
local algebras of observables (so preferably infinite-dimensional), yet small enough to admit for
nontrivial central extensions at the Lie algebra level -- something that seems to be a prerequisite 
for representations of positive energy in certain settings \cite{JN24, JN25}. 

Since the Lie algebra of $\mathrm{Diff}(\bS^1 \times \bS^d)$ has interesting central extensions  
only if $d=0$ \cite{JN25}, one needs a subgroup already for $d=1$. In that case the conformal group 
$\Diff(\bS^1)\times \Diff(\bS^1)$ is a natural candidate: it has nontrivial central extensions (essentially 2 copies of the Virasoro-Bott group),
and the representation theory of these central extensions is a crucial ingredient in the construction of rational conformal field theories 
on $\bS^1 \times \bS^1$. 
We aim to explore the possibility that for $d=2$, the loop group 
$L\SDiff(\bS^2)$ with values in the group $\SDiff(\bS^2)$ 
of volume-preserving diffeomorphisms of the 2-sphere 
may play a similar role in the construction of QFTs on $\bS^1 \times \bS^2$.

The present paper is a preliminary step in our program. We determine the central extensions of $L\SDiff(\bS^2)$, and 
 we show that the corresponding Lie algebra cocycles can be obtained as `fuzzy sphere limits' of 
 Kac-Moody cocycles on $L\su(k+1)$ for $k\rightarrow \infty$, provided that the latter are rescaled by $6/k^3$.
 Since the conformal compactification of $\R^{1,2}$ is the quotient $(\bS^1 \times \bS^2)/\Z_2$, 
 we extend these results to the twisted setting.
 There are potential connections to the fuzzy sphere approach to the 3d Ising model that we will explore in the future \cite{Eck26}.
 One might hope that there exists some kind of limit of $L\mathrm{SU}(k+1)$-theories in the limit 
 $k \rightarrow \infty$ that has $L\SDiff(\bS^2)$ as a symmetry group. Indeed an appropriate limit of $\mathrm{SU}(k+1)$ as $k \rightarrow \infty$ is  $\SDiff(\bS^2)$ \cite{Hoppe89}. We plan to return to these issues in the future.
 
Our motivation to focus on the particular group $L\SDiff(\bS^2)$ is largely mathematical. The homotopy type of 
$\Diff(\bS^1 \times \bS^2)$ has two component, one that comes from $L\Diff(\bS^2)$ and one that comes from 
$\Diff(\bS^2)$ \cite{Hat81}. These two groups do not appear to admit a nontrivial central extensions at the Lie algebra level, 
and neither does $\SDiff(\bS^2)$ \cite{JV16}. In contrast, the loop group $L\SDiff(\bS^2)$ offers a Lie algebra 
cocycle that is integrable to the group level, and admits an interpretation as limits of cocycles on $L\su(k+1)$ that 
connect to well-understood conformal field theories. If we identify the Lie algebra of $L\SDiff(\bS^2)$ with the Poisson 
algebra $C^{\infty}(\bS^1 \times \bS^2)_{0}$ of functions that integrate to zero on every sphere $\{t\}\times \bS^2$, 
then this Lie algebra cocycle 
is \emph{local}, in the sense that the Lie algebras of functions with support in disjoint open sets commute in the centrally extended Lie algebra.
This locality condition is important in the AQFT formulation of $(2+1)$-CFTs.  It would be interesting to understand the corresponding formulation of locality in the chiral algebra formalism for $(2+1)$-CFTs, which is dual to AQFT by Fourier transform.

\subsection{Main results and structure of the paper}

Without loss of generality, we may assume that the volume form $\mu$ on $\bS^2$ is $\mathrm{SO}(3)$-invariant. 
We consider the group $\SDiff(\bS^2, \mu)$ of volume-preserving diffeomorphisms of $\bS^2$ as a Fr\'echet--Lie group
with Lie algebra $C^{\infty}_0(\bS^2)$, the ideal in the Poisson algebra $C^{\infty}(\bS^2)$ that consists of functions that integrate to zero.

We denote by $L\SDiff(\bS^2, \mu)$ the loop group of smooth functions $\bS^1 \rightarrow \SDiff(\bS^2)$ with pointwise multiplication, and by 
$LC^{\infty}_0(\bS^2)$ the loop algebra of smooth functions $F \colon \bS^1 \rightarrow C^{\infty}_0(\bS^2)$, equipped with the pointwise Poisson bracket
\[
[F,G](t) := \{F(t), G(t)\}.
\]
We can interpret $F\in LC^{\infty}_{0}(\bS^2)$ as a smooth function $\bS^1 \times \bS^2 \rightarrow \R$, and write $F'(t, \sigma)$ or
$\frac{\partial}{\partial t} F(t, \sigma)$ for the derivative in the $\bS^1$-direction.

Our first main result is that we classify the central extensions of the Lie algebra $LC^{\infty}_0(\bS^2)$, 
and determine the lattice of integrable classes.
\begin{TheoremstarA}\label{TheoremA} $H^2(LC^{\infty}_0(\bS^2), \R)$ is 1-dimensional.
For every continuous 2-cocycle $\psi$ on $LC^{\infty}_0(\bS^2)$, there exists a real number $c_{\infty}\in \R$ such that $\psi$ is cohomologous to 
$c_{\infty} \psi_{\infty}$, where $\psi_{\infty}$ is given by
\begin{equation}\label{eq:cocycleTheoremA}
\psi_{\infty}(F,G) :=  \frac{12\pi}{\vol_{\mu}(\bS^2)^3}\int_{\bS^1}\Big(\int_{\bS^2} F \partial_{t}G \mu\Big)  dt.
\end{equation}
The Lie algebra extension that corresponds to this cocycle integrates to a central $\U(1)$-extension of the simply connected cover of the connected Lie group
$L\SDiff(\bS^2)_0$ if and only if $c_{\infty}$ is an integer.
\end{TheoremstarA}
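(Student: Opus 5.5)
Two things need proof: that $H^2(LC^{\infty}_0(\bS^2),\R)$ is spanned by $[\psi_\infty]$, and the integrality criterion.

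\emph{Step 1: reduce to invariant symmetric bilinear forms.} For the computation of $H^2$ I would use the general description of continuous 2-cocycles on loop algebras $L\fg = C^\infty(\bS^1,\fg)$ for a topologically perfect Fr\'echet Lie algebra $\fg$. Here $\fg = C^{\infty}_0(\bS^2)$ is perfect: every mean-zero function is a finite sum of Poisson brackets, for instance via the spherical harmonics and the $\mathfrak{so}(3)$-action. The first step is to show that $\fg$ is in fact \emph{uniformly} perfect in the sense needed, so that the results of Janssens--Wockel apply. Those results say that every continuous cocycle on $L\fg$ is cohomologous to one of the form
\[
\psi(F,G)=\int_{\bS^1}\kappa(F,\partial_t G)\,dt ,
\]
where $\kappa$ is a continuous invariant symmetric bilinear form on $\fg$ with values in the coinvariants $\fg$-module $V(\fg)$. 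If that machinery is unavailable, I would rederive it by hand: write a cocycle as a distribution on $\bS^1\times\bS^1$, use perfectness of $\fg$ to show it is supported on the diagonal, and conclude that only the first-derivative term survives modulo coboundaries. The $\bS^1$-part of this argument is the Kac--Moody argument in Fourier modes $e^{int}$.

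\emph{Step 2: classify invariant forms on $C^\infty_0(\bS^2)$.} This is the main obstacle: every continuous invariant symmetric bilinear form must be a multiple of $(f,g)\mapsto\int fg\,\mu$. My plan is to view $\kappa$ as a distributional kernel on $\bS^2\times\bS^2$. Invariance under Hamiltonian vector fields $X_h$ gives $\kappa(\{h,f\},g)+\kappa(f,\{h,g\})=0$, so the kernel is annihilated by $X_h\oplus X_h$ for all $h$. Away from the diagonal, $\mathrm{SDiff}$ acts locally transitively on pairs of points, so the kernel is supported on the diagonal. I would then use locality: a diagonal-supported distribution is a finite-order differential expression, and invariance under all Hamiltonian fields, including those with arbitrary jets at a point, forces order zero with constant coefficient. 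Invariance also forces the coefficient to be constant. Because the functions have mean zero, the form is well defined only up to adding terms involving $\int f\cdot\int g$, and these vanish. The same argument, applied to $V(\fg)$-valued forms, should show $V(\fg)=\R$. Hence $H^2$ is one-dimensional, with the constant absorbed into $c_\infty$. Nontriviality follows because $\psi_\infty$ restricted to $e^{int}f,e^{-int}f$ is nonzero, while coboundaries $\lambda([F,G])$ are not of this form (a quick weight argument).

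\emph{Step 3: integrality.} I would use the Neeb integrability criterion. The extension integrates to the simply connected cover exactly when the period group $\mathrm{per}_\psi(\pi_2(L\SDiff(\bS^2)_0))\subseteq\Z$ (normalizing $\U(1)=\R/\Z$). The flux homomorphism is automatically fine here since $\pi_1$ is handled by the simply connected cover. Next I would use $\SDiff(\bS^2)\simeq\mathrm{SO}(3)$ (Smale plus Moser), so $L\SDiff(\bS^2)\simeq L\mathrm{SO}(3)$ and $\pi_2(L\SDiff)\cong\pi_3(\mathrm{SO}(3))\cong\Z$. It then suffices to compute the period of $\psi_\infty$ on the generator coming from $L\mathrm{SU}(2)\to L\mathrm{SO}(3)\to L\SDiff(\bS^2)$. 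Pulling $\psi_\infty$ back along $\mathfrak{su}(2)\cong\mathfrak{so}(3)\hookrightarrow C^\infty_0(\bS^2)$ (moment map functions, the degree-one harmonics) gives $c\int\langle\xi,\eta'\rangle$ for an explicit multiple $c$ of the Killing form. The integrals $\int_{\bS^2}x_i x_j\,\mu=\delta_{ij}\vol/3$ for the appropriately normalized moment maps give this constant. The normalization $12\pi/\vol^3$ should make this exactly the basic level-one Kac--Moody cocycle, whose period is $1$. Therefore the integrality condition is $c_\infty\in\Z$.
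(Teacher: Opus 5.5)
The gap is in Step~1. The classification you attribute to Janssens--Wockel says that every continuous cocycle on $L\fg$ is cohomologous to $\int_{\bS^1}\kappa(F,\partial_tG)\,dt$ with $\kappa$ invariant. That is a result about finite-dimensional semisimple fibres, where Whitehead's lemmas make the low-degree cohomology of the fibre vanish for free. For a general perfect Fr\'echet Lie algebra it is false, because there is a second source of cocycles. For every continuous 2-cocycle $\omega$ on $\fk=C^\infty_0(\bS^2)$ and every distribution $\chi$ on $\bS^1$, the form $\psi(F,G)=\chi\big(t\mapsto\omega(F_t,G_t)\big)$ is a continuous cocycle on $L\fk$, since the bracket is pointwise in $t$.

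Constant loops and evaluation at a point exhibit $H^2(\fk,\R)$ as a direct summand of $H^2(L\fk,\R)$, and $[\psi_\infty]$ restricts to zero on constant loops. So Theorem~A actually \emph{implies} $H^2(C^\infty_0(\bS^2),\R)=0$, and any proof must establish this. Nothing in your plan does. Step~2 only treats symmetric invariant forms. In your ``by hand'' variant, the phrase ``only the first-derivative term survives modulo coboundaries'' is exactly where the vanishing is used silently: the zeroth-order part of a diagonal cocycle is a family of $\fk$-cocycles, and removing it is the infinite-dimensional replacement for Whitehead's second lemma.

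This is not a formality. For a torus fibre, $H^2(C^\infty_0(T^2),\R)\cong H^1_{\mathrm{dR}}(T^2)\neq 0$, and these classes make the second cohomology of the loop algebra infinite-dimensional (compare the `vertical' cocycles in Section~\ref{sec:BeyondSpheres}). The paper supplies the missing input by quoting Janssens--Vizman, $H^2(C^\infty_0(\Sigma),\R)\cong H^1_{\mathrm{dR}}(\Sigma)$, which vanishes for $\bS^2$. It feeds this into the Neeb--Wagemann decomposition $\psi=\psi_1+\psi_2$ for current algebras $A\otimes\fk$, and then argues separately that the resulting 1-cochain is continuous.

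A second, smaller omission is the closedness of the current. The antisymmetrised form $\int_{\bS^1}\chi(t)\big(\kappa(F,G')-\kappa(F',G)\big)dt$ has coboundary equal, up to sign, to $\int_{\bS^1}\chi'(t)\,\kappa(\{F_t,G_t\},H_t)\,dt$. If the Koszul class $\Gamma\kappa\in H^3(\fk,\R)$ were zero, this could be corrected to a cocycle for non-constant $\chi$ by adding $\int_{\bS^1}\chi'(t)\,\eta(F_t,G_t)\,dt$, with $\eta$ a primitive of $\Gamma\kappa$. You therefore need injectivity of the Koszul map. The paper gets it by restricting to $\mathfrak{so}(3)\subset\fk$, where $\Gamma$ of the Killing form generates $H^3(\mathfrak{so}(3),\R)$.

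Your Step~2 is a genuinely different route from the paper's. The paper decomposes $\C[\bS^2]=\bigoplus_l V_l$ under $\mathfrak{so}(3)$, applies Schur's Lemma, and inducts using $V_{l+1}\subseteq\{V_2,V_l\}$; your approach is closer to what Remark~\ref{RemarkGeneralSigma} suggests for general $\Sigma$. Your off-diagonal argument is fine: there the kernel is a constant multiple of $\mu\otimes\mu$. But the claim that invariance forces a diagonal distribution with distributional coefficients to have order zero is only asserted, not argued. Step~3 is essentially the paper's argument.
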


The proof is carried out in Section~\ref{sec:loopSec2}, combining results from \cite{NW08} and \cite{JV16} with an essential new ingredient: a classification of invariant bilinear forms on the Poisson Lie algebra $C^{\infty}(\bS^2)_0$. In Section~\ref{sec:3twisted} we prove a similar result 
for the twisted loop group $L_{\Phi}\SDiff(\bS^2, \mu)$, 
the group of smooth functions $g \colon \R \rightarrow \SDiff(\bS^2, \mu)$ that satisfy 
$g_{t+1} = \Phi \circ g_{t}\circ \Phi^{-1}$ for an orientation-reversing `twist' $\Phi \colon \bS^2 \rightarrow \bS^2$. 

In the second part of the paper (sections \ref{Section:Recap} and \ref{Section:limits}), 
we show that the cocycles $\psi_{\infty}$ on $LC^{\infty}(\bS^2)_0$ are `fuzzy sphere limits' of cocycles $\psi_k$
on $L\su(k+1)$ under rescaling by $6/k^3$. 
For this we make extensive use of the K\"ahler structure on $\bS^2$, so in this part of the paper we identify $\bS^2$
with $\CP^1$ and take $\mu$ to be the curvature $\omega$ of the dual of the canonical line bundle.

The loop algebra $L\su(2)$ is a Lie subalgeba of $LC^{\infty}_0(\bS^2)$ in a natural fashion, 
since the inclusion $\iota \colon \su(2) \hookrightarrow C^{\infty}_{0}(\CP^1)$ loops to an inclusion 
$L\iota \colon L\su(2)\hookrightarrow LC^{\infty}_0(\bS^2)$. 
It is also a Lie subalgebra of $L\su(k+1)$.
The spin $s=k/2$ representation $(\pi_k, V_k)$ of $\su(2)$ can be considered as Lie algebra homomorphism 
$\pi_k \colon \su(2) \rightarrow \su(k+1)$, which loops to a Lie algebra homomorphism 
$L\pi_k \colon L\su(2) \rightarrow L\su(k+1)$.

Although $L\su(k+1)$ is not a Lie subalgebra of $LC^{\infty}_0(\bS^2)$, it relates to 
$LC^{\infty}_0(\bS^2)$ by the \emph{geometric quantization} map
\[Q_k \colon C^{\infty}(\bS^2) \rightarrow \mathrm{End}(V_k),\]
which we recall in Section~\ref{Section:Recap}.
More precisely, the rescaled and centered quantization map \[d\overline{\Phi}_k(f) := k\Big(Q_k(f)- {\textstyle \frac{1}{k+1}}\tr(Q_k(f)\one\Big)\]
fits in a commutative diagram
\begin{equation}\label{CDLieAlgebrawithouttraceIntro}
\begin{tikzcd}[column sep=tiny]
LC_0^{\infty}(\CP^1) \arrow[rr, dashed, "L\overline{d\Phi}_k"] &  & L\su(V_k)\\
& L\su(2),\arrow[ul, hook, "L\iota"]\arrow[ur, hook', "L\pi_k"']
\end{tikzcd}
\end{equation}
where  $L\iota$ and $L\pi_k$ are Lie algebra homomorphisms, but the horizontal map 
$L\overline{d\Phi}_k$ is not.
Nonetheless, we prove in Section~\ref{Section:limits} that the cocycles 
\begin{eqnarray}
\psi_k(X,Y) &=& \frac{1}{2\pi}\int_{\bS^1}\kappa_{\su(k+1)}(X, \partial_tY) dt\\
\psi_{\infty}(F,G) &=& \frac{6}{(2\pi)^2}\int_{\bS^1}\Big(\int_{\bS^2}F\partial_tG\omega\Big)dt
\end{eqnarray}
on 
$L\su(k+1)$ and $LC_0^{\infty}(\bS^2)$ can be arranged in a diagram 
\begin{equation}\label{CDcocycleswithouttrace}
\begin{tikzcd}[column sep=tiny]
c_{\infty} \psi_{\infty} \arrow[dr, two heads, "L\iota^*"']&  & \arrow[dl, two heads, "L\pi^*_k"] \arrow[ll, dashed, "L\overline{d\Phi}^*_k"'] c_k \psi_{k}\\
& c_1 \psi_{1}
\end{tikzcd}
\end{equation}
with the following compatibility conditions.
\begin{TheoremstarB}\label{TheoremstarB}
The Lie algebra cocycles $\psi_1$, $\psi_k$ and $\psi_{\infty}$ integrate to the group level if and only if the 
corresponding constants $c_1$, $c_k$ and $c_{\infty}$ are integers. 
\begin{itemize}
\item[a)]
We have $(L\iota)^* c_{\infty}\psi_{\infty} = c_1 \psi_1$ if and only if 
\[c_{1} = - c_{\infty}.\]
\item[b)]
We have $(L\pi_k)^* c_k \psi_k = c_1 \psi_1$ if and only if 
\[\textstyle c_{1} = \frac{k(k+1)(k+2)}{6} c_k.\]
\item[c)] 
Finally, if $c_k = - \frac{6}{k(k+1)(k+2)}c_\infty$ we have the (pointwise) limit
\[\lim_{k \rightarrow \infty} (Ld\overline{\Phi}_k)^* c_k \psi_k(f,g) = c_{\infty} \psi_{\infty}(f,g).\]
\end{itemize}
\end{TheoremstarB}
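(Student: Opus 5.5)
The proof will treat the integrality statement first and then the three compatibility claims (a)–(c) one at a time.

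\emph{Integrality.} For $\psi_\infty$ the integrality statement is Theorem~A with $\mu=\omega$. Since $\vol_\omega(\bS^2)=2\pi$, the normalization $12\pi/(2\pi)^3 = 6/(2\pi)^2$ agrees with the displayed $\psi_\infty$. For $\psi_k$ and $\psi_1$ we use the standard Kac--Moody fact: the cocycle $\frac{1}{2\pi}\int \kappa(X,\partial_t Y)\,dt$ integrates to $L\SU(n)$ exactly for integer multiples. This requires checking that $\kappa_{\su(k+1)}$ is the basic inner product, i.e.\ the form with $\kappa(h_\alpha,h_\alpha)=2$ on coroots. Concretely, $\kappa_{\su(n)}(X,Y)=-\tr(XY)$ up to the paper's convention, and we fix the sign so that it is positive definite.

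\emph{Step (a).} This is a direct computation. Under $\iota$, a basis element $e_j\in\su(2)$ maps to $\pm$ a linear coordinate function $x_j$ on the sphere of area $2\pi$. Since $\omega$ is normalized, this is the sphere of radius $1/\sqrt{2}$, or of radius $1$ with the Poisson bracket rescaled; we fix this from $\{x_1,x_2\}=x_3$. It therefore suffices to compare $\int_{\bS^2} x_i x_j\,\omega = \delta_{ij}\cdot \frac{1}{3}r^2\cdot 2\pi$ with $\kappa_{\su(2)}(e_i,e_j)$. The factor $6/(2\pi)^2$ is designed so that the ratio is $-1$. The sign comes from the convention $\kappa = -\tr$ against the positive $L^2$ pairing, and we should double-check it.

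\emph{Step (b).} This is the Dynkin index computation. For the spin-$s$ representation, $\tr(\pi_k(X)\pi_k(Y)) = \frac{2}{3}s(s+1)(2s+1)\cdot \tfrac12\tr(XY)$, which follows from the Casimir $\sum \pi(e_j)^2=-s(s+1)$ and $\dim V=2s+1$. Setting $s=k/2$ gives the index $k(k+1)(k+2)/6$. The claimed relation follows because $L\pi_k$ commutes with $\partial_t$.

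\emph{Step (c).} This is the main obstacle. We write
\[(L\overline{d\Phi}_k)^*\psi_k(f,g)=\tfrac{1}{2\pi}\int \kappa(d\overline\Phi_k f,\partial_t d\overline\Phi_k g)\,dt = -\tfrac{k^2}{2\pi}\int \tr\big(Q_k(f)Q_k(\partial_tg)\big) - \tfrac{1}{k+1}\tr Q_k(f)\tr Q_k(\partial_t g)\,dt.\]
We then invoke the Berezin--Toeplitz asymptotics recalled in Section~\ref{Section:Recap}:
\[\tr Q_k(f)=\tfrac{k+1}{2\pi}\int f\omega+O(1),\qquad \tr(Q_k(f)Q_k(g))=\tfrac{k+1}{2\pi}\int fg\,\omega+O(1).\]
These hold uniformly for $t\in\bS^1$ by compactness and smooth dependence, so we may pass the limit under $\int dt$. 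For $f,g$ of fiberwise mean zero the trace term is $O(1)$, hence negligible after the rescaling. Multiplying by $c_k=-6c_\infty/(k(k+1)(k+2))$, the leading term is
\[\frac{6c_\infty k^2(k+1)}{k(k+1)(k+2)}\cdot\frac{1}{(2\pi)^2}\int\!\!\int f\partial_t g\,\omega\,dt \to c_\infty\psi_\infty(f,g).\]
The delicate point is to secure the correct normalization of $Q_k$, meaning $\frac{k+1}{2\pi}$ versus $\frac{k}{2\pi}$ and the Toeplitz versus Kostant--Souriau convention. Only the leading order matters, however, so either convention yields the limit. Consistency with (a) and (b), via $Q_k\circ\iota$ restricted to linear functions equal to $\pi_k/k$ up to $O(1/k)$, provides a check on the constants.
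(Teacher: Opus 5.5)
Your route is the paper's. You get integrality from Pressley--Segal and Theorem~\ref{Thm:integralityCocycles}, (a) by comparing $\iota^*\kappa_\infty$ with $\kappa_{\su(2)}$, and (b) by the Dynkin index; your Casimir argument is equivalent to the paper's sum of squares over the spectrum of $\pi_k(h)$. For (c) you use the Berezin--Toeplitz trace asymptotics, which you carry out essentially as the paper does.

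The gap is in (a), whose whole content is the constant $-1$. You assert it (``the factor $6/(2\pi)^2$ is designed so that the ratio is $-1$'') instead of computing it, and both ingredients you do supply point the wrong way.

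\emph{The sign.} The paper's $\kappa_{\su(k+1)}(X,Y)=\tr(XY)$, with $\kappa(h,h)=-2$ on coroots, is \emph{negative} definite. That is exactly why the positive definite pairing $\iota^*\kappa_\infty$ is a \emph{negative} multiple of $\kappa_{\su(2)}$, so your explanation via $\kappa=-\tr$ is backwards. If you actually adopt the positive definite convention, as your integrality paragraph proposes, then (a) becomes $c_1=+c_\infty$ and the limit in (c) becomes $-c_\infty\psi_\infty$. Your step (c) tacitly uses $\kappa=\tr$, with your $Q_k$ standing for the Hermitian $T_k(f+\frac{1}{2k}\Delta f)$, so the proposal is internally inconsistent. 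Keep $\kappa=\tr$ throughout. For integrality nothing needs fixing: $\psi_k$ differs from the Pressley--Segal basic cocycle at most by a sign, and $c\in\Z\iff -c\in\Z$.

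\emph{The normalization.} The area-$2\pi$ sphere $(\CP^1,\omega)$ is the coadjoint orbit of radius $1/2$, not $1/\sqrt2$, since the radius-$r$ orbit has KKS area $4\pi r$. This is what you get if you really fix it from $\{x_1,x_2\}=x_3$, as you say. Equivalently, on the Euclidean unit sphere one has $\iota(e_j)=\pm x_j/2$ with respect to $\omega$. Then $\int_{\bS^2}\iota(e_j)^2\,\omega=\pi/6$, so $\kappa_\infty(\iota e_j,\iota e_j)=\frac{6}{(2\pi)^2}\cdot\frac{\pi}{6}=\frac{1}{4\pi}$. On the other side, $\kappa_{\su(2)}(e_j,e_j)=-\frac12$ (e.g.\ $e_3=h/2$). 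Hence $\iota^*\kappa_\infty=-\frac{1}{2\pi}\kappa_{\su(2)}$ and $(L\iota)^*\psi_\infty=-\psi_1$, which is (a). With the coordinate functions of the radius-$1/\sqrt2$ sphere you would instead get $-2\psi_1$. The paper avoids the issue by using that $\kappa_\infty$ does not depend on the scaling of $\mu$: it takes the Euclidean unit sphere, where $\iota(Z)=z$, and sums using $x^2+y^2+z^2=1$.

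\emph{A small slip in (b).} The factor $\frac12$ in your displayed formula is spurious. The identity is $\tr(\pi_k(X)\pi_k(Y))=\frac23 s(s+1)(2s+1)\tr(XY)$ (check at $s=\frac12$), and it gives your correct index $k(k+1)(k+2)/6$.
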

Essentially, then, the centrally extended Lie algebra $\widehat{LC^{\infty}_0(\CP^1)}$ with central charge $c_{\infty} \in \Z$
can be approximated by the affine Kac-Moody algebra $\widehat{L\su}(k+1)$ with central charge 
\begin{equation}
c_k = \frac{-c_{\infty}}{ {\textstyle \frac{1}{6}}k(k+1)(k+2)}
\end{equation}
in the limit of $k\rightarrow \infty$. 
In Section~\ref{sec:twistlim} we show that this limiting procedure is equivariant 
with respect to orientation-reserving diffeomorphisms with $\Phi^*\omega = -\omega$, and as a result we obtain 
these `fuzzy sphere limits' also in the twisted case. 

Both for the classification result and for the `fuzzy sphere limits', we believe it is possible to replace 
$\bS^2$ by more general symplectic manifolds $(\Sigma, \omega)$. In Section~\ref{sec:BeyondSpheres}
we briefly comment on the new features one might expect in this setting.

\subsection{Towards QFTs in 2+1 dimensions}

Toward the program of constructing mathematically rigorous, nontrivial QFTs in dimension $2+1$ in the setting of AQFT, this suggests the following approach.

A sequence of highest wight representations $(V_{\lambda_{k}}, \rho_{\lambda_k}, \Omega_k)$
for the affine Kac-Moody algebras $\widehat{L\su}(k+1)$ gives rise to a sequence of states $\phi_k$ on their respective universal enveloping algebras. 
Using geometric quantization, it should be possible to convert this to a sequence of states on
 the universal enveloping algebra of $\widehat{LC^{\infty}_{0}(\bS^2)}$. 
 Our results on `fuzzy sphere limits' raise the question if it is possible, using 
 a judicious choice of representations and appropriate scaling by factors of $k^{-3/2}$, to obtain convergence 
 to a state $\phi_{\infty}$ on the universal enveloping algebra of $\widehat{LC^{\infty}_0(\bS^2)}$.
 If so, the GNS construction would allow one to reconstruct a limiting Hilbert space $\cH_{\infty}$, and a limiting projective Lie algebra representation of $LC_0^{\infty}(\bS^2)$. 
 
If the resulting representation integrates to a representation $\rho_{\infty}$ at the group level, one can assign to an open subset $\mathcal{O} \subseteq \bS^1 \times \bS^2$ the algebra $\mathcal{A}(\mathcal{O})$
generated by $\rho_{\infty}(g)$ with group elements $g$ that are supported in $\mathcal{O}$. Since the Lie algebra cocycle is \emph{local}, disjoint open sets $\mathcal{O} \cap \mathcal{O}' = \emptyset$
give rise to commuting operator algebras.
One can investigate to which extent these nets of operator algebras satisfy the other Haag-Kastler axioms. Equivariance with respect to the isometry group $\mathrm{SO}(2)\times \mathrm{SO}(3)$ would likely follow from the representation theory, but whether or not the net of operators will be equivariant for $\mathrm{SO}(3,2)$ would probably be harder to answer. 
\section{Loop groups with values in $\SDiff(\bS^2,\mu)$}\label{sec:loopSec2}

The group $K =\SDiff(\bS^2,\mu)$ of diffeomorphisms that preserve the $\mathrm{SO}(3)$-invariant volume form  $\mu$ on $\bS^2$
is a Fr\'echet--Lie group. Its Lie algebra $\fk = \X(\bS^2,\mu)$ is the Lie algebra of divergence-free vector fields, which we identify 
with the Poisson algebra $C_0^{\infty}(\bS^2)$ of smooth functions on $\bS^2$ that integrate to zero against $\mu$. The Poisson bracket 
comes from $\mu$, considered as a symplectic form on $\bS^2$.

The \emph{loop group} $LK = L\SDiff(\bS^2, \mu)$ 
is the group of maps $\phi \colon \bS^1 \rightarrow \SDiff(\bS^2,\mu)$ 
that are smooth with respect to the Fr\'echet--Lie group structure on $\SDiff(\bS^2,\mu)$. Equivalently, $L\SDiff(\bS^2, \mu)$ is
is the group of all vertical automorphisms that respect the orientation and volume form on every fibre of the trivial bundle $\bS^1 \times \bS^2 \rightarrow \bS^1$.
 
 It  is a Fr\'echet--Lie group with Lie algebra 
$L\fk = C^{\infty}(\bS^1,\fk)$. Since $\fk = C^{\infty}_{0}(\bS^2)$, 
we can identify 
\[
L\fk = C_0^{\infty}(\bS^1\times \bS^2) \subseteq C^{\infty}(\bS^1 \times \bS^2)
\] 
with the subalgebra of the Poisson algebra $C^{\infty}(\bS^1 \times \bS^2)$ consisting of functions that integrate to zero 
on every fibre $\bS^2_t := \{t\} \times \bS^2 \subseteq \bS^1 \times \bS^2$, equipped 
with the (degenerate)
bracket depending only on the $\bS^2$-variables:
\[
	\{F,G\}(t,\vec{x}) := \{F_t, G_t\}(\vec{x}).
\]
The bracket on the left hand side is a Poisson bracket 
on the ring $C^{\infty}(\bS^1\times\bS^2)$ (with the spheres as symplectic leaves), the
 bracket on the right hand side is the 
ordinary Poisson bracket on $\bS^2$ (from the symplectic form $\mu$) between the functions 
$F_{t} \colon \bS^2 \rightarrow \R$ and $G_t \colon \bS^2 \rightarrow \R$ at fixed $t\in \bS^1$.
 
 \subsection{Central extensions and 2-Cocycles}

For a locally convex Lie algebra $\fg$, we denote by
$H^{\bullet}(\fg, \R)$ the \emph{continuous} Lie algebra cohomology with trivial coefficients, that is, 
the cohomology of the complex $C^{\bullet}(\fg, \R)$ of alternating multilinear maps 
$\fg \times \ldots \times \fg \rightarrow \R$
that are jointly continuous with respect to 
the Fr\'echet topology.  The differential is the ordinary Chevalley-Eilenberg differential, 
of which the continuous cochains are a subcomplex because the Lie bracket $\fg \times \fg \rightarrow \fg$ is continuous.
The locally convex central extensions of $\fg$ by $\R$ are classified by $H^2(\fg, \R)$.

On $\fk = C^{\infty}_{0}(\bS^2)$ we define the inner product 
\[
	\kappa(f,g) := \int_{\bS^2}fg \,\mu.
\]
The inner product is invariant under $\SDiff(\bS^2)$ because $\mu$ is invariant, 
\[\textstyle
	\kappa(\phi^*f, \phi^*g) = \int_{\bS^2}(\phi^*f \phi^*g) \mu = \int_{\bS^2}\phi^*(fg\mu) = \kappa(f,g)
\]
for all $\phi \in \SDiff(\bS^2)$. In particular it is invariant under the adjoint Lie algebra action, 
\[\kappa(\{h, f\}, g) + \kappa(f, \{h,g\}) = 0.\]

We prove the following result, which shows that the inner product $\kappa$ takes the role of the Killing form 
in the second second continuous cohomology $H^2(L\fk, \R)$ for $\fk = C^{\infty}_0(\bS^2)$.

\begin{Theorem}\label{ThmCocycleLoop} $H^2(L\fk, \R)$ is 1-dimensional.
Every continuous 2-cocycle $\psi$ on $L\fk$ is cohomologous to $c\psi_{\infty}$,
where $c$ is a real number and $\psi_{\infty}$ is the cocycle
\begin{eqnarray}\label{eq:cocycleTheorem1}
\psi_{\infty}(F,G) &:=&  \frac{12\pi}{\vol_{\mu}(\bS^2)^3}\int_{\bS^1}\kappa \left( F, {\textstyle \frac{\partial}{\partial t} }G\right)  dt\\
& = & \frac{12\pi}{\vol_{\mu}(\bS^2)^3}\iiint_{\bS^1 \times \bS^2} F(t, \sigma) {\textstyle \frac{\partial}{\partial t} }G(t, \sigma) \mu(d\sigma)dt. \nonumber
\end{eqnarray}
\end{Theorem}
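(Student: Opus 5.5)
The plan is to use the general description of $H^2$ of loop algebras $L\fk = C^\infty(\bS^1,\fk)$ of a Fréchet Lie algebra $\fk$ due to Neeb--Wagemann \cite{NW08}. For a topologically perfect $\fk$, every continuous $2$-cocycle on $L\fk$ is cohomologous to one of the form
\[
\psi(F,G)=\int_{\bS^1}\beta(F,\partial_t G)\,dt + \int_{\bS^1}\eta(F,G)\,dt \quad(\text{or a sum over such terms}),
\]
where $\beta$ is a continuous invariant symmetric bilinear form on $\fk$ and $\eta$ comes from $H^2(\fk,\R)$ tensored with distributions on $\bS^1$. So the first step is to check the hypotheses of \cite{NW08} for $\fk=C^\infty_0(\bS^2)$. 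Perfectness holds because every function integrating to zero is a finite sum of Poisson brackets, which one verifies via Hodge theory on $\bS^2$. Second, I will invoke \cite{JV16}, which gives $H^2(C^\infty_0(\bS^2),\R)=0$ (no nontrivial central extensions of $\SDiff(\bS^2)$ at the Lie algebra level, since $H^1(\bS^2)=0$). Together these two inputs kill all contributions except the one parametrised by invariant bilinear forms. Hence $H^2(L\fk,\R)$ is isomorphic to the space of continuous invariant symmetric bilinear forms on $\fk$, via $\beta\mapsto \int\beta(F,\partial_tG)\,dt$.

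The remaining step is the new ingredient: every continuous symmetric bilinear form $\beta$ on $C^\infty_0(\bS^2)$ with $\beta(\{h,f\},g)+\beta(f,\{h,g\})=0$ is a multiple of $\kappa$. I expect this to be the main obstacle, and I would attack it as follows. By Schwartz kernel theorem, $\beta$ is given by a distribution $B$ on $\bS^2\times\bS^2$, defined modulo terms of the form $1\otimes u + u\otimes 1$ because of the zero-mean constraint. Invariance says $B$ is annihilated by all Hamiltonian vector fields acting diagonally, $X_h\otimes 1+1\otimes X_h$. Off the diagonal, Hamiltonian flows act transitively on pairs of distinct points with arbitrary independent local behaviour. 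Taking $h$ supported near one point kills the action at the other, so off the diagonal $B$ is invariant under all local symplectic vector fields in each variable separately, and is therefore locally constant. Such a constant is absorbed by the ambiguity (constants pair to zero on $C_0^\infty$). Hence $B$ may be assumed supported on the diagonal, where it is a finite-order sum $\sum a_\alpha(x)\partial^\alpha\delta(x-y)$. Invariance under all symplectic vector fields, in particular linear symplectic fields in Darboux charts, which act by $\mathfrak{sp}(2)$ on derivatives and have no invariant nonzero jets of positive order compatible with symmetry, forces $B=a(x)\delta(x-y)$ with $a$ invariant, hence constant. The cleanest route may be the alternative: $\beta(\{h,f\},g)=-\beta(f,\{h,g\})$ gives $\beta(f,g)=\kappa(Tf,g)$ for an operator $T$ commuting with all $\mathrm{ad}_h$, and locality arguments then show $T$ is scalar.

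Finally, I will verify directly that $\psi_\infty$ is a cocycle. Skew-symmetry follows by integration by parts in $t$, and the cocycle identity follows from the invariance of $\kappa$ together with the Leibniz rule $\partial_t\{F,G\}=\{\partial_tF,G\}+\{F,\partial_tG\}$. I will also check that it is not a coboundary: a coboundary $\lambda([F,G])$ vanishes on the abelian subalgebra of functions $F=a(t)f$ with fixed $f$, whereas $\psi_\infty(a f,bf)=\kappa(f,f)\int ab'\,dt\neq0$. The normalisation constant in front is irrelevant for this statement.
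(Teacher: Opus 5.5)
Your plan uses three of the paper's inputs: perfectness of $\fk$, $H^2(\fk,\R)=0$ from \cite{JV16}, and uniqueness of $\kappa$ among invariant forms. It leaves out a fourth one that the argument cannot do without: injectivity of the Koszul map $\Gamma\colon S\mapsto S(\{\cdot,\cdot\},\cdot)$ into $H^3(\fk,\R)$. The description of \cite{NW08} you quote drops the coupling between the two sources of cocycles. Write $\psi(a\otimes x,b\otimes y)=\lambda(a,b)\kappa(x,y)+\eta(ab)(x,y)$. With the standard Chevalley--Eilenberg differential, the cocycle identity becomes
\[
\big(\lambda(ab,c)+\lambda(bc,a)+\lambda(ca,b)\big)\,\Gamma\kappa=d_{\fk}\big(\eta(abc)\big).
\]
You need $[\Gamma\kappa]\neq 0$ in $H^3(\fk,\R)$ to conclude two things. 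First, that each $\eta(a)$ is a $2$-cocycle on $\fk$, which \cite{JV16} then makes a coboundary. Second, that $\lambda(a,b)=\phi(a\,db-b\,da)$ for a \emph{closed} current $\phi$.

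Suppose instead that $\Gamma\kappa=d\beta$ for a continuous $2$-cochain $\beta$. Then for every current $\phi$ on $\bS^1$, closed or not,
\[
\psi(a\otimes x,b\otimes y)=\phi(a\,db-b\,da)\,\kappa(x,y)-\phi\big(d(ab)\big)\,\beta(x,y)
\]
defines a continuous cocycle on $L\fk$. Take $\phi(\alpha)=\int_{\bS^1}g\,\alpha$ with $g$ nonconstant. On the abelian subalgebra $C^\infty(\bS^1)f$ this cocycle restricts to $\kappa(f,f)\int g(ab'-a'b)\,dt$. Coboundaries vanish there, and this restriction is not proportional to that of $\psi_\infty$. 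So perfectness, $H^2(\fk,\R)=0$ and uniqueness of $\kappa$ together do not yet give $\dim H^2(L\fk,\R)=1$. Your step ``these two inputs kill all contributions except the one parametrised by invariant bilinear forms'' is therefore unjustified as stated.

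The gap is easy to close, and the paper does so in Proposition~\ref{Prop:Koszul}. Restrict along $\iota\colon\mathfrak{so}(3)\hookrightarrow\fk$, the degree-one polynomials. Since $\kappa$ is positive definite, $\iota^*\kappa$ is a nonzero multiple of the Killing form. The Koszul map is natural, and the Koszul class of the Killing form generates $H^3(\mathfrak{so}(3),\R)\neq 0$. Hence $[\Gamma\kappa]\neq 0$, and with this added your outline goes through.

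Your Schwartz-kernel/locality route to uniqueness of $\kappa$ is a legitimate alternative to the paper's weight argument on $\C[\bS^2]=\bigoplus_l V_l$. It is essentially what Remark~\ref{RemarkGeneralSigma} suggests for general $\Sigma$. One caveat: the coefficients of a diagonally supported distribution are a priori distributions, not functions. Invariance under the translations $X_h$, with $h$ linear in Darboux coordinates, makes them constant, and only then does your $\mathfrak{sl}_2$ argument apply.

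Finally, \cite{NW08} is algebraic, so you should say how continuity is handled. The paper restricts to $\C[z,z^{-1}]\otimes\fk$. It then shows that the primitive $1$-cochain extends continuously to $L\fk$.
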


By \cite{NW08}, the continuous 2-cocycles on $L\fk$ for an arbitrary Lie algebra $\fk$ can be understood in terms of 
the continuous second cohomology $H^2(\fk, \R)$, the space
$\mathrm{Sym}^2(\fk, \R)^{\fk}$ of continuous invariant bilinear forms on $\fk$, and the 
Koszul map $\Gamma \colon \mathrm{Sym}^2(\fk, \R)^{\fk} \rightarrow H^3(\fk, \R)$
that sends $S$ to $(\Gamma S)(f,g,h) = \kappa(\{f, g\}, h\})$. 
We prove Theorem~\ref{ThmCocycleLoop} by describing each of these ingredients, and then combining the information using 
the results in \cite{NW08}.

The continuous second Lie algebra cohomology of the the Poisson algebra $C^{\infty}_0(\Sigma)$
for a general symplectic manifold $(\Sigma, \omega)$ has been dealt with in \cite{JV16}.

\begin{Theorem}[\cite{JV16}]\label{Thm:H2Poisson}
If $\Sigma$ is a compact symplectic manifold, then \begin{equation}H^2(C^{\infty}_0(\Sigma), \R) = H^1_{\mathrm{dR}}(\Sigma).\end{equation}
\end{Theorem}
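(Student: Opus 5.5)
The plan is to construct an explicit isomorphism $H^1_{\mathrm{dR}}(\Sigma) \to H^2(C^\infty_0(\Sigma),\R)$ and then prove that it is surjective. For the map itself, take a closed $1$-form $\alpha$ and set
\[
\psi_\alpha(f,g) := \int_\Sigma f\, \alpha(X_g)\, \omega^n,
\]
where $X_g$ is the Hamiltonian vector field of $g$. Equivalently, $\psi_\alpha(f,g)$ is $\int_\Sigma f\, dg\wedge\alpha\wedge\omega^{n-1}$ up to a constant. Since $f\,dg\wedge\alpha$ has the same integral as $-g\,df\wedge\alpha$ (integrate by parts and use $d\alpha=0$), $\psi_\alpha$ is antisymmetric. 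The cocycle identity comes from the Jacobi identity together with closedness of $\alpha$, by a direct computation. If $\alpha=dh$ is exact, then $\psi_\alpha(f,g)=\int f\{h,g\}\omega^n = \pm\int h\{f,g\}\omega^n$ by invariance of the $L^2$ pairing. So $\psi_{dh}=\delta\lambda$ with $\lambda(u)=\int hu\,\omega^n$, and the map descends to de Rham cohomology.

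For injectivity, suppose $\psi_\alpha=\delta\lambda$. Since $\{C^\infty_0,C^\infty_0\}=C^\infty_0$ and $\lambda$ is a continuous functional, $\lambda$ is a distribution on $C^\infty_0(\Sigma)$. Testing $\psi_\alpha(f,g)=\lambda(\{f,g\})$ with $f,g$ localized in a Darboux chart forces $\lambda$ to be given by a smooth function $h$ and $\alpha-dh$ to vanish. Here we use that $\alpha$ and $dh$ are recovered from the bilinear form $\int f\,\beta(X_g)\omega^n$.

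Surjectivity is the main step. Given a continuous cocycle $\psi$, view it by the Schwartz kernel theorem as a distribution on $\Sigma\times\Sigma$. The cocycle condition $\psi(\{f,g\},h)+\text{cyclic}=0$ means that $\psi$ is a derivation-like object with respect to the Poisson bracket. The first sub-step is to show that $\psi$ is supported on the diagonal. Take $f,g$ with disjoint supports. Using that every function is locally a sum of brackets $\{u,v\}$ with $u$ supported near $\mathrm{supp} f$, the cocycle identity gives $\psi(f,g)=\psi(\{u,v\},g)=-\psi(\{v,g\},u)-\psi(\{g,u\},v)=0$ after shrinking supports. Once the support is on the diagonal, $\psi$ is a finite-order bidifferential operator locally, i.e.\ a local (Peetre-type) cocycle. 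The second sub-step is to classify local cocycles: in Darboux coordinates, with symplectic vector fields acting, an order-reduction argument shows that modulo coboundaries $\delta\lambda$ with $\lambda$ a local functional, only first-order terms of the form $\int f\,\beta(X_g)\omega^n$ survive. The cocycle condition then forces $d\beta=0$.

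I expect the locality and order-reduction step to be the main obstacle. It needs a careful filtration by differential order and the perfectness of the Lie algebra of locally supported Hamiltonian functions, which is where the compactness and symplectic structure of $\Sigma$ are used.
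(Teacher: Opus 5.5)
The paper gives no argument of its own here; the statement is imported from \cite{JV16}. The first half of your plan is sound. With $i_S\omega=\alpha$ you have $\psi_\alpha(f,g)=\int_\Sigma f\,L_Sg\,\omega^n$, which is antisymmetric and a cocycle because $S$ preserves $\omega^n$ and is a derivation of the bracket. Also $\psi_{dh}$ is the coboundary of $u\mapsto\pm\int_\Sigma hu\,\omega^n$. Finally, $\psi_\alpha=\delta\lambda$ forces the distribution $X_g\lambda$ to be smooth for every $g$, so $\lambda$ is smooth and $\alpha=dh$.

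The genuine gap is surjectivity. It is the whole content of the theorem, and you describe it rather than prove it. Moreover, two of the steps you do describe fail as stated.

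First, it is not true that every function is locally a sum of brackets. Brackets of functions supported in a ball $B$ all satisfy $\int_B f\,\omega^n=0$, and by \cite{ALDM74} they span exactly these. Your argument therefore only shows that, off the diagonal, the kernel of $\psi$ (extended by zero on constants) is locally a constant multiple of $\omega^n\otimes\omega^n$. To kill that constant you need antisymmetry together with connectedness of $(\Sigma\times\Sigma)\setminus\Delta$. This is where connectedness of $\Sigma$ enters, and it is genuinely needed: for $m\geq 3$ components, $\pi_1(f)\pi_2(g)-\pi_2(f)\pi_1(g)$ with $\pi_i(f)=\int_{\Sigma_i}f\,\omega^n$ is a nontrivial class.

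Second, a distribution on $\Sigma\times\Sigma$ supported on $\Delta$ is locally $\sum_{|\beta|\leq N}T_\beta(f\,\partial^\beta g)$ with \emph{distributional} coefficients $T_\beta$ (compare the distributions $\phi_0,\phi_1^\mu$ in the proof of Lemma~\ref{Thm:OnManifolds}). Peetre's theorem does not make these coefficients smooth. So what survives at the end is a closed $1$-current $\beta$. To land in $H^1_{\mathrm{dR}}(\Sigma)$ you then need de Rham's theorem for currents, $\beta=\alpha+dT$, with $\psi_{dT}$ the coboundary of $T$.

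More seriously, the order reduction and the passage from local to global are left as a black box, with no mechanism. Here is one that works.
\begin{enumerate}
\item On a Darboux chart pick $h_i$ with $X_{h_i}=\partial_i$ near the supports considered. The cocycle identity gives $\psi(\partial_if,g)+\psi(f,\partial_ig)=-\phi_i(\{f,g\})$, where $\phi_i:=\psi(\,\cdot\,,h_i)$ is a distribution.
\item The cocycle identity for $(h_i,h_j,f)$, together with diagonality and local constancy of $\{h_i,h_j\}$, gives $\partial_i\phi_j=\partial_j\phi_i$.
\item Write $\phi_i=\partial_i\Lambda$ on a ball. Then $\psi-\delta\Lambda$ is translation invariant and supported on the diagonal, hence of the form $\int f\,Dg$ with $D$ of constant coefficients.
\item The cocycle identity says precisely that $D$ is a derivation of the Poisson bracket. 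This forces $D=a^i\partial_i$, a Hamiltonian vector field on the ball, so every cocycle is \emph{locally} a coboundary $\delta\Lambda_U$.
\item On overlaps the $\Lambda_U$ differ by locally constant multiples of $\omega^n$. These constants form a \v{C}ech $1$-cocycle, and its class in $\check{H}^1(\Sigma,\R)\cong H^1_{\mathrm{dR}}(\Sigma)$ is exactly the obstruction to gluing the $\Lambda_U$.
\item Subtract the $\psi_\alpha$ with the same class and glue, by a sheaf argument in the spirit of the proof of Theorem~\ref{ThmCocycleTwistedBundles}. This gives surjectivity.
\end{enumerate}
As it stands, your proposal establishes only that $[\alpha]\mapsto[\psi_\alpha]$ is a well-defined injection $H^1_{\mathrm{dR}}(\Sigma)\to H^2(C^\infty_0(\Sigma),\R)$.
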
 
In particular, this implies that $H^2(\fk, \R) = 0$ for $\fk = C^{\infty}_{0}(\bS^2)$. 
Although many of the arguments in this paper carry over to general symplectic manifolds, 
we will mainly stick to the case $\Sigma = \bS^2$ from now on. 
In Section~\ref{sec:BeyondSpheres}, we will comment briefly on the possible new features one can expect for
symplectic manifolds where  $H^1_{\mathrm{dR}}(\Sigma)$ is nontrivial.
 
 \subsection{Invariant symmetric bilinear forms on $C^{\infty}_0(\bS^2)$}
 
We show that up to scaling, the invariant symmetric bilinear form
\begin{equation}\label{eq:TheInnerProduct}
\kappa(f,g) = \int_{\bS^2}fg\;\mu
\end{equation}
is the unique continuous invariant bilinear form on $C^{\infty}_0(\bS^2)$.
 In fact, rather than $\fk = C^{\infty}_0(\bS^2)$, it will be convenient to consider the full Poisson algebra
$C^{\infty}(\bS^2)$. We first show that this does not essentially change the situation.

\subsubsection{The role of the centre}
Let $(\Sigma,\omega)$ be a compact symplectic manifold, and let $S \colon C^{\infty}(\Sigma) \times C^{\infty}(\Sigma) \rightarrow \R$ be
a continuous invariant symmetric bilinear form on the Poisson algebra $C^{\infty}(\Sigma)$.
Since $\Sigma$ is compact, $C^{\infty}(\Sigma) = \R \one \oplus C^{\infty}_0(\Sigma)$ 
is a direct sum of Lie algebras.

Every invariant symmetric bilinear form on $C^{\infty}(\Sigma)$ restricts to an 
invariant symmetric bilinear form on $C^{\infty}_{0}(\Sigma)$. 
Conversely, since constant functions are central, 
every invariant symmetric bilinear form $S$ on $C^{\infty}_{0}(\Sigma)$ extends to an invariant bilinear form 
on $C^{\infty}(\Sigma)$ by putting $S(\one, f) = S(f,\one) =0$ for all $f\in C^{\infty}(\Sigma)$. 

This is not a bijective correspondence: for instance, the invariant symmetric bilinear form 
\begin{equation}\label{eq:FormOnCentre}
B(f, g) := \Big(\int_{M}f\omega^n/n!\Big)\Big(\int_{M}g\omega^n/n!\Big)
\end{equation}
clearly restricts to zero on $C^{\infty}_{0}(\Sigma)$. However, according to the following proposition this is 
essentially the only counterexample.

\begin{Proposition}
If $M$ is compact, then
two invariant bilinear forms $S$ and $S'$ on $C^{\infty}(\Sigma)$ restrict to the same invariant bilinear 
form on $C_0^{\infty}(\Sigma)$ if and only if they differ by a multiple of $B$.
\end{Proposition}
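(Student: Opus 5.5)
The argument rests on the direct sum decomposition $C^{\infty}(\Sigma) = \R\one \oplus C^{\infty}_0(\Sigma)$ together with one observation: a continuous invariant symmetric bilinear form $S$ on $C^{\infty}(\Sigma)$ pairs the centre trivially with the derived algebra. The strategy is to show that the only freedom left once $S|_{C^{\infty}_0}$ is fixed is the single number $S(\one,\one)$, and that this number is exactly what multiples of $B$ adjust.

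The ``if'' direction is immediate. $B(f,g)$ vanishes as soon as $f$ or $g$ integrates to zero, so $B$ restricts to zero on $C^{\infty}_0(\Sigma)$. Hence $S$ and $S+\lambda B$ have the same restriction. Note that $B$ is invariant because $\int_\Sigma \{h,f\}\,\omega^n = 0$; this is also the fact that makes $C^{\infty}_0$ an ideal.

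For the converse, put $D := S - S'$, an invariant symmetric bilinear form that vanishes on $C^{\infty}_0(\Sigma)\times C^{\infty}_0(\Sigma)$. The plan has three steps.

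\textbf{Step 1.} Show that $D(\one, f) = 0$ for all $f\in C^{\infty}_0(\Sigma)$. Invariance gives
\[
D(\one,\{h,g\}) = -D(\{h,\one\}, g) = 0,
\]
since $\one$ is central. So $D(\one,\cdot)$ vanishes on the linear span of all Poisson brackets.

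\textbf{Step 2.} Show that this span is dense in $C^{\infty}_0(\Sigma)$, or equal to it. This is the one substantive input. On a compact connected symplectic manifold, $C^{\infty}_0(\Sigma)$ is perfect: every function with zero integral is a finite sum of Poisson brackets. This is a known result (Avez--Lichnerowicz--Diaz-Miranda; it is also implicit in \cite{JV16}). If only density were available, continuity of $D$ would suffice, so the weaker statement is enough. This is the main obstacle. For $\Sigma = \bS^2$ it can be checked by hand: $C^{\infty}_0(\bS^2)$ is spanned, topologically, by the spherical harmonics of degree $\ell\ge 1$. Each such isotypic component is an irreducible $\mathfrak{so}(3)$-module, and the functions $x,y,z$ act on it by Poisson bracket through $\mathfrak{so}(3)$. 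Irreducibility then gives $\{\mathfrak{so}(3), V_\ell\} = V_\ell$, and density follows.

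\textbf{Step 3.} Conclude. Writing $f = a\one + f_0$ and $g = b\one + g_0$ with $a = \int f\omega^n/n! \,/\, \mathrm{vol}$ and similarly for $b$, bilinearity together with Steps 1 and 2 gives
\[
D(f,g) = ab\,D(\one,\one) = \lambda B(f,g),
\qquad \lambda = D(\one,\one)/\mathrm{vol}(\Sigma)^2 .
\]
Thus $S - S'$ is a multiple of $B$. Here I am assuming $\Sigma$ is connected, as it is in the case of interest $\bS^2$. Without connectedness the centre is larger and the statement has to be modified.
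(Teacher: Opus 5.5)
Your proof is correct and follows essentially the same route as the paper. Invariance and centrality of $\one$ make $(S-S')(\one,\cdot)$ vanish on Poisson brackets, the Avez--Lichnerowicz--Diaz-Miranda perfectness result identifies the span of brackets with $C^{\infty}_0(\Sigma)$, and the only remaining freedom is the value on $(\one,\one)$, which a multiple of $B$ absorbs; the paper subtracts $c_B B$ first and shows the remainder vanishes, which is the same computation. Two of your points go beyond the paper: the observation that $\Sigma$ must be connected, which the paper's proof implicitly assumes, and the $\mathfrak{so}(3)$-irreducibility argument showing that brackets are dense in $C^{\infty}_0(\bS^2)$, which together with continuity is enough.
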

\begin{proof}
Suppose that $S - S'$ restricts to zero on $C^{\infty}_0(\Sigma)$. 
Then $S'' := S - S' - c_{B}B$ additionally satisfies $S''(\one, \one) = 0$ if we set
$c_{B} := (S(\one,\one) - S'(\one,\one))/\vol^2(\Sigma)$.
Since $S''$ is invariant and $1$ is central, $S''(1, \{q,r\}) = S''(\{1,q\}, r) = 0$ for all $q, r \in C^{\infty}(\Sigma)$.
But by \cite[\S 12]{ALDM74} the commutator ideal of $C^{\infty}(\Sigma)$ is $C^\infty_{0}(\Sigma)$, 
so $S''(\one, f) = 0$ for $f \in C^{\infty}_{0}(\Sigma)$.  
But since $S''(\one,\one) = 0$, we have $S''(1, f) = 0$ for all $f \in C^{\infty}(\Sigma)$, so $S'' = 0$
because it vanishes on $C^{\infty}_{0}(\Sigma)$.
\end{proof}

Classifying invariant bilinear forms on $C^{\infty}_0(\Sigma)$ is therefore equivalent to classifying invariant 
bilinear forms on $C^{\infty}(\Sigma)$, and in the following we will take the second point of view.

\subsubsection{The Poisson algebra of the 2-sphere.}

We now consider the sphere $\bS^2$, equipped with a nonzero 
$\mathrm{SO}(3)$-invariant symplectic form $\mu$.
The Poisson algebra $C^{\infty}(\bS^2)$ has a dense polynomial subalgebra $\R[\bS^2] := \R[x,y,z]/I_{\R}$, where $I_{\R}$ 
is the ideal generated by $x^2 + y^2 + z^2 - 1$. 
Every continuous invariant bilinear form on $C^{\infty}(\bS^2)$ is uniquely determined by its
restriction to $\R[\bS^2]$. These correspond to 
$\C$-linear invariant bilinear forms on the complexification $\C[\bS^2] = \C[x, y, z]/I_{\C}$.
 
 Identifying $\C^3$ with the linear Poisson manifold $\mathfrak{so}(3,\C)^*$, the Poisson bracket on $\C[x,y,z]$
 is given by
 \begin{equation}\label{eq:KKSbracket}
 \{p,q\}_{\vec{r}} = \left\langle
 \vec{r}\;,\; 
 \vec{\nabla}p \times \vec{\nabla}q
 \right\rangle.
 \end{equation}
 Here $I_{\C}$ is a Poisson ideal, and $\C[x,y,z]/I_{\C}$ coincides with the Poisson algebra for the 
 Kostant-Kirillov-Souriau symplectic structure $\mu_{\mathrm{KKS}}$ on 
 $\bS^2$.
 
\begin{Theorem}\label{Thm:InvBilS2Alg}
Every invariant bilinear form on $\C[\bS^2]$ is of the form
\[S = c_{\kappa} \kappa + c_{B} B\] 
for some $c_{\kappa}, c_{B} \in \C$.
\end{Theorem}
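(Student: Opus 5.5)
The plan is to decompose $\C[\bS^2]$ under the adjoint action of $\mathfrak{so}(3,\C)\cong\mathfrak{sl}(2,\C)$, spanned by the linear functions $x,y,z$. Since $\{x,p\}$ etc.\ are the infinitesimal rotations, $\C[\bS^2]=\bigoplus_{\ell\ge 0}\cH_\ell$, where $\cH_\ell$ is the space of spherical harmonics of degree $\ell$ (harmonic homogeneous polynomials of degree $\ell$ restricted to the sphere). Each $\cH_\ell$ is irreducible of dimension $2\ell+1$, and the $\cH_\ell$ are pairwise non-isomorphic.

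An invariant bilinear form $S$ is in particular $\mathfrak{sl}_2$-invariant. By Schur's lemma, $S(\cH_\ell,\cH_m)=0$ for $\ell\neq m$, and on each $\cH_\ell$ the form $S$ is a multiple $s_\ell$ of the unique invariant form there. Since $\kappa$ is nondegenerate on each $\cH_\ell$, we can write $S|_{\cH_\ell\times\cH_\ell}=s_\ell\,\kappa$. The form $B$ is supported on $\cH_0$, so the theorem reduces to showing that $s_\ell$ is independent of $\ell$ for $\ell\ge1$; the constant $s_0$ is then absorbed by $c_B$.

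For this I use invariance under the full Poisson algebra, not just $\mathfrak{so}(3)$. The bracket satisfies $\{\cH_1,\cH_\ell\}\subseteq\cH_\ell$, while $\{\cH_2,\cH_\ell\}\subseteq\cH_{\ell-1}\oplus\cH_{\ell+1}$, with the $\cH_\ell$ component vanishing by a parity argument: the bracket of homogeneous polynomials of degrees $a,b$ has degree $a+b-1$, and reducing modulo $r^2-1$ preserves parity. Take $h\in\cH_2$, $f\in\cH_\ell$, $g\in\cH_{\ell+1}$ with $\ell\ge1$. Invariance gives
\[S(\{h,f\},g)+S(f,\{h,g\})=0.\]
Only the $\cH_{\ell+1}$ component of $\{h,f\}$ and the $\cH_\ell$ component of $\{h,g\}$ contribute, so
\[s_{\ell+1}\,\kappa(\{h,f\},g)+s_\ell\,\kappa(f,\{h,g\})=0.\]
Since $\kappa$ is itself invariant, $\kappa(f,\{h,g\})=-\kappa(\{h,f\},g)$, and hence $(s_{\ell+1}-s_\ell)\,\kappa(\{h,f\},g)=0$.

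The key step is to exhibit, for each $\ell\ge1$, elements $h,f,g$ with $\kappa(\{h,f\},g)\ne0$; that is, to show $\{\cH_2,\cH_\ell\}$ has a nonzero $\cH_{\ell+1}$ component. I would check this with highest weight vectors, using $w=x+iy$, so that $w^m\in\cH_m$. Brackets of powers of $w$ vanish, since $\vec\nabla w\times\vec\nabla w=0$, so I would instead take $h=zw\in\cH_2$ and $f=w^\ell$. Then
\[\{zw,w^\ell\}=\ell\,w^{\ell}\{z,w\}=\pm i\ell\,w^{\ell+1}\ne0,\]
using $\{z,w\}=\pm iw$ and $\{w,w\}=0$. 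This lies in $\cH_{\ell+1}$ and pairs nontrivially with $g=\bar w^{\ell+1}$ under $\kappa$, because $\kappa(w^{\ell+1},\bar w^{\ell+1})=\int|w|^{2\ell+2}\mu>0$. This gives $s_{\ell+1}=s_\ell$ for all $\ell\ge1$, so $S=s_1\kappa+c_B B$ with $c_B=(s_0-s_1)/\vol^2$ suitably normalised, since $\kappa$ and $B$ both pair constants with themselves.

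The main obstacle I expect is making the representation-theoretic decomposition fully rigorous on the complexified quotient ring $\C[x,y,z]/I_\C$. This means checking that $\C[\bS^2]\cong\bigoplus_\ell\cH_\ell$ algebraically and that $\kappa$ restricted to each $\cH_\ell$ is nondegenerate. Both are standard facts about harmonic polynomials, since harmonic projection gives the direct sum decomposition. The explicit bracket computation above is routine once this is in place.
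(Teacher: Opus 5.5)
Your proposal is correct and follows essentially the same route as the paper. Schur's lemma reduces $S$ to scalars on each $\cH_\ell$, and invariance under brackets with $\cH_2$, together with a highest-weight computation, forces these scalars to agree for all $\ell\ge1$. The remaining differences are cosmetic: you take $h=zw$ and pair $\{zw,w^\ell\}=-i\ell w^{\ell+1}$ explicitly against $\bar w^{\ell+1}$, where the paper uses $z^2$ and a weight argument. Also, the normalisation works out to $c_B=(s_0-s_1)/\vol$, since $\kappa(1,1)=\vol$ and $B(1,1)=\vol^2$.
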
 

Our proof relies on the $\mathfrak{so}(3,\C)$-equivariance of the Poisson structure. 
As an $\mathfrak{so}(3,\C)$-representation, $\C[\bS^2]$ decomposes into irreducible representations as
\[
\C[\bS^2] = \bigoplus_{l=0}^{\infty} V_{l},
\]
where the spin $l$-representation $V_{l} \subset \C[\bS^2]$. 
The quotient map $\C[x,y,z] \rightarrow \C[\bS^2]$ induces an isomorphism $\cH_{l} \stackrel{\sim}{\longrightarrow} V_l$
with the space $\mathcal{H}_l \subset \C[x,y,z]$ of harmonic polynomials in $x,y,z$ that are homogeneous of degree $l\in \N$, see e.g.\
\cite[\S IV.9, Problem~12]{Kn96} and \cite[\S IV.9, Problem 2]{Kn96}.
Since $V_1 \subset \C[\bS^2]$ is isomorphic to $\mathfrak{so}(3,\C)$, the Poisson bracket 
$\{\,\cdot\, , \,\cdot\,\} \colon \C[\bS^2] \wedge \C[\bS^2] \rightarrow \C[\bS^2]$ is an intertwiner of $\mathfrak{so}(3,\C)$-representations
by the Jacobi identity.  
 
The $\mathfrak{so}(3,\C)$-representation $P_{l} \subseteq \C[x,y,z]$ of homogeneous polynomials of degree $l$ decomposes as
\[
P_l = \cH_l \oplus r^2 \cH_{l-2} \oplus r^4 \cH_{l-4}\ldots, 
\] 
where $r^2 = x^2 + y^2 + z^2$ is of course rotation invariant. Since the KKS bracket \eqref{eq:KKSbracket} is 
of degree $-1$, we have $\{P_k, P_l\} \subseteq P_{k+l-1}$, so in particular $\{\cH_k, \cH_l\} \subseteq P_{k+l-1}$.
As the Poisson bracket is an intertwiner $\cH_k \otimes \cH_l \rightarrow P_{k+l-1}$, its image 
can only contain irreducible representations of spin \[|k-l|, \; |k-l| +1, \; \ldots,\; k+l -1,\; k+l.\] 
In particular, the Poisson bracket restricts to an intertwiner 
\begin{equation}
\{\,\cdot\,, \,\cdot\,\}_{2l} \colon \cH_2 \otimes \cH_{l} \rightarrow \cH_{l+1} \oplus \cH_{l-1}.
\end{equation}
\begin{Lemma}\label{Lemma:Weightl}
For $l > 0$,
the image of this intertwiner contains $\cH_{l+1}$.
\end{Lemma}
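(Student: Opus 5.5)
Since $\{\cdot,\cdot\}_{2l}$ is an $\mathfrak{so}(3,\C)$-intertwiner and $\cH_{l+1}$ is irreducible, it suffices to exhibit one element of $\cH_2\otimes\cH_l$ whose bracket has a nonzero component in $\cH_{l+1}$. The plan is to use highest weight vectors. Set $w := x+iy$ and work in $\C[x,y,z]$, where the bracket is given by \eqref{eq:KKSbracket}. The polynomial $w^m$ is harmonic, homogeneous of degree $m$, and is the highest weight vector of $\cH_m$ for the Cartan element generated by rotation about the $z$-axis (i.e.\ the bracket with $z$), with weight $\pm i m$ depending on conventions. Take $p = w^2 \in \cH_2$ and $q = z w^{l-1}$, which is harmonic because $\Delta(zw^{l-1}) = 2\,\nabla z\cdot\nabla w^{l-1} + z\Delta w^{l-1} = 2\,\partial_z w^{l-1} = 0$. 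Then $q \in \cH_l$ has weight $l-1$. The bracket $\{p,q\}$ lies in $P_{l+1}$ and has weight $2+(l-1) = l+1$. In the decomposition $P_{l+1} = \cH_{l+1}\oplus r^2\cH_{l-1}\oplus\cdots$, the only summand containing weight $l+1$ is $\cH_{l+1}$, whose weight-$(l+1)$ space is spanned by $w^{l+1}$. So $\{p,q\}$ is a multiple of $w^{l+1}$, and it remains to check that this multiple is nonzero.

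Next, compute this bracket directly. We have $\nabla w^2 = 2w(1,i,0)$ and $\nabla(zw^{l-1}) = ((l-1)zw^{l-2},\, i(l-1)zw^{l-2},\, w^{l-1})$. Their cross product is
\[
2w\big(i w^{l-1},\, -w^{l-1},\, i(l-1)zw^{l-2} - i(l-1)zw^{l-2}\big) = 2w^{l}(i,-1,0).
\]
Pairing with $\vec r=(x,y,z)$ gives $\{p,q\} = 2w^l(ix - y) = 2i\,w^{l+1}$, which is nonzero for every $l\geq 1$.

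Finally, pass to $\C[\bS^2]$. The element $2iw^{l+1}$ lies in $\cH_{l+1}$, and the quotient map restricted to $\cH_{l+1}$ is the isomorphism onto $V_{l+1}$. Hence its image is nonzero, and by irreducibility the image of the intertwiner contains all of $\cH_{l+1}$. The only place needing care is the $z$-component of the cross product, where the two terms must cancel exactly so that no leftover weight-$(l+1)$ term appears outside $\cH_{l+1}$. This is the main obstacle, and the explicit computation above settles it: the cancellation is exact, and the weight argument already forces the result into $\cH_{l+1}$ anyway.
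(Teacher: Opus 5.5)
Your proof is correct and takes essentially the paper's approach: bracket two explicit weight vectors, note that the result has a weight not occurring in $\cH_{l-1}$, and conclude by irreducibility. The paper uses $z^2$ and $(x+iy)^l$ to get $\{z^2,(x+iy)^l\}=-2ilz(x+iy)^l$ of weight $l$, whereas your $(x+iy)^2$ and $z(x+iy)^{l-1}$ land directly on the highest weight vector $2i(x+iy)^{l+1}$ of $\cH_{l+1}$, with the minor bonus that both of your inputs are genuinely harmonic (the paper's $z^2$ lies in $\cH_2$ only up to the Casimir $r^2$, which has trivial bracket with everything).
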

\begin{proof}
The vector $z^2 \in \cH_2$ of weight 0 and the vector $(x+iy)^l \in \cH_{l}$ of (highest) weight $l$
have nonzero commutator $\{z^2, (x+iy)^l\} = -2ilz(x+iy)^l$ of weight $l$. Since the image of the 
intertwiner is not contained in the representation $\cH_{l-1}$ of highest weight $l-1$, it 
contains $\cH_{l+1}$ by Schur's Lemma.
\end{proof} 
\begin{proof}[Proof of Theorem~\ref{Thm:InvBilS2Alg}]
Let $S \colon \C[\bS^2] \times \C[\bS^2] \rightarrow \C$ be an invariant bilinear form. 
Since its restriction to $S_{kl} \colon V_k \otimes V_l \rightarrow \C$ is $\mathfrak{so}(3,\C)$-invariant, 
is induces an intertwiner $V_k \rightarrow V_l^*$. 
Then by Schur's Lemma $S_{kl} = 0$ for $k\neq l$, 
and $S_{kk}$ is a scalar multiple of $\kappa_{kk}$, where 
$\kappa \colon \C[\bS^2] \times \C[\bS^2] \rightarrow \C$ is the (complexification of the) 
nondegenerate symmetric bilinear form defined in~\eqref{eq:TheInnerProduct}.
In particular, $S$ is symmetric.

The bilinear forms $\kappa$ and $B$ from equation \eqref{eq:TheInnerProduct} and \eqref{eq:FormOnCentre}
satisfy $B_{00} \neq 0$, $B_{11} = 0$, $\kappa_{00} \neq 0$ and $\kappa_{11} \neq 0$, 
so by subtracting suitable multiples of $\kappa$ and $B$ we may assume that $S_{00} = 0$ and $S_{11} = 0$.
Assume by induction that $S_{ii} = 0$ for all $i\leq l$. Then $S(V_{l}, \{V_2, V_{l+1}\}) = 0$
because $S_{lk} = 0$ for all $k$. The invariance then implies that 
$S(\{V_2, V_l\}, V_{l+1}) = 0$ as well. For $l>0$ we have $V_{l+1} \subseteq \{V_2, V_l\}$ by Lemma~\ref{Lemma:Weightl}, 
so $S_{l+1, l+1} = 0$ as required.
\end{proof}
 
We will continue in the smooth setting, for which we need the following corollary. It follows from Theorem~\ref{Thm:InvBilS2Alg} 
and the fact that $\R[\bS^2]$ is dense Lie subalgebra of $C^{\infty}(\bS^2)$.
\begin{Corollary}\label{Thm:InvBilS2}
Every continuous invariant bilinear form on $C^{\infty}(\bS^2)$ is of the form
\[S = c_{\kappa} \kappa + c_{B} B\] 
for some $c_{\kappa}, c_{B} \in \R$. Every continuous invariant bilinear form 
on $C^{\infty}_0(\bS^2)$ is of the form $S = c_{\kappa}\kappa$.
\end{Corollary}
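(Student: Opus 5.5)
The plan is to reduce to the algebraic statement, Theorem~\ref{Thm:InvBilS2Alg}, by density, and then to pass from $C^{\infty}(\bS^2)$ to $C^{\infty}_0(\bS^2)$ using the Proposition on the role of the centre.

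Let $S$ be a continuous invariant bilinear form on $C^{\infty}(\bS^2)$. First I would identify the symplectic form. The $\mathrm{SO}(3)$-invariant form $\mu$ is a constant multiple $\lambda\mu_{\mathrm{KKS}}$, so the Poisson bracket on $C^{\infty}(\bS^2)$ restricts on $\R[\bS^2]$ to $\lambda^{-1}$ times the bracket \eqref{eq:KKSbracket}. Rescaling the bracket does not change which bilinear forms are invariant, so the restriction of $S$ to $\R[\bS^2]$ is invariant for the KKS bracket. It then extends $\C$-bilinearly to an invariant form $S_{\C}$ on $\C[\bS^2]$.

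Theorem~\ref{Thm:InvBilS2Alg} gives $S_{\C} = c_{\kappa}\kappa + c_B B$ with $c_\kappa, c_B \in \C$. Since $S$, $\kappa$ and $B$ are all real on $\R[\bS^2]$, taking real parts shows that we may take $c_{\kappa}, c_B \in \R$. To see this concretely, evaluate on $\one\otimes\one$ and on $x\otimes x$. Since $B(x,x)=0$ and $\kappa(x,x)\neq 0$, the value $S(x,x)$ determines $c_\kappa$ as a real number. Then $S(\one,\one)$ determines $c_B$ as a real number.

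Next comes the extension to smooth functions. The difference $S - c_\kappa\kappa - c_B B$ is continuous, as $\kappa$ and $B$ are continuous in the Fr\'echet topology, and it vanishes on $\R[\bS^2]\times\R[\bS^2]$. Polynomials are dense in $C^{\infty}(\bS^2)$ in the $C^{\infty}$ topology; for instance, truncated spherical harmonic expansions of smooth functions converge in all $C^k$ norms. A jointly continuous bilinear form that vanishes on a dense subset of the product is zero, so the difference vanishes identically. This proves the first claim.

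For the second claim, let $S_0$ be a continuous invariant bilinear form on $C^{\infty}_0(\bS^2)$. Extend it by $S_0(\one,\cdot)=S_0(\cdot,\one)=0$, using the decomposition $C^{\infty}(\bS^2)=\R\one\oplus C^{\infty}_0(\bS^2)$, in which both projections are continuous. The extension is continuous and invariant, because $\one$ is central. By the first part it equals $c_\kappa\kappa + c_BB$. Restricting to $C^{\infty}_0(\bS^2)$, the form $B$ vanishes, so $S_0 = c_\kappa\kappa$.

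The only point requiring care is the density of $\R[\bS^2]$ in the Fr\'echet topology. I would handle it via Stone--Weierstrass applied to derivatives, or via the convergence of spherical harmonic expansions in Sobolev norms together with Sobolev embedding.
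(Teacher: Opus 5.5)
Your proof is correct and follows the paper's route. The paper restricts to the dense Poisson subalgebra $\R[\bS^2]$, applies Theorem~\ref{Thm:InvBilS2Alg}, extends by continuity, and treats $C^{\infty}_0(\bS^2)$ by extending by zero on the central constants; your argument does the same, and the details the paper leaves implicit (the rescaling of the bracket, reality of $c_\kappa, c_B$, density in the Fr\'echet topology) are all handled correctly.
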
 

\begin{Remark}\label{RemarkGeneralSigma}
Although we extensively used the $\mathfrak{so}(3)$-invariance in the proof, we believe that 
the result should hold true for the compactly supported Poisson algebra $C^{\infty}_{c}(\Sigma)$ 
of an arbitrary symplectic manifold, without any additional symmetry. 
Using techniques similar to \cite[Prop.~4.2]{JV16} one can prove that every bilinear 
form is \emph{local}, so a local model (with additional symmetry) should suffice to prove the general result.
\end{Remark}
 
\subsection{The Koszul map}

Note that the restriction of $\kappa \colon C^{\infty}(\bS^2) \times C^{\infty}(\bS^2) \rightarrow \R$
to the subalgebra $V_1 \simeq \mathfrak{so}(3)$ is an invariant symmetric bilinear form on the 
simple Lie algebra $\mathfrak{so}(3)$, and hence equal to a multiple of the killing form.
Since $\kappa$ is nondegenerate, this multiple is nonzero.

 The Koszul map $\Gamma \colon \mathrm{Sym}^2(\fk, \R)^{\fk} \rightarrow H^3(\fk, \R)$ sends 
 a continuous invariant symmetric bilinear form $S$ to the class of the continuous 3-cocycle 
 $(\Gamma S) (f, g, h) = S(\{f, g\}, h)$.
 \begin{Proposition}\label{Prop:Koszul}
 For $\fk = C^{\infty}_0(\bS^2, \R)$ the Koszul map is injective on the continuous bilinear forms.
 For its dense subalgebra $\fk = \R[\bS^2]_0$ of polynomials that integrate to zero, the Koszul map is 
 injective on all bilinear forms.
 \end{Proposition}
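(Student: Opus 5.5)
By Corollary~\ref{Thm:InvBilS2} and Theorem~\ref{Thm:InvBilS2Alg}, every continuous invariant bilinear form on $\fk = C^{\infty}_0(\bS^2)$, and every invariant bilinear form on $\R[\bS^2]_0$, is a multiple $c\kappa$. The Koszul map is linear, so it suffices to show that the class of the $3$-cocycle $\Gamma\kappa(f,g,h) = \kappa(\{f,g\},h)$ is nonzero. Concretely, we must exclude a $2$-cochain $\beta$ with $\Gamma\kappa = \td\beta$, where
\[
\td\beta(f,g,h) = -\beta(\{f,g\},h) + \beta(\{f,h\},g) - \beta(\{g,h\},f).
\]
In the continuous case $\beta$ is continuous; in the polynomial case $\beta$ is arbitrary. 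Because $\kappa$ is nondegenerate and nonzero, restricting to a well-chosen finite-dimensional subalgebra will do the job.

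We restrict to the subalgebra $V_1 \simeq \mathfrak{so}(3)$ of linear functions $x,y,z$, which lies in both $\R[\bS^2]_0$ and $C^\infty_0(\bS^2)$. If $\Gamma\kappa = \td\beta$ on $\fk$, then restricting gives $\Gamma(\kappa|_{V_1}) = \td(\beta|_{V_1})$ in the Chevalley--Eilenberg complex of $\mathfrak{so}(3)$. As noted before the proposition, $\kappa|_{V_1}$ is a nonzero multiple of the Killing form. For a simple Lie algebra, the Cartan $3$-cocycle $K([a,b],c)$ represents a nonzero class in $H^3(\mathfrak{so}(3),\R)\cong\R$. This yields a contradiction, provided restriction is compatible with the differential, which holds because $V_1$ is a Lie subalgebra. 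No continuity is used at this step, so the argument covers arbitrary $\beta$ on $\R[\bS^2]_0$ and continuous $\beta$ on $C^\infty_0(\bS^2)$ alike.

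To keep the argument self-contained, I would check nontriviality of the Cartan cocycle on $\mathfrak{so}(3)$ directly. Since $\mathfrak{so}(3)$ is $3$-dimensional, a $3$-cocycle is determined by its value on $(x,y,z)$. For any $2$-cochain $\beta$ we have
\[
\td\beta(x,y,z) = -\beta(\{x,y\},z)+\beta(\{x,z\},y)-\beta(\{y,z\},x).
\]
With $\{x,y\}=z$ and its cyclic permutations (up to a sign convention), every term is of the form $\beta(z,z)$, $\beta(y,y)$ or $\beta(x,x)$, all of which vanish by antisymmetry. Hence $\td\beta(x,y,z) = 0$. On the other hand, $\kappa(\{x,y\},z) = \kappa(z,z) = \int z^2\mu \neq 0$, so $\Gamma\kappa$ is not a coboundary.

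The only point that needs care is the sign and normalization of the bracket $\{x,y\}=\pm z$ coming from \eqref{eq:KKSbracket}. This affects neither argument, since $\kappa(z,z)>0$ and the coboundary computation vanishes termwise regardless of sign. The main potential obstacle would be if the Koszul map were meant to land in cohomology modulo something larger, but as defined it lands in $H^3(\fk,\R)$, and pulling back along the inclusion $V_1\hookrightarrow\fk$ suffices.
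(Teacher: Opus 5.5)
Your proof is correct and follows the paper's argument. You reduce via the classification of invariant forms to showing $\Gamma\kappa \neq 0$, then pull back along $V_1 \simeq \mathfrak{so}(3) \hookrightarrow \fk$, where $\kappa$ becomes a nonzero multiple of the Killing form with nontrivial Cartan $3$-cocycle. Your direct check that every $3$-coboundary on $\mathfrak{so}(3)$ vanishes at $(x,y,z)$ while $\kappa(\{x,y\},z)\neq 0$ is a self-contained replacement for the paper's appeal to the fact that this class generates $H^3(\mathfrak{so}(3),\R)$.
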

 \begin{proof}
 Identify the homogeneous polynomials of degree one on $\bS^2$ with $\mathfrak{so}(3)$, and denote 
 the corresponding inclusion by $\iota \colon \mathfrak{so}(3) \rightarrow \fk$.
 Denote the Koszul maps for $\mathfrak{so}(3)$ and $\fk$ by $\Gamma_{\mathfrak{so}(3)}$
 and $\Gamma_{\fk}$, respectively.  
 Since the Koszul map is functorial, $\iota^* \circ \Gamma_{\fk} = \Gamma_{\mathfrak{so}(3)} \circ \iota^*$, 
 and $\iota^* (\Gamma_{\fk} \kappa)$ is a nonzero multiple of  $\Gamma_{\mathfrak{so}(3)}(\kappa_{\mathfrak{so}(3)})$.
 
 This class generates $H^3(\mathfrak{so}(3), \R)$, so in particular it is nonzero. 
 This implies that $\Gamma_{\fk}(c\kappa) = 0$ if and only if $c=0$. Since every (continuous) element of 
 $\mathrm{Sym}^2(\fk, \R)^{\fk}$ is of this form, the result follows.
 \end{proof}
 
 \subsection{2-cocycles on $C^{\infty}_{c}(M, \fk)$ and the proof of Theorem~\ref{ThmCocycleLoop}}\label{sec:proofThm1}

In principle, the results from \cite{NW08} provide a classification of 2-cocycles on general current algebras 
$A \otimes \fk$, where $A$ is any unital commutative algebra and $\fk$ is any Lie algebra. 
Their main result states that cocycles on $A\otimes \fk$ essentially come from two sources: 
invariant bilinear forms on $\fk$, and 2-cocycles on $\fk$.  The way in which these two sources give rise to cocycles on $A\otimes \fk$  
is then governed by the 
Koszul map. 

For the Poisson Lie algebra $\fk = C^{\infty}_0(\bS^2)$ of smooth functions that average to zero, 
we now essentially control these two sources:
we have classified the invariant bilinear 
forms in Theorem~\ref{Thm:InvBilS2Alg}, and the continuous 2-cocycles were already classified in \cite{JV16} 
(Theorem~\ref{Thm:H2Poisson}).
Together with the injectivity of the Koszul map (Proposition~\ref{Prop:Koszul}), the results in  \cite{NW08}
should therefore allow us to complete the classification of 2-cocycles for $C^{\infty}(\bS^1, \R)\otimes \fk$, 
which is a dense subalgebra of $L\fk = C^{\infty}(\bS^1, \fk)$.

A minor complication in completing the proof of Theorem~\ref{sec:proofThm1} 
is that the setting in \cite{JV16} (continuous cochains) does not quite match
with the setting in \cite{NW08} (cochains without continuity assumptions).
For this reason we will have to carefully navigate the various types of continuity properties that we use in different parts of the proof.

In the following we denote by $\Omega^1_A$ the K\"ahler differentials of $A$, with universal derivation $d \colon A \rightarrow \Omega^1_{A}$.
We say that a $q$-cochain $\psi$ on $A\otimes \fk$ is \emph{continuous in the $\fk$-variables} if 
the multilinear maps $(f_1, \ldots, f_q) \mapsto \psi(a_1\otimes f_1, \ldots, a_q \otimes f_q)$ are continuous for fixed 
$a_1, \ldots, a_q$.
\begin{Lemma}\label{Lem:NeebWagemannRefined}
Let $\fk = C^{\infty}_0(\bS^2)$, and let $\psi$ be a 2-cocycle on $A\otimes \fk$ that is continuous in the $\fk$-variables.
 Then $\psi$ is cohomologous to 
\begin{equation}\label{eq:CocycleForGeneralRing}
\psi_{\lambda}(a\otimes f, b \otimes g) = \lambda(adb)\kappa(f,g)
\end{equation}
for a linear map $\lambda \colon \Omega^1_{A}/dA \rightarrow \R$.
\end{Lemma}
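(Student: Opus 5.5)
Following the strategy of \cite{NW08}, the plan is to analyse $\psi$ through the $A$-dependence of its values. For fixed $a,b\in A$, consider the bilinear map $\psi_{a,b}(f,g) := \psi(a\otimes f, b\otimes g)$ on $\fk$, which is continuous by hypothesis. The cocycle identity with $a\otimes f$, $b\otimes g$, $c\otimes h$, specialised to $c=\one$ (so $\one\otimes h$ acts by the adjoint action of $\fk$), says that $\psi_{a,b}$ is ``$\fk$-invariant up to a coboundary''. Concretely, it gives
\[
\psi_{a,b}(\{h,f\},g)+\psi_{a,b}(f,\{h,g\}) = \psi_{ab,\one}(\{f,g\},h)\text{-type terms}.
\]
The first step is to normalise the terms involving $\one$. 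The restriction $\psi_{\one,\one}$ is a continuous 2-cocycle on $\fk$, and by Theorem~\ref{Thm:H2Poisson} it is the coboundary of a continuous 1-cochain $\alpha$. Subtracting $d(\mathrm{id}_A\otimes\alpha)$ (more precisely the coboundary of $a\otimes f\mapsto \ell(a)\alpha(f)$, or simply of $\beta(a\otimes f)=\alpha_a(f)$ chosen linearly in $a$), I arrange that $\psi(\one\otimes f, \one\otimes g)=0$. More generally, I use the coboundary freedom $\beta(a\otimes f)=\beta_a(f)$, with $\beta_a$ continuous and the perfectness of $\fk$ (the commutator ideal is everything, cf.\ \cite{ALDM74}), to kill the map $a\mapsto \psi(a\otimes\cdot,\one\otimes\cdot)$. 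This works because $f\mapsto\psi(a\otimes f,\one\otimes g)$ behaves like a cocycle with values in $\fk^*$, and $H^1(\fk,\fk^*)$ is handled by the same continuity and Theorem~\ref{Thm:H2Poisson}-type arguments. This is the step where continuity in the $\fk$-variables matters.

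Once $\psi(a\otimes f,\one\otimes g)=0$ for all $a$, the cocycle identity with $c=\one$ shows that every $\psi_{a,b}$ is an invariant bilinear form on $\fk$. The cocycle identity also shows that the symmetric parts of the $\psi_{a,b}$ are controlled, as in \cite{NW08}, by $\Gamma$. Concretely, the component of $\psi_{a,b}$ alternating in $(a,b)$ is invariant symmetric, and the component symmetric in $(a,b)$ is antisymmetric invariant. By Corollary~\ref{Thm:InvBilS2} (the continuous version of Theorem~\ref{Thm:InvBilS2Alg}), every continuous invariant bilinear form on $\fk$ is a multiple of $\kappa$ and hence symmetric. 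So $\psi_{a,b}=\Lambda(a,b)\kappa$ for a bilinear map $\Lambda\colon A\times A\to\R$ that is antisymmetric, because $\psi$ is alternating and $\kappa$ is symmetric.

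Next I derive the algebraic constraints on $\Lambda$. Inserting $\psi_{a,b}=\Lambda(a,b)\kappa$ into the cocycle identity and using $[a\otimes f,b\otimes g]=ab\otimes\{f,g\}$ gives
\[
\big(\Lambda(ab,c)+\Lambda(bc,a)+\Lambda(ca,b)\big)\,\kappa(\{f,g\},h)=0 .
\]
By Proposition~\ref{Prop:Koszul}, $\Gamma\kappa\neq 0$, so the bracket factor is not identically zero. This yields the cyclic identity $\Lambda(ab,c)+\Lambda(bc,a)+\Lambda(ca,b)=0$. Together with antisymmetry, this is exactly the condition for $\lambda(a\,db):=\Lambda(a,b)$ to define a linear functional on $\Omega^1_A/dA$; this is the standard identification used in \cite{NW08}. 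It gives $\psi$ cohomologous to $\psi_\lambda$.

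The main obstacle is the first step: the normalisation must be made using only continuity in the $\fk$-variables, since no continuity in $A$ is available. I would handle it by choosing the coboundary $\beta_a$ linearly in $a$ through a Hamel basis of $A$, so that each $\beta_a$ is continuous on $\fk$ (which is all that is needed). I would then verify that the relevant $\fk^*$-valued 1-cocycles are inner, using the injectivity/vanishing results already established for $\fk$.
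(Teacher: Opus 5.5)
Your proof is correct, but it takes a different route from the paper.

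\textbf{The paper's route.} The paper uses the structure theorem of \cite{NW08} as a black box. \cite[Thm.~4.1(iv)]{NW08} splits $\psi=\psi_1+\psi_2$ into a part depending on $a\wedge b\otimes f\vee g$ and a part depending on $ab\otimes f\wedge g$. The first part is $\fk$-invariant, hence equals $L(a\wedge b)\kappa$ by Corollary~\ref{Thm:InvBilS2}. Koszul injectivity together with \cite[Thm.~4.7(iii)]{NW08} makes each $f_2(a\otimes\cdot)$ a continuous 2-cocycle on $\fk$, which Theorem~\ref{Thm:H2Poisson} turns into a coboundary. Finally, \cite[Thm.~4.7(ii)]{NW08} gives $\lambda(dA)=0$.

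\textbf{Your route.} You reprove the needed part of \cite{NW08} by hand. You kill $\psi$ on $(A\otimes\fk)\times(\one\otimes\fk)$ and then read off invariance of each $\psi_{a,b}$ from the cocycle identity. You then observe that antisymmetry plus the cyclic identity for $\Lambda$ is exactly the Leibniz relation defining $\Omega^1_A/dA$, together with $\Lambda(\one,\cdot)=0$. This is self-contained, and it avoids transferring the purely algebraic results of \cite{NW08} to cochains that are only continuous in the $\fk$-variables. It also yields Remark~\ref{Rk:partCont1chain} for free.

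\textbf{Your first step is only sketched.} The price of your route is that the content of \cite[Thm.~4.7]{NW08} now sits in your first step. The worry you flag there (choosing $\beta_a$ via a Hamel basis) is not the real issue: since $\fk$ is perfect, $\beta_a$ is uniquely determined and hence automatically linear in $a$. What must actually be checked is that the continuous $\fk'$-valued 1-cocycle $h\mapsto\omega_a(\,\cdot\,,h)$ is inner, where $\omega_a(f,h):=\psi(a\otimes f,\one\otimes h)$. This goes as follows.
\begin{itemize}
\item From $\omega_a(\{f,g\},h)-\omega_a(f,\{g,h\})+\omega_a(\{h,f\},g)=0$ one shows that the symmetric part of $\omega_a$ is invariant, hence equals $c_a\kappa$.
\item The antisymmetric remainder then has Chevalley--Eilenberg differential proportional to $c_a\Gamma\kappa$, so Proposition~\ref{Prop:Koszul} forces $c_a=0$.
\item Only then does Theorem~\ref{Thm:H2Poisson} give $\omega_a=\beta_a\circ\{\cdot,\cdot\}$ with $\beta_a$ continuous.
\end{itemize}
So Theorem~\ref{Thm:H2Poisson} alone does not suffice; this is where Koszul injectivity is really used. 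In your last step, by contrast, you only need $\kappa(\{f,g\},h)\not\equiv 0$ as a cochain, which is trivial.
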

\begin{proof}
A pair of linear maps
\begin{align*}
 f_1 \colon& \Lambda^2(A) \otimes S^2(\fk) \rightarrow \R,\\
 f_2 \colon & A \otimes \Lambda^2(\fk) \rightarrow \R
 \end{align*}
gives rise to a pair of bilinear maps $\psi_1, \psi_2$ on $A\otimes \fk$ by
\begin{align*}
 \psi_1(a\otimes f, b\otimes g) & = f_1(a\wedge b \otimes f \vee g)\\
 \psi_2 (a\otimes f, b\otimes g) & =  f_2(ab \otimes f\wedge g),
 \end{align*}
where $\wedge$ and $\vee$ denote the alternating and symmetric tensor product.
Since the commutator subalgebra $\fk' := [\fk,\fk]$ is equal to $\fk$
by \cite{ALDM74}, it follows from \cite[Thm.~4.1(iv)]{NW08} that every 2-cocycle on $A\otimes \fk$
is a sum $\psi = \psi_1 + \psi_2$ of bilinear maps of this form.
If $\psi$ is continuous in the $\fk$-variables, 
then the bilinear map $(f,g) \mapsto f_1(a\wedge b \otimes f \vee g)$ is continuous as well, and it is 
$\fk$-invariant by \cite[Thm.~4.1(i)]{NW08}. 
So by Corollary~\ref{Thm:InvBilS2} it is a multiple of $\kappa$, and we have 
$f_1(a\wedge b\otimes f\vee g) = L(a\wedge b)\kappa(f,g)$ 
for a linear map $L \colon A \wedge A \rightarrow \R$. By 
\cite[Thm.~4.1(ii) and Lemma~2.1]{NW08} we have $L(a\wedge b) = \lambda(adb - bda)$ for a linear map 
$\lambda \colon \Omega^1_{A} \rightarrow \R$.

Since the Koszul map for $\fk$ is injective on continuous cocycles (Proposition~\ref{Prop:Koszul}), it follows from
\cite[Thm.~4.7(iii)]{NW08} that $f \wedge g \mapsto f_2(a\otimes f \wedge g)$ is a 2-cocycle on $\fk$ for every $a\in A$. 
If $\psi$ is continuous in the $\fk$-variables, then this map is a continuous 2-cocycle on $\fk$, and therefore a coboundary
by Theorem~\ref{Thm:H2Poisson}. So $f_2(a\otimes f \wedge g) = \gamma_{a}([f,g])$ for a linear map 
$\gamma \colon A \rightarrow \fk'$ into the continuous linear dual of $\fk$. 
In particular $\psi_2(a\otimes f, b\otimes g) = \gamma([a\otimes f, b\otimes g])$ is the coboundary
of the 1-cocycle $\gamma(a\otimes f) := \gamma_{a}(f)$, and $\psi$ is cohomologous to 
$\psi_1(a\otimes f, b\otimes g) = \lambda(adb - bda)\kappa(f,g)$. 
By \cite[Thm.4.7(ii)]{NW08} the map $(a,b) \mapsto \lambda(adb - bda)$ is a cyclic 1-cocycle,
which implies that $\lambda(dA) = 0$.
\end{proof}
\begin{Remark}\label{Rk:partCont1chain}
In fact, we have proven the slightly stronger statement that $\psi - \psi_{\lambda} = d \gamma$ for a $1$-cochain $\gamma \colon A\otimes \fk \rightarrow \R$
that is continuous in the $\fk$-variables. 
\end{Remark}

Applied to the ring $A = \C[z, z^{-1}]$ of Laurent polynomials, this immediately yields a proof of Theorem~\ref{ThmCocycleLoop}.
\begin{proof}[Proof of Theorem~\ref{ThmCocycleLoop}]
We identify the Laurent monomial $z^n \otimes f$ in $\C[z,z^{-1}] \otimes_{\R} \fk$ with the Fourier mode $t \mapsto e^{int}f$ in $L\fk_{\C}$, 
and denote the corresponding inclusion map
by $\iota \colon \C[z,z^{-1}] \otimes \fk \rightarrow L\fk_{\C}$. If $\psi$ is a continuous Lie algebra cocycle on $L\fk$, then 
its complexification $\psi_{\C}$ pulls back to a $\C$-linear cocycle $\iota^*\psi_{\C}$ on ${\C[z,z^{-1}] \otimes_{\R} \fk}$ that is 
continous in the $\fk$-variables.

For $A = \C[z, z^{-1}]$, the space $\Omega^1_{A}/dA$ is 1-dimensional with generator $[z^{-1}dz]$, so every linear map 
$\lambda \colon \Omega^1_{A}/dA \rightarrow \C$ is proportional to $\lambda(adz) = \frac{1}{2\pi i}\oint adz$.
By Lemma~\ref{Lem:NeebWagemannRefined}, every pullback $\iota^*\psi_{\C}$ of a continuous cocycle 
is cohomologous to $c\psi_{\lambda}$ for some $c\in \C$. Evaluating this on the real subalgebra $(\C[z,z^{-1}] \otimes_{\R} \fk) \cap L\fk$ (on which $\psi$
takes real values),  we find 
that $c\in \R$.

The complexification of the continuous cocycle $\psi_{\infty}$ on $L\fk$ from \eqref{eq:cocycleTheoremA} 
pulls back to a cocycle that is proportional to 
\[\psi_{\lambda}(a\otimes f, b \otimes g) = \Big(\frac{1}{2\pi i}\oint adb\Big) \;\kappa(f,g),\]
so there exists a constant $c$ such that $\iota^* (\psi_{\C} - c\psi_{\infty})$ is exact, the boundary of a 1-cochain $\gamma \colon \C[z,z^{-1}] \otimes \fk \rightarrow \C$
that is continuous in the $\fk$-variables.

To show that $\gamma$ extends to a continuous cochain, note that 
\begin{equation}\label{eq:get1from2}
\gamma(a \otimes \{f,g\}) = {\textstyle\frac{1}{2}}(\psi_{\C}(a\otimes f, \one\otimes g) - \psi_{\C}(a\otimes g, \one \otimes f)).
\end{equation}
Since $\fk$ is perfect and $\psi$ is continuous, the map $a \mapsto \gamma(a\otimes f)$ on $\C[z^{-1}, z]$ extends
to a continuous map 
$C^{\infty}(\bS^1, \C) \rightarrow \C$ for every fixed $f\in \fk$.
Since $f \mapsto \gamma(a\otimes f)$ is continuous by Remark~\ref{Rk:partCont1chain},
the bilinear map 
$(a, f) \mapsto \gamma(a\otimes f)$ is separately continuous. 
Since $C^{\infty}(\bS^1, \C)$ and $\fk$ are Fr\'echet spaces, separate continuity implies joint continuity, 
and $\gamma \colon C^{\infty}(\bS^1, \C) \otimes \fk \rightarrow \C$ extends to a continuous linear functional 
on $C^{\infty}(\bS^1, \C)\otimes_{\pi}\fk = L\fk_{\C}$  by the universal property of projective tensor product.
\end{proof}

\begin{Remark}
It is natural to conjecture that $H^2(\C[\bS^2]_0, \R) = \{0\}$ for the Poisson algebra $\C[\bS^2]_0$, and conceivably one could 
prove this directly along the lines of Theorem~\ref{Thm:InvBilS2Alg}, using the invariant bilinear form 
to convert cocycles in skew-symmetric derivations. From \cite{NW08} one would then immediately get a 
classification of the (not necessarily continuous) 2-cocycles on $\C[z^{-1}, z] \otimes \C[\bS^2]_0$, 
which in turn would imply Theorem~\ref{ThmCocycleLoop} because $\C[z^{-1}, z] \otimes \C[\bS^2]_0$ is a dense subalgebra of $L\fk_{\C}$.
\end{Remark}

 \subsection{Integral classes}
 
 Having classified the continuous 2-cocycles on $L\fk$, it remains to 
 determine which cocycles give rise to Lie algebra extensions that integrate to the group level. 
 For this, it is convenient to normalise the bilinear form on $C^{\infty}(\bS^2)$ as 
 \[
 \kappa_{\infty}(f,g) := \frac{12\pi}{\vol^3(\bS^2)}\int_{\bS^2}fg\mu.
 \]
The isomorphism between the Poisson algebra $C^{\infty}_0(\bS^2)$ and the Lie algebra $\X(\bS^2, \mu)$ of divergence free vector 
fields depend on the scaling of $\mu$. If we scale $\kappa_{\infty}$ as indicated above, then the induced bilinear form $\kappa_{\infty}$ on 
$\X(\bS^2, \mu)$ will be independent of this isomorphism.
 
 \subsubsection{Relation between integrality conditions for $L\su(2)$ and $L\fk$}
 By Smale's Theorem \cite{Smale1959}, the group of orientation preserving 
 diffeomorphisms $\mathrm{Diff}(\bS^2)_0$ is homotopy equivalent to $\mathrm{SO}(3)$.
 In turn, $K = \SDiff(\bS^2,\mu)$ is homotopy 
 equivalent to $\mathrm{Diff}(\bS^2)_0$  because the orbit map
 \[\mathrm{Diff}(\bS^2)_{0} \rightarrow \Omega^2(\bS^2)_1: \phi \mapsto \phi^*\mu\] is a smooth 
 principal $\SDiff(\bS^2,\mu)$ bundle over the contractible space $\Omega^2(\bS^2)_{1}$ of volume forms that integrate to one \cite{Ham82}.

It follows that the inclusion of Lie groups $L\mathrm{SO}(3) \hookrightarrow LK$ is a homotopy equivalence.
Since $\pi_1(\mathrm{SO}(3)) = \Z_2$, both $L\mathrm{SO}(3)$ and $LK$
have two connected components.
And since $\pi_2(\mathrm{SO}(3))$ is trivial, we have $\pi_1(L\mathrm{SO}(3)) \simeq \pi_1(\mathrm{SO}(3)\times \Omega\mathrm{SO}(3)) \simeq \Z/2\Z$,
and $\Z/2\Z \rightarrow L\mathrm{SU}(2) \rightarrow L\mathrm{SO}(3)_0$ is the simply connected cover. 

Since $\mathrm{SU}(2)\hookrightarrow \widetilde{K}_0$ is a homotopy equivalence, 
the group $L\widetilde{K}_0$ is the simply connected cover of $LK_0$,
and the inclusion $L\mathrm{SU}(2) \hookrightarrow L\widetilde{K}_0$ is a homotopy equivalence
between the two 1-connected Lie groups.

Let $G$ be a connected, simply connected locally convex Lie group with Lie algebra $\fg$.
If $\mathrm{U}(1) \rightarrow G^{\sharp} \rightarrow G$ is a central extension of locally convex Lie groups, 
then it differentiates to a continuous central extension $\R \rightarrow \fg^{\sharp} \rightarrow \fg$ of locally convex Lie algebras.
Up to strict isomorphism, continuous central extensions are characterized  
by the cohomology class $[\psi] \in H^2(\fg, \R)$ of the curvature cocycle 
$\psi(x, y) := [\sigma(x), \sigma(y)] - \sigma([x,y])$, derived from a continuous linear splitting 
$\sigma \colon \fg \rightarrow \fg^{\sharp}$ (these exist by the Hahn--Banach Theorem).

Conversely, by \cite{Ne02}, a continuous Lie algebra extension $\R \rightarrow \fg^{\sharp} \rightarrow \fg$
integrates to a locally convex Lie group extension of a 1-connected Lie group $G$ if and only 
if the \emph{period homomorphism} $P_{\psi} \colon \pi_2(G) \rightarrow \R$ takes integral values.
For sufficiently smooth classes $[\sigma] \in \pi_2(G)$, this is defined by 
$P_{\psi}([\sigma]) = \int_{\bS^2}\sigma^*\Psi$, where
$\Psi$ is the unique left-invariant 2-form on $G$ that agrees with $\psi$ on $T_{\one}G \simeq \fg$.

Since $L\mathrm{SU}(2) \hookrightarrow L\widetilde{K}_0$ is an inclusion of locally convex Lie groups, 
every continuous central extension of $L\widetilde{K}_0$
restricts to a continuous central extension of $L\mathrm{SU}(2)$. 
If the former gives rise to the cocycle 
\begin{eqnarray}\label{eq:LKcocycle}
	\psi_{\infty}(F,G) &:=& \int_{\bS^1}\kappa_{\infty}(F, \partial_{t}G) dt\\
	& = & \frac{12\pi }{\vol_{\mu}(\bS^2)^3}\int_{\bS^1}\Big(\int_{\bS^2} F\partial_t G \mu\Big) dt\nonumber
\end{eqnarray}
on $L\fk$, then we will see below that this pulls back to the cocycle
\begin{equation}\label{eq:Lso3cocycle}
	\psi_{\mathfrak{su}(2)}(X,Y) = -\frac{1}{2\pi} \int_{\bS^1}\kappa_{\mathfrak{su}(2)}(X,\partial_{t}Y)dt
\end{equation}
on $L\mathfrak{so}(3) \simeq L\su(2)$, where the invariant bilinear form is scaled by $\kappa_{\mathfrak{su}(2)}(h,h) = -2$ on the coroot
$h = \mathrm{diag}(i,-i)$. 
 
 By \cite[Thm.~4.4.1(iv)]{PS86}, the cocycle $c\psi_{\mathfrak{su}(2)}$ from \eqref{eq:Lso3cocycle} corresponds
  to a central extension $L\mathrm{SU}(2)^{\sharp} \rightarrow L\mathrm{SU}(2)$ of Lie groups 
 if and only if $c\in \Z$, so this is a \emph{necessary} condition for 
 \eqref{eq:LKcocycle} to come from a Lie group extension $LK^{\sharp}_0 \rightarrow LK_0$.
 But since $\iota \colon L\mathrm{SU}(2) \hookrightarrow L\widetilde{K}_0$ induces an isomorphism 
 $\iota_* \colon \pi_2(L\mathrm{SU}(2)) \rightarrow \pi_2(L\widetilde{K}_0)$, the integrality of the period 
 homomorphism for $c\psi_{\mathfrak{su}(2)}$ from \eqref{eq:Lso3cocycle} on $L\mathrm{SU}(2)$ implies integrality of the 
 period homomorphism of $c\psi_{\infty}$ from \eqref{eq:LKcocycle} for $L\widetilde{K}_0$. So $c\in \Z$ is a \emph{sufficient} condition as well.
 So we arrive at the following result:
 
 \begin{Theorem}\label{Thm:integralityCocycles}
 The cocycle \eqref{eq:LKcocycle} corresponds to a smooth central extension of $L\widetilde{K}_0$ by $\mathrm{U}(1)$
 if and only if $c_{\infty}\in \Z$.
 \end{Theorem}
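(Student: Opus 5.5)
The plan is to reduce everything to the Pressley--Segal integrality criterion for $L\mathrm{SU}(2)$, using the homotopy equivalence $\iota\colon L\mathrm{SU}(2)\hookrightarrow L\widetilde{K}_0$ established above. By Neeb's criterion \cite{Ne02}, the extension defined by $c_\infty\psi_\infty$ integrates to $L\widetilde{K}_0$ (which is $1$-connected) if and only if the period homomorphism $P_{c_\infty\psi_\infty}\colon \pi_2(L\widetilde{K}_0)\to\R$ takes values in $\Z$. Periods are natural under smooth homomorphisms: $P_{\iota^*\psi}=P_\psi\circ\iota_*$, because the left-invariant $2$-form of $\iota^*\psi$ is the pullback of that of $\psi$. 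Since $\iota_*$ is an isomorphism on $\pi_2$ (both groups are $1$-connected and homotopy equivalent), the period group of $c_\infty\psi_\infty$ coincides with the period group of $\iota^*(c_\infty\psi_\infty)$ on $L\mathrm{SU}(2)$. Hence integrality for $L\widetilde{K}_0$ is equivalent to integrality for $L\mathrm{SU}(2)$. By \cite[Thm.~4.4.1(iv)]{PS86}, the latter holds exactly when $\iota^*(c_\infty\psi_\infty)=c\,\psi_{\su(2)}$ with $c\in\Z$.

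The real content is therefore the claim announced above: $\iota^*\psi_\infty=\psi_{\su(2)}$ exactly, with the normalisation $\kappa_{\su(2)}(h,h)=-2$. Because both sides are loop cocycles $\int_{\bS^1}\beta(X,\partial_tY)\,dt$ for invariant forms $\beta$ on $\su(2)$, it suffices to compare $\kappa_\infty|_{V_1}$ with $-\frac{1}{2\pi}\kappa_{\su(2)}$ on one vector. I will compute this as follows.
\begin{enumerate}
\item Take $\mu$ to be the area form of the unit sphere, so that $\vol=4\pi$. Any other $\mathrm{SO}(3)$-invariant $\mu$ gives the same $\kappa_\infty$ on vector fields, by the rescaling remark preceding the theorem.
\item The coordinate function $z$ generates rotation about the $z$-axis with period $2\pi$. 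It therefore corresponds to an element $X\in\mathfrak{so}(3)$ with $\exp(2\pi X)=1$.
\item Under $\su(2)\simeq\mathfrak{so}(3)$, the element $X$ lifts to $\pm h/2$, since $\exp(2\pi\cdot h/2)=-1$ in $\mathrm{SU}(2)$. Thus $\kappa_{\su(2)}(X,X)=-\tfrac12$.
\item On the other side, $\int z^2\mu=4\pi/3$, so
\[
\kappa_\infty(z,z)=\frac{12\pi}{(4\pi)^3}\cdot\frac{4\pi}{3}=\frac{1}{4\pi}.
\]
This equals $-\frac{1}{2\pi}\cdot(-\tfrac12)$, which confirms the match.
\end{enumerate}
I also need to check that the Poisson bracket on $V_1$ matches the matrix bracket under this identification, rather than differing by a sign or a factor. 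This follows from the KKS formula \eqref{eq:KKSbracket} together with the sign convention of the Hamiltonian vector field. Any sign discrepancy would only flip the sign of $c$, which does not affect integrality. It can, however, be absorbed by composing with an outer automorphism $X\mapsto -X^{T}$, which preserves $\kappa_{\su(2)}$.

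The main obstacle is this constant bookkeeping: the conventions linking $\mu$, the Hamiltonian vector fields, the period of the flow and the lift to $\mathrm{SU}(2)$ all feed into the final factor. A second, smaller point is justifying that Neeb's period criterion applies to $L\widetilde{K}_0$. This holds because $L\widetilde{K}_0$ is a $1$-connected Fr\'echet--Lie group, and because smooth representatives of classes in $\pi_2$ exist, so the period may be computed on the $L\mathrm{SU}(2)$ side where everything is finite-dimensional and explicit.
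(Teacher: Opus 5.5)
Your proposal is correct and follows the paper's argument. The steps match: Neeb's period criterion, the $\pi_2$-isomorphism induced by $L\mathrm{SU}(2)\hookrightarrow L\widetilde{K}_0$, Pressley--Segal, and the normalisation $\iota^*\kappa_\infty=-\frac{1}{2\pi}\kappa_{\su(2)}$. The paper obtains this by summing over $x,y,z$ with $x^2+y^2+z^2=1$, while you evaluate on $z\leftrightarrow h/2$ alone, which is equivalent. Your worry about the bracket sign is unnecessary: pulling a bilinear form back along $-\phi$ gives the same result as along $\phi$, so not even the sign of $c$ changes. Incidentally, $X\mapsto -X^{T}$ is an inner automorphism of $\su(2)$, not an outer one.
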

 \begin{proof}
 The only thing left to show is that the cocycle \eqref{eq:LKcocycle} restricts to \eqref{eq:Lso3cocycle} 
 on the subalgebra $L\mathfrak{so}(3)$.
 Denote by $\iota \colon \mathfrak{so}(3) \hookrightarrow \X(\bS^2, \mu)$ the inclusion of Lie algebras that comes from the  infinitesimal action of $\mathfrak{so}(3)$ on $\bS^2$. The equation $i_{X_f}\mu = -df$ gives an isomorphism between $\X(\bS^2, \mu)$ 
 and $C_0^{\infty}(\bS)$, and in this way we obtain an inclusion $\iota \colon \mathfrak{so}(3) \hookrightarrow C^{\infty}_0(\bS^2)$. 
 (The inclusion in $C^{\infty}_0(\bS^2)$ changes if we replace $\mu$ by $\lambda \mu$, the inclusion in $\X(\bS^2, \mu)$ does not.)
 
 The pullback of the invariant bilinear form \[\kappa_{\infty}(f,g) = \frac{12\pi}{\vol_{\mu}(\bS^2)^3}\int_{\bS^2}fg\mu\]
along the inclusion $\iota \colon \mathfrak{so}(3)  \hookrightarrow C^{\infty}_0(\bS^2)$ is a multiple of the Killing form $\kappa_{\su(2)}$, which we 
normalise by $\kappa_{\su(2)}(h,h) = -2$ on the coroot $h = \mathrm{diag}(i, -i)$.
To determine the constant of proportionality, let $X, Y, Z$ be the standard basis of $\mathfrak{so}(3) \simeq \su(2)$, satisfying $[X,Y] = Z$, 
$[Y,Z] = X$, and $[Z, X] = Y$. Then $Z = h/2$, so $\kappa_{\su(2)}(Z, Z) = -1/2$, and by conjugation invariance we find  
\begin{equation}\label{eqSumsToOne1}
\kappa_{\su(2)}(X, X) + \kappa_{\su(2)}(Y, Y) + \kappa_{\su(2)}(Z, Z) = -3/2.
\end{equation}

Since $\kappa_{\infty}$ does not depend on the scaling of the $\SU(2)$-invariant volume form $\mu$, we 
choose the area on the unit sphere $\bS^2$ that is induced by the Euclidean metric, $\mu = \sin(\theta)d\phi\wedge d\theta$.
Then $\iota(Z) = \partial_{\phi}$, and since 
$i_{\partial_{\phi}}\mu = -d\cos(\theta)$, this is a Hamiltonian vector field $\partial_{\phi} = X_{z}$ with potential $z= \cos(\theta)$.
Since $x^2 + y^2 + z^2 = 1$ on $\bS^2$, we find
\begin{equation}\label{eqSumsToOne2}
\kappa_{\infty}(x,x) + \kappa_{\infty}(y,y) + \kappa_{\infty}(z,z) = \frac{12\pi}{\vol_{\mu}(\bS^2)^3}\int_{\bS^2} (x^2 + y^2 + z^2) \omega = \frac{3}{4\pi}.
\end{equation}
Comparing \eqref{eqSumsToOne1} to \eqref{eqSumsToOne2}, we find $\iota^* \kappa_{\infty} = -\frac{1}{2\pi} \kappa_{\su(2)}$.
For $\phi, \chi \in C^{\infty}(\bS^1, \R)$ and $X, Y \in \su(2)$, we then find
\begin{eqnarray*}
(L\iota)^*\psi_{\infty} (\phi X, \chi Y) &=& \int_{\bS^1}\phi \chi'  \kappa_{\infty}(\iota(X), \iota(Y)) dt\\
&=& -\frac{1}{2\pi}\int_{\bS^1}\phi \chi'  \kappa_{\mathfrak{su}(2)}(X, Y) dt\\
&=& -\psi_{\mathfrak{su}(2)}(\phi X, \chi Y),
\end{eqnarray*}
so 
\begin{equation} \label{eq:Compatible2withInfty}
(L\iota)^* \psi_{\infty} = -\psi_{\mathfrak{su}(2)}.
\end{equation}
 \end{proof}

\subsubsection{Projective unitary representations}

Our interest in integrability conditions is motivated by the projective unitary representation theory of $L\fk$.
Let $G$ be a locally convex Lie group and let $\overline{\rho} \colon G \rightarrow \mathrm{U}(\cH)$ be a projective unitary representation.
Then a vector $\psi \in \cH$ is called \emph{smooth} if the orbit map 
\[
	G \rightarrow \mathbb{P}(\cH)\colon g \mapsto \overline{\rho}(g)\psi
\]
is smooth, and we denote the space of smooth vectors by $\cH^{\infty} \subseteq \cH$. The projective representation is called 
\emph{smooth} if $\cH^{\infty}$ is dense in $\cH$ for the Hilbert space topology. 

For finite dimensional Lie groups this is equivalent to continuity in the strong operator topology, but for  
infinite dimensional Lie groups this is a sslightly stronger requirement, which enables one to define a Lie algebra representation 
of $\mathrm{Lie}(G)$ on $\cH^{\infty}$.
 
 Every continuous projective unitary representation of $G$ gives rise to a central extension 
 $G^{\sharp} = \{(g, U) \in G \times \mathrm{U}(\cH) \,;\, \overline{\rho}(g) = \overline{U}\}$, which is a central extension 
 of topological groups $\mathrm{U}(1) \rightarrow G^{\sharp} \rightarrow G$.
By \cite{JN19}, this extension is a locally convex Lie group extension if $\overline{\rho}$
is a smooth projective unitary representation, and the corresponding Lie algebra 
extension $\R \rightarrow \fg^{\sharp} \rightarrow \fg$ is characterized up to (strict) isomorphism 
by the cohomology class $[\psi] \in H^2(\fg, \R)$ of the curvature cocycle 
$\psi(x, y) := [\sigma(x), \sigma(y)] - \sigma([x,y])$ of a continuous linear splitting $\sigma \colon \fg \rightarrow \fg^{\sharp}$.

\begin{Corollary}
If $(\oline{\rho}, \cH)$ is a smooth projective unitary representation of the locally convex Lie group $L\SDiff(\bS^2, \mu)$, 
then the derived projective Lie algebra representation $(d\rho, \cH^{\infty})$ of $LC^{\infty}_{0}(\bS^2)$
satisfies \[[d\rho(F), d\rho(G)]  = d\rho(\{F,G\}) + ic_{\infty} \psi(F, G)\one,\]
where $\psi$ is cohomologous to \eqref{eq:LKcocycle}, and where $c_{\infty}\in \Z$.
\end{Corollary}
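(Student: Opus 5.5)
The plan is to combine the Janssens--Neeb result quoted just before the statement with Theorem~\ref{ThmCocycleLoop} and Theorem~\ref{Thm:integralityCocycles}. First, restrict $\overline{\rho}$ to the identity component $LK_0$ and pull it back to the simply connected cover $L\widetilde{K}_0$ along the covering map. This pullback is again a smooth projective unitary representation, because the covering is a local diffeomorphism, so smoothness of orbit maps and the space $\cH^\infty$ are unchanged. Its derived Lie algebra representation on $\cH^\infty$ is the same as that of $\overline{\rho}$, since both groups have Lie algebra $L\fk = LC^\infty_0(\bS^2)$.

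Next, by \cite{JN19}, the central extension $G^\sharp = \{(g,U) : \overline{\rho}(g) = \overline{U}\}$ of $G = L\widetilde{K}_0$ is a locally convex Lie group extension by $\U(1)$. Its Lie algebra $\fg^\sharp$ can be realised concretely on $\cH^\infty$: elements of $\fg^\sharp$ act as skew-adjoint operators $d\rho^\sharp(x)$, and the central generator acts as $i\one$. Choose a continuous linear splitting $\sigma\colon L\fk\to\fg^\sharp$ (Hahn--Banach) and set $d\rho(F) := d\rho^\sharp(\sigma(F))$. Then $[d\rho(F), d\rho(G)] - d\rho(\{F,G\})$ is the image of the curvature $\sigma$-commutator, a central element, so it equals $i\psi'(F,G)\one$ for a continuous $2$-cocycle $\psi'$ on $L\fk$. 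Changing the splitting changes $\psi'$ by a continuous coboundary, which corresponds to the freedom in choosing $d\rho$ up to scalars.

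By Theorem~\ref{ThmCocycleLoop}, $\psi' = c_\infty\psi + \delta\gamma$ with $c_\infty\in\R$ and $\psi$ the cocycle \eqref{eq:LKcocycle}; equivalently, $\psi'$ itself is cohomologous to $c_\infty$ times \eqref{eq:LKcocycle}, which gives the stated form. Finally, since $\fg^\sharp$ is the Lie algebra of the genuine $\U(1)$-extension $G^\sharp$ of $L\widetilde{K}_0$, the class $c_\infty[\psi_\infty]$ integrates. Theorem~\ref{Thm:integralityCocycles} then forces $c_\infty\in\Z$.

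The main obstacle is bookkeeping rather than substance. One must check that the cocycle normalisation matches: the factor $i$ and the sign convention of $\U(1)$ with Lie algebra $i\R$ versus $\R$ must agree with the normalisation under which Theorem~\ref{Thm:integralityCocycles} states integrality. One must also check that passing to the cover does not affect the Lie algebra extension. If the normalisations differ by a factor $2\pi$, I would absorb it into the definition of $\psi$, since the statement only asserts that $\psi$ is cohomologous to \eqref{eq:LKcocycle} with the integer $c_\infty$ determined by the period homomorphism.
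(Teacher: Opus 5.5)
Your proposal is correct and is essentially the argument the paper leaves implicit: \cite{JN19} turns the smooth projective representation into a Lie group extension whose curvature cocycle is classified by Theorem~\ref{ThmCocycleLoop}, and Theorem~\ref{Thm:integralityCocycles}, applied to the cover $L\widetilde{K}_0$ as you correctly arrange, then forces $c_\infty\in\Z$. On your bookkeeping worry, no factor $2\pi$ needs to be absorbed. It could not be absorbed into $\psi$ anyway, since that would move the integrality lattice rather than the cohomology class. The term $ic_\infty\psi\one$ means $\U(1)$ acts by $e^{i\theta}$, which is the \cite{PS86} normalisation on which Theorem~\ref{Thm:integralityCocycles} rests.
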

 
 \section{Twisted loop algebras}\label{sec:3twisted}

The action of $\mathrm{SO}(2)$ on $\bS^1$ by rotation gives rise to a semidirect product 
\[
\mathrm{SO}(2)\ltimes L\SDiff(\bS^2),
\]
where $\mathrm{SO}(2)$ acts on $L\SDiff(\bS^2)$ by conjugation,
$T_{\tau} \phi_{t} T_{-\tau}(x) = \phi_{t - \tau}(x)$. At the Lie algebra level 
we find one extra generator $\td$ that acts by differentiation in the $\bS^1$-direction,
\[
	\R \td \ltimes LC^{\infty}_{0}(\bS^2)
\]
with 
\[[\td, F]_{t}(x) = \frac{d}{dt}F_{t}(x).\]
It seems natural to interprets $\bS^1$ as the time direction, and
$\td$ as the generator of time translations -- which becomes the Hamiltonian when the Lie algebra is represented 
on a Hilbert space.

However, at least in 
certain contexts, this picture has to be slightly adjusted. 
The $3+1$ dimensional Anti de Sitter space $\mathrm{AdS}_{3,1}$ can be realised inside $\R^{5}$ as
\[
\mathrm{AdS}_{3,1} = \{(x,y,z,u,v) \in \R^5\,;\, x^2 + y^2 + z^2 - u^2 - v^2 = - L^2\},
\]
with Lorentzian metric inherited from the metric with signature $(+,+,+,-,-)$ on $\R^5$.
Its isometry group $\mathrm{SO}(3,2)$ acts by conformal transformations on the boundary $Q_{2,1}$ of the 
conformal completion of $\mathrm{AdS}_{3,1}$, 
the projective quadric 
\[
Q_{2,1} = \{[x,y,z,u,v] \in \mathbb{P}(\R^5) \,;\, x^2 + y^2 + z^2 - u^2 - v^2 = 0\}.
\]
In particular $\mathrm{SO}(3)$ acts on the variables $(x,y,z)$ and $\mathrm{SO}(2)$ on the variables $(u,v)$.
Since $x^2 + y^2 + z^2 = u^2 + v^2$, there exist precisely two representatives 
of the ray $[x,y,z,u,v] \in \mathbb{P}(\R^5)$ that satisfy $x^2 + y^2 + z^2 = 1 = u^2 + v^2$.
So $Q_{3,1}$ is diffeomorphic to  $(\bS^1 \times \bS^2)/\Z_2$, with the $\Z_2$-action 
$(t, x) \mapsto (-t, -x)$.
This is a sphere bundle over $\bS^1/\Z_2$. 
In the following it will be convenient to distinguish the circle $\bS^1/\Z_2$ from its twofold cover 
$\bS^1 \subset \C$, and write $x$ for $(x,y,z) \in \bS^2$ and 
$z = e^{2\pi it}$ for $z = u + iv$ in $\bS^1$ if both are normalised to one. The
bundle map $Q_{3,1} \rightarrow \bS^1/\Z_2$ then becomes
\[
\pi\colon Q_{3,1} \rightarrow \bS^1/\Z_2, \quad \pi([\vec{x},e^{2\pi it}]) = e^{4\pi it}
\]
if we identify $\bS^1/\Z_2$ with the circle in the complex plane.
So a full rotation in $\mathrm{SO}(2)$ traverses the circle $\bS^1/\Z_2$ \emph{twice},
and 
the action of $\mathrm{SO}(2)$ on $Q_{3,1}$ gives rise to an Ehresmann connection on $Q_{3,1} \rightarrow \bS^1/\Z_2$
that has nontrivial holonomy, namely the inversion $P(x)= -x$ on the sphere $\bS^2$.

\subsection{The structure of twisted loop algebras}

 To describe the loop groups in this setting, it is convenient to consider the (slightly) more general case of a smooth sphere bundle 
 $\pi \colon F \rightarrow \bS^1$.
Then $F$ is isomorphic to $F_{\Phi}:= (\R \times \bS^2)/ \Z$, the sphere bundle that arises for a $\Z$-action $(\tau, x) \mapsto (\tau + n, \Phi^{-n}(x))$ on $\R \times \bS^2$, determined by a diffeomorphism 
$\Phi$ of $\bS^2$. 
We will assume that either $\Phi^*\mu = \mu$ or $\Phi^*\mu = -\mu$. 
An example of the first case is the identity, an example of the second case is the inversion $P(x) = -x$.
If $\Phi^*\mu = \mu$, then 
every fibre $F_t$ comes equipped with a volume form $\mu_t$. If $\Phi^*\mu = -\mu$, then this volume form $\mu_t$ is well-defined 
only up to a sign. 

For this reason we replace the volume form $\mu \in \Omega^2(\bS^2)$ by a density $|\mu| \in \Gamma(\bS^2, |\Lambda|^{-2}T\bS^2)$, 
for example by tensoring $\mu$ with an orientation class. This has the advantage of slightly enlarging the group of automorphisms 
(the inversion does satisfy ${P^*|\mu| = |\mu|}$) while leaving the Lie algebra of infinitesimal automorphisms unchanged, $\X(\bS^2, \mu) = \X(\bS^2, |\mu|)$.
We define the  \emph{twisted loop group} as follows.

\begin{Definition}
The twisted loop group $L_{\Phi}\SDiff(\bS^2, \mu)$ is the group of all vertical automorphisms $\phi$ of the sphere bundle 
$F_{\Phi} \rightarrow \bS^1$ such that for each $t\in \bS^1$, the map $\phi_{t} \colon F_t \rightarrow F_t$ on the fibre
$F_{t} \simeq \bS^2$
preserves the orientation, and it preserves the density $|\mu|_t$.
\end{Definition}

\paragraph{The untwisted case}
If $\Phi^*\mu = \mu$, then $F \rightarrow \bS^1$ is isomorphic to the trivial sphere bundle.
Indeed, by Smale's theorem \cite{Smale1959} the group $\SDiff(\bS^2, \mu)$ 
has the homotopy type of $\mathrm{SO}(3)$, and in particular it is connected. 
Since $\Phi \in \SDiff(\bS^2, \mu)$, we can choose a smooth path of volume preserving diffeomorphisms 
$\Phi_{t}$ with $\Phi_{0} = \mathrm{Id}_{\bS^2}$ and $\Phi_{t+n} = \Phi^{-n} \circ \Phi_t$, and the map 
$[t, x] \mapsto [t, \Phi_{t}(x)]$ is a well-defined bundle isomorphism $\bS^1 \times \bS^2 \rightarrow F$.
 This means that the group $L_{\Phi}\SDiff(\bS^2,\mu)$ is isomorphic to the untwisted loop group $L\SDiff(\bS^2,\mu)$, 
 but note that the $\mathrm{SO}(2)$-action on the trivial bundle does not coincide with the $\R$-action on $(\R \times \bS^2)/\Z$.
 The generator of the $\R$ action is not $\td = \frac{d}{dt}$, but 
 \[
 \td + A
 \]
 with $A_t = (\frac{d}{dt}\Phi^{-1}_{t}) \circ \Phi_{t}$ the (left) logarithmic derivative of the path $\Phi_t$.
 (One can think of $A$ as an Ehresmann connection on $F\rightarrow \bS^1$, with holonomy governed by $\Phi_t$.)
 Since $A \in LC^{\infty}_0(\bS^2)$, the Lie algebra of the semidirect product 
 $\R \ltimes L_{\Phi}\SDiff(\bS^2,\mu)$ is isomorphic to the untwisted Lie algebra 
 $\R \td \ltimes LC^{\infty}_0(\bS^2)$, but the generator of the $\R$-action is shifted by 
 $A\in LC^{\infty}_0(\bS^2)$.

\paragraph{The twisted case}
The main example for the twisted case is $\Phi = P$, the inversion of the sphere, $P(x) = -x$. Indeed, 
any two diffeomorphisms $\Phi, \Phi'$ with $\Phi^*\mu = (\Phi')^*\mu = -\mu$ differ by an element $\Phi'\Phi^{-1} \in \SDiff(\bS^2,\mu)$, 
so as before we can choose a smooth path $\Phi_t$ in $\SDiff(\bS^2,\mu)$ with $\Phi'\Phi_{t+1}\Phi^{-1} = \Phi_t$ for all $t\in \R$ and 
$\Phi_0 = \mathrm{Id}_{\bS^2}$. This implements an isomorphism $F_{\Phi} \rightarrow F_{\Phi'}$ of sphere bundles by $[t,x] \mapsto [t, \Phi_{t}(x)]$. 
As before this induces an isomorphism of Lie algebras, so we may assume $\Phi = P$ without loss of generality in the orientation-reversing case. 
Note however that the generator of the infinitesimal $\R$-action $\frac{d}{d\tau}$ from $F_{\Phi}$ is 
represented in the model for $F_P$ by
\[
{\textstyle \frac{d}{d\tau}} + A,
\]
where again the left logarithmic derivative $A_t$ of $\Phi_t$ can be interpreted as an Ehresmann connection on $F_P\rightarrow \bS^1$.

The twisted loop group $L_P\SDiff(\bS^2, \mu)$ consists of smooth functions $\phi \colon \R \rightarrow \SDiff(\bS^2, \mu)$
with $\phi_{t+1} = P^{-1} \circ \phi_{t} \circ P$. 
Equivalently, it is isomorphic to the subgroup $L\SDiff(\bS^2,\mu)^{\Z_2}$ that is invariant under 
the $\Z_2$-action $(e^{it}, x) \mapsto (-e^{it}, -x)$. In this identification, the circle $\bS^1$ in $L\SDiff(\bS^2,\mu)$ 
acts as a twofold cover of the circle $\R/\Z$ in $L_P\SDiff(\bS^2, \mu)$.
The Lie algebra $L_P\fk$ of $L_P\SDiff(\bS^2, \mu)$ consists of smooth functions 
$\xi \colon \R \rightarrow \X(\bS^2, \mu)$ such that $\xi_{\tau + 1} = TP^{-1} \circ \xi_{\tau} \circ P$.

We can identify $\X(\bS^2, \mu)$ with $C^{\infty}_0(\bS^2)$, but if we do so, we need to keep in mind that the identification 
depends on $\mu$. If $X_f$ is the hamiltonian vector field with potential $f$, then 
$\Phi_* X_{f} := T\Phi^{-1} \circ X_{f} \circ \Phi$ is the hamiltonian vector field with potential $\Phi^*f$ if $\Phi$ respects the orientation, 
and $-\Phi^*f$ if $\Phi$ reverses the orientation. In particular, the inversion $P(x) = -x$ gives rise to the 
automorphism 
\begin{equation}
\alpha_{P}(f) = -P^*f
\end{equation}
 that satisfies $\alpha_P(q) = (-1)^{\mathrm{deg}(q)+1}$ on homogeneous 
polynomials $q$ of degree $\mathrm{deg}(q)$. 
In this picture the twisted loop algebra can be identified with
\[
L_{P}\fk = C^{\infty}_{0}(\bS^1\times \bS^2)^{\Z_2},
\]
the smooth functions $F \colon \bS^1 \times \bS^2 \rightarrow \R$ that integrate to zero on every 2-sphere $\{z\} \times \bS^2$, 
and satisfy $F(-z,-x) = - F(z, x)$. 
The generator of the $\mathfrak{so}(2)$-action is 
$\td F(e^{it}, x) = \frac{\partial}{\partial t} F(e^{2\pi it}, x)$. Note that this corresponds to $\td \xi = 2\frac{d}{d\tau}\xi$ in the 
picture where sections are represented by 
$(\R \times \X(\bS^2, |\mu|))^{\Z}$. Indeed, the base of the fibre bundle is $\bS^1/\Z_2$ in the first and 
$\R/\Z$ in the second picture.

\paragraph{An algebraic model}

The inversion of the sphere gives an automorphism 
$\alpha_{P}$ of the Poisson algebra $\fk = C^{\infty}_{0}(\bS^2)$ that is of order two,
that is, a $\Z^2$-grading where $\fk[0]$ is even and $\fk[1]$ is odd under the inversion. 
Since the automorphism is $\mathrm{SO}(3)$-invariant, the grading is invariant as well.
The even part $\fk[0]$ contains the homogeneous polynomials of odd degree, 
and the odd part $\fk[1]$ contains the homogeneous polynomials of even degree except the constants.
The complexification ${(L_P\fk) \otimes_{\R} \C}$ contains the dense subalgebra 
\[
\widehat{\fk}^{P}_{\C} = \C \td \ltimes \bigoplus_{\substack{n, k \in \Z\\k >0\\ n+k \;\mathrm{odd}}} z^n \otimes V_{k},
\]
where $V_k$ are the harmonic homogeneous polynomials of degree $k$.
The bracket is graded in $n$, but only filtered in $k' := k-1$. In particular the subspace $\one \otimes V_1$ 
is a Lie subalgebra, canonically isomorphic to $\mathfrak{so}(3)$.
%

\subsection{Twisted current algebras}

Although our motivation comes from sphere bundles over the circle $\bS^1$, it is convenient for the 
classification of 2-cocycles to consider locally trivial 2-sphere bundles $\pi \colon F \rightarrow M$
over a (not necessarily compact) manifold $M$.
We call ${F \rightarrow M}$ an \emph{unoriented symplectic bundle} if it can be covered by local trivialisations 
$\Phi_i \colon \pi^{-1}(U_i) \rightarrow U_i \times \bS^2$
over $\{U_i\}_{i\in \mathcal{I}}$ 
for which the transition functions $g_{ji} \colon U_{j} \cap U_{i} \rightarrow \mathrm{Diff}(\bS^2)$
take values in 
\[
\SDiff(\bS^2, |\mu|) = \{\phi \in \Diff(\bS^2)\,;\, \phi^* \mu = \pm \mu\}.
\]

This is a slight relaxation of the notion of a symplectic bundle, where the structure group would be reduced to $\SDiff(\bS^2, \mu)$.
From the density $|\mu| \in \Gamma(\bS^2, |\Lambda|^{-2}TM)$ on the 2-sphere, we obtain a 
\emph{vertical density} $|\nu|$ on $F$ that is equivalent to $|\mu|$ on every fibre.
Explicitly, let $T^vF$ be the vertical tangent bundle, and let $|\Lambda|^{-2}T^vF$ be the 
real line bundle associated to the principal $\mathrm{Gl}(2,\R)$-bundle $\mathrm{Fr}(T^vF)$ of vertical frames along the 
character $\chi(g) = |\mathrm{det}(g)|^{-1}$. 
Since the structure group preserves the density $|\mu|$ on $\bS^2$, 
we obtain a well-defined global section $|\nu| \in \Gamma(F, |\Lambda|^{-2}T^vF)$
by setting $|\nu| = \Phi^*_i p_2^* |\mu|$ in local coordinates, where $p_2 \colon U_i \times \bS^2 \rightarrow \bS^2$
is the projection on the second factor.

The triple $(F, M, |\nu|)$ gives rise to the Lie algebra 
\[
\fg_{c}(M) = \{X \in \Gamma_{c}(F, T^vF) \,;  L_{X} |\nu| = 0 \}.
\] 
Alternatively, since $\SDiff(\bS^2, |\mu|)$ acts by automorphisms on $\fk = \X(\bS^2, \mu)$, we can also 
view 
\[\fg_{c}(M) = \Gamma_{c}(M, \mathfrak{K})
\] 
as the Lie algebra of compactly supported sections of a Lie algebra bundle 
$\mathfrak{K} \rightarrow M$ with typical fibre $\fk = \X(\bS^2, \mu)$.
In this picture, the fibre over $x\in M$ is the Lie algebra $\mathfrak{K}_{x} = \X(F_x, |\nu|)$.

For $M = \bS^1$ we of course recover the twisted loop algebras:
the sphere bundle $F = {(\R \times \bS^2)/\Z}$ over $\bS^1$ that arises from the $\Z$-action $(\tau, x) \mapsto (t+n, \Phi^{-n}(x))$ 
inherits the vertical density $p^*|\mu|$ from $\R \times \bS^2$ (because $\Phi^*|\mu| = |\mu|$), 
and the Lie algebra $\fg(\bS^1)$ is isomorphic to $L_{\Phi}\fk$. 

For the trivial bundle $F = M \times \bS^2$, we recover the Lie algebra
$\fg_c(M) = C^{\infty}_c(M, \fk)$ of compactly supported smooth functions with values in the Fr\'echet--Lie algebra 
$\fk = \X(\bS^2, |\mu|)$. 
In general the map $U \mapsto \fg_{c}(U)$ on open subsets of $M$
 is a flabby cosheaf (in the sense of \cite{Bredon1968})  of Lie algebras.
The local trivialisations give rise to Lie algebra isomorphisms $\fg_{c}(U) \simeq C^{\infty}_{c}(U, \fk)$ 
for sufficiently small open subsets $U \subseteq M$.


If we equip $\bS^2$ with a density $|\mu|$ rather than a volume form $\mu$, we have to slightly revise the notion of a potential.
Indeed, since $\X(\bS^2, \mu) = \X(\bS^2, -\mu)$ we can simply write $\X(\bS^2, |\mu|)$ for the Lie algebra of vector fields
which preserves $|\mu|$ with either orientation. But although both Lie algebras are isomorphic to $C^{\infty}_0(\bS^2)$, 
the isomorphism depends on the choice of orientation: if a vector field $X$ has potential $f$ with respect to $\mu$, 
$i_{X}\mu = df$, then the same vector field $X$ has potential $-f$ with respect to $-\mu$.
Rather than functions on $\bS^2$, we therefore think of the potentials as sections of the real line bundle $\mathcal{L} \rightarrow \bS^2$
defined by $\mathcal{L} := \mathrm{Fr}(\bS^2)\times_{\sigma}\R$, 
where $\sigma \colon \mathrm{Gl}(2,\R) \rightarrow \{\pm 1\}$ is the sign of the determinant.
Since $\mathcal{L} \otimes \mathcal{L}$ is trivial, $fg$ is an honest function, and 
\[
\kappa(X_f, X_{g}) = \int_{\bS^2} fg |\mu|
\]
is an invariant bilinear form on $\X(\bS^2, |\mu|)$. From this description it is clear that $\kappa$ is invariant not just under $\SDiff(\bS^2, \mu)$, but 
even under the group $\SDiff(\bS^2, |\mu|)$
of (not necessarily orientation preserving) automorphisms of $(\bS^2, |\mu|)$.
By contrast, the Poisson bracket $\{f,g\}$ is not as a function on $\bS^2$ but a section of $\mathcal{L} \rightarrow \bS^2$, 
namely the unique potential for the vector field $[X_f, X_g]$.
The formula $L_{X_f}g = \{f,g\}$ remains true if both sides are interpreted as sections of $\mathcal{L}$.


If we define $\mathcal{L} \rightarrow F$ as the real line bundle $\mathcal{L} := \mathrm{Fr}(T^vF)\times_{\sigma}\R$
associated to the vertical frame bundle along the sign of the determinant, then every $X\in \fg_{c}(M)$ gives rise to a potential 
$F\in \Gamma_{c}(F, \mathcal{L})$ with $i_{X}|\nu| = - d^{v}F$, where $d^v$ denotes the `vertical' de Rham differential on the fibres of $F$.
This is well defined because a choice of orientation changes signs on both sides. 
Since $FG$ is an honest function, fibre integration yields  a bilinear form 
\[
\kappa(X_{F},Y_{G}) = \fint FG |\nu|
\]
on $\fg_c(M)$ with values in $C^{\infty}_{c}(M)$. It is invariant in the sense that 
\[\kappa([X_{F}, X_{G}], X_H) + \kappa(X_G, [X_{F}, X_{H}]) = 0.\]

Finally, let $\lambda$ be a closed current on $M$, i.e.\ a continuous linear functional $\lambda \colon \Omega^1_{c}(M) \rightarrow \R$
that vanishes on the exact 1-forms $d\Omega^0_{c}(M)$. 
Let $\nabla$ be an Ehresmann connection on $F \rightarrow M$ that is compatible with $|\nu|$, 
in the sense that the holonomy along a path $\gamma$ from $x \in M$ to $y\in M$ 
induces a diffeomorphism $\Gamma(\gamma) \colon F_x \rightarrow F_y$ that respects the vertical density, 
$\Gamma(\gamma)^* |\nu|_{F_y} = |\nu|_{F_x}$.
Then the fibre integral $ \fint F \nabla G |\nu|$ is a compactly supported 1-form on $M$, and
\[
\psi_{\lambda, \nabla}(X_{F}, Y_{G}) = \lambda \Big( \fint F \nabla G |\nu|\Big)
\]
is a 2-cocycle on $\fg_c(M)$.
The cohomology class of $\psi_{\lambda, \nabla}$ does not depend on the choice of connection. Indeed, any other connection $\nabla'$
differs from $\nabla$ by a 1-form $A \in \Omega^1(F, T^vF)$ that is horizontal, and has the property that for every 
$v_x\in T_xM$, the vector field 
$A(v_x) \colon f \mapsto A_f(v_x)$ on $\pi^{-1}(x)$ preserves the vertical density. So 
$A(v_x) \in \X(\pi^{-1}(x), |\nu|)$ has a potential $a(v_x)$, a section of $\mathcal{L}$ such that 
$i_{X_{A(v_x)}}|\nu| = - d^v a(v)$.
It follows that for $F, G \in \Gamma_{c}(F, \mathcal{L})$, we have
\begin{eqnarray*}
\psi_{\lambda, \nabla'}(X_{F}, X_{G}) - \psi_{\lambda, \nabla}(X_{F}, X_{G})
&=& \lambda \Big( \fint F L_{A} G |\nu|\Big)\\
&=& \lambda \Big(\fint \{F, G\} a |\nu|\Big)
\end{eqnarray*}
is the coboundary of $\eta(X_{F}) = \lambda(\fint F a |\nu|)$.
\begin{Theorem}\label{ThmCocycleTwistedBundles} The continuous cohomology $H^2(\fg_{c}(M), \R)$ is isomorphic to 
the continuous dual $(\Omega^1_{c}(M)/d\Omega^0_{c}(M))'$. Every continuous 2-cocycle on $\fg_c(M)$ is cohomologous to 
\begin{equation}\label{eq:cocycleLambda}
\psi_{\lambda, \nabla}(X_F,X_G) = \lambda\Big( \fint F\nabla G |\nu| \Big)
\end{equation}
for a closed current $\lambda \colon \Omega^1_{c}(M) \rightarrow \R$. 
\end{Theorem}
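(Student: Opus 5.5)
The plan is to localise the problem with the cosheaf structure $U \mapsto \fg_c(U)$, settle it on trivialising open sets using Lemma~\ref{Lem:NeebWagemannRefined}, and then glue the local pieces back together.

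\textbf{Local classification.} Let $\psi$ be a continuous 2-cocycle on $\fg_c(M)$, and fix a compatible connection $\nabla$. Choose a cover $\{U_i\}$ of $M$ by contractible coordinate balls over which $F$ is trivial. On each $U_i$ we then have $\fg_c(U_i)\simeq C^\infty_c(U_i,\fk)$, and we may take $\nabla$ to be the trivial connection there. The algebraic tensor product $C^\infty_c(U_i)\otimes\fk$ is a non-unital version of the setting of Lemma~\ref{Lem:NeebWagemannRefined}. I would apply that lemma with $A=C^\infty(U_i)$ and a cutoff argument, or better with $A=C^\infty_c(U_i)\oplus\R\one$.

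The conclusion is that the restriction of $\psi$ to $C^\infty_c(U_i)\otimes\fk$ is cohomologous to
\[
a\otimes f,\ b\otimes g\ \mapsto\ \lambda_i(a\,db)\,\kappa(f,g),
\]
modulo the coboundary of a 1-cochain that is continuous in the $\fk$-variables (Remark~\ref{Rk:partCont1chain}). Here $\lambda_i$ vanishes on exact forms.

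Exactly as in the proof of Theorem~\ref{ThmCocycleLoop}, I would use formula~\eqref{eq:get1from2}, perfectness of $\fk$, and joint continuity on Fr\'echet (LF) spaces. This shows that $\lambda_i$ is a continuous current on $\Omega^1_c(U_i)$ and that the 1-cochain extends continuously to $C^\infty_c(U_i,\fk)=C^\infty_c(U_i)\hat\otimes\fk$. Hence $\psi|_{\fg_c(U_i)}=\psi_{\lambda_i,\nabla}+d\gamma_i$.

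\textbf{Gluing.} Formula~\eqref{eq:get1from2} also shows that $\gamma_i$ is determined by $\psi$ itself. Indeed, since $\fk$ is perfect, $\gamma_i(X)$ is expressed through values of $\psi$ minus $\psi_{\lambda_i,\nabla}$. Consequently:
\begin{itemize}
\item On overlaps $U_i\cap U_j$, the currents $\lambda_i$ and $\lambda_j$ agree. Restricting to $C^\infty_c(U_i\cap U_j)\otimes\fk$, the classes must coincide, and $\lambda$ is recovered from the class because $\kappa\neq 0$ and the Koszul map is injective (Proposition~\ref{Prop:Koszul}).
\item Once the $\lambda_i$ agree, the $\gamma_i$ also agree on overlaps.
\end{itemize}
Using a partition of unity $\chi_i$, define $\lambda(\alpha)=\sum_i\lambda_i(\chi_i\alpha)$. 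This is a closed current on $M$ restricting to each $\lambda_i$, and likewise $\gamma(X)=\sum_i\gamma_i(\chi_iX)$ is a continuous 1-cochain. The cocycle $\psi':=\psi-\psi_{\lambda,\nabla}-d\gamma$ then vanishes on each $\fg_c(U_i)\times\fg_c(U_i)$.

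\textbf{The main obstacle.} It remains to show $\psi'=0$. We need $\psi'(X,Y)=0$ when $X$ and $Y$ are supported in different charts, in particular when their supports are disjoint. For this I would prove locality, i.e.\ that $\psi'(X,Y)=0$ whenever $\mathrm{supp}X\cap\mathrm{supp}Y=\emptyset$. Using perfectness of $\fk$ fibrewise, write $Y=\sum[Y_k,Z_k]$ with $Y_k,Z_k$ supported near $\mathrm{supp}\,Y$, and apply the cocycle identity:
\[
\psi'(X,[Y_k,Z_k]) = -\psi'(Y_k,[Z_k,X])-\psi'(Z_k,[X,Y_k]) = 0,
\]
because the brackets vanish. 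For supports that overlap but lie in different charts, refine with the partition of unity so that each pair $\chi_iX,\chi_jY$ lies in a common chart whenever the supports meet. This is possible by choosing the cover with a Lebesgue-number refinement, after which bilinearity gives $\psi'=0$.

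\textbf{Independence and the isomorphism.} Finally, the class determines $\lambda$. If $\psi_{\lambda,\nabla}$ is exact, restricting to a chart and invoking Lemma~\ref{Lem:NeebWagemannRefined} gives $\lambda=0$ there, hence everywhere. Together with the independence of the connection shown above, this yields the isomorphism $H^2(\fg_c(M),\R)\cong(\Omega^1_c(M)/d\Omega^0_c(M))'$.

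I expect the delicate step to be the continuity bookkeeping in the local step on noncompact $U_i$ (LF-space tensor products). The gluing argument itself should be routine once locality is established.
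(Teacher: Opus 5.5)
The overall plan matches the paper's: classify on trivialising opens using Lemma~\ref{Lem:NeebWagemannRefined} with a unit adjoined, then glue. The gap is in your local step, and it is more than continuity bookkeeping.

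Lemma~\ref{Lem:NeebWagemannRefined} gives you $\lambda_i$ as a linear functional on $\Omega^1_A/dA$. These are the \emph{algebraic} K\"ahler differentials of $A=C^\infty_c(U_i)\oplus\R\one$. That module is much larger than $\Omega^1_c(U_i)$; for instance, the chain rule $d(\varphi\circ b)=(\varphi'\circ b)\,db$ need not hold in it for non-polynomial $\varphi$. So to call $\lambda_i$ a current you must show that $(a,b)\mapsto\lambda_i(a\,db)$ factors through $a\,db\in\Omega^1_c(U_i)$.

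Formula~\eqref{eq:get1from2} does not give this. It only makes $\gamma_i$ continuous, and hence makes $(a,b)\mapsto\lambda_i(a\,db)$ jointly continuous. In Theorem~\ref{ThmCocycleLoop} the problem never came up, because $A$ was the Laurent polynomials, whose $\Omega^1_A/dA$ is one-dimensional.

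The missing idea is locality, which is how the paper proves the local case (Lemma~\ref{Thm:OnManifolds}):
\begin{itemize}
\item Continuous cocycles on $C^\infty_c(U,\fk)$ are diagonal, so $\check\lambda\colon b\mapsto\lambda(\,\cdot\,db)$ is a support-nonincreasing continuous map from test functions to distributions.
\item By Peetre's theorem it is therefore a differential operator.
\item The Leibniz rule gives $\delta_\alpha\delta_\beta\check\lambda=0$, so the operator has order at most one.
\item Taking $b=\one$ kills the zeroth-order part.
\end{itemize}
Hence $\lambda(a\,db)=\phi^\mu_1(a\,\partial_\mu b)$ for distributions $\phi^\mu_1$, which is a continuous closed current. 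Polynomial approximation combined with continuity would also work, but locality is needed either way.

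You also use locality later than you need it. To work with $A=C^\infty_c(U_i)\oplus\R\one$ you must first extend $\psi$ by setting $\psi(a\otimes f,\one\otimes g):=\psi(a\otimes f,\chi\otimes g)$, with $\chi=1$ near $\mathrm{supp}\,a$. This is independent of $\chi$, and is again a cocycle, only because $\psi$ is diagonal; this is Lemma~\ref{Lemma:appendIdentity}. Your locality argument, applied to $\psi$ itself, therefore belongs at the start of the proof.

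Once the local step is repaired, your gluing is sound. It is a hands-on version of what the paper obtains by citing \cite{JW13}: the local identity axiom for $U\mapsto H^2(\fg_c(U),\R)$, the sheaf property of closed currents, and the fact that a presheaf monomorphism which is locally an isomorphism is an isomorphism. A few small repairs there:
\begin{itemize}
\item On noncompact $M$, use a star refinement rather than a Lebesgue number.
\item Replace the exact decomposition $Y=\sum[Y_k,Z_k]$ by density of brackets in $\fg_c(V)$ together with continuity.
\item You do not need the Koszul map to recover $\lambda$ from its class. Evaluating $\psi_\lambda=d\gamma$ on $(a\otimes f,b\otimes f)$ already gives $\lambda(a\,db)\kappa(f,f)=0$.
\end{itemize}
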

Before we proceed with the proof (Sections~\ref{sec:CompactSupport} and \ref{sec:sheafify}), let us spell out 
what this means for the twisted loop algebra $L_{P}\fk = C^{\infty}_{0}(\bS^1\times \bS^2)^{\Z_2}$.
To classify the 2-cocycles, we  consider the sphere 
bundle $F = (\bS^1 \times \bS^2)/ \Z_2 \rightarrow \bS^2/\Z_2$, where the action of $\Z_2$ on $\bS^1 \times \bS^2$
is given by $(z, x) \mapsto (-z, -x)$, and the action on $\bS^1$ by $z \mapsto -z$. 
A section of $\mathcal{L} \rightarrow F$ is then precisely a function $F$ on the double cover $\bS^1 \times \bS^2$
that satisfies $F(-z, -x) = - F(z, x)$.
Of course $\bS^1/\Z_2$ is again a circle $\bS^1$, 
and since $\Omega^1(\bS^1)/d\Omega^0(\bS^1)$ is 1-dimensional with generator $\lambda(\alpha) = \oint \alpha$, 
we recover the following result.
\begin{Theorem}\label{ThmCocycleTwistedLoop} The continuous cohomology $H^2(L_{P}\fk, \R)$ is 
1-dimensional. Up to scaling, every continuous 2-cocycle on $L_{P}\fk = C^{\infty}_0(\bS^1 \times \bS^2)^{\Z_2}$
 is cohomologous to 
\begin{equation}\label{eq:cocycleTwist}
\psi^P_{\infty}(F,G) :=  \frac{6\pi}{\vol_{\mu}(\bS^2)^3}\int_{\bS^1}\Big(\int_{\bS^2} F \partial_tG \mu\Big)  dt.
\end{equation}
The cocycle $c^P_{\infty} \psi^{P}_{\infty}$ integrates to the group level if and only if $c^P_{\infty}\in \Z$.
\end{Theorem}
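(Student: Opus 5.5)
We derive Theorem~\ref{ThmCocycleTwistedLoop} from Theorem~\ref{ThmCocycleTwistedBundles} applied to the bundle $F = (\bS^1\times\bS^2)/\Z_2 \rightarrow \bS^1/\Z_2$, and then handle integrality by the same restriction argument as in Theorem~\ref{Thm:integralityCocycles}.

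\textbf{Step 1: the cohomology is one-dimensional.} The base $M = \bS^1/\Z_2$ is a compact circle, so $\Omega^1_c(M)/d\Omega^0_c(M) \simeq H^1_{\mathrm{dR}}(M) \simeq \R$. Its continuous dual is spanned by $\lambda(\alpha) = \oint \alpha$, and Theorem~\ref{ThmCocycleTwistedBundles} then gives $\dim H^2(L_P\fk,\R) = 1$.

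\textbf{Step 2: identifying the generator.} Take for $\nabla$ the flat connection induced by the product connection $\partial_t$ on the double cover $\bS^1\times\bS^2$. This is compatible with $|\nu|$ because $P^*|\mu| = |\mu|$. Sections of $\mathcal{L}$ are the anti-invariant functions $F(-z,-x) = -F(z,x)$. For two such functions, $F\partial_tG$ is $\Z_2$-invariant, so the fibre integral $\fint F\nabla G\,|\nu|$ is a well-defined 1-form on $\bS^1/\Z_2$. Integrating it over $\bS^1/\Z_2$ gives half the integral over the double cover $\bS^1$. Hence $\psi_{\lambda,\nabla}$ is a nonzero multiple of $\int_{\bS^1}\int_{\bS^2}F\partial_tG\,\mu\,dt$, and the normalisation $6\pi/\vol^3$ is a choice made for the integrality step. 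The cocycle property and continuity are immediate, since $\psi^P_\infty$ is the restriction of $\tfrac12\psi_\infty$ to the subalgebra $C^\infty_0(\bS^1\times\bS^2)^{\Z_2} \subseteq L\fk$.

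\textbf{Step 3: integrality.} This is the main obstacle, and it mirrors Theorem~\ref{Thm:integralityCocycles}. The $\Z_2$-fixed loops include the twisted loop group $L_\tau \mathrm{SO}(3)$, namely loops $g$ with $g(-z) = PgP^{-1}$. Since $P = -\one$ is central in $\mathrm{O}(3)$, conjugation by $P$ acts trivially on $\mathrm{SO}(3)$. The twisted group is therefore just the loop group of $\mathrm{SO}(3)$ over the base circle $\bS^1/\Z_2$, which has half the length parameter.

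The plan is as follows.
\begin{enumerate}
\item[(i)] Using Smale's theorem and the Hamilton fibration argument fibrewise, show that $L_P\mathrm{SO}(3) \hookrightarrow L_P\SDiff(\bS^2,\mu)$ is a homotopy equivalence. Then pass to simply connected covers, so that $\pi_2$ is identified with $\pi_2(L\SU(2))$ for the loop group over $\bS^1/\Z_2$.
\item[(ii)] Restrict $\psi^P_\infty$ to this $L\su(2)$, written in the coordinate $s = 2t$ on $\bS^1/\Z_2$. Use $\iota^*\kappa_\infty = -\frac{1}{2\pi}\kappa_{\su(2)}$ from the proof of Theorem~\ref{Thm:integralityCocycles}.
\item[(iii)] Check that the factor $\tfrac12$ coming from the double cover combines with the reparametrisation $dt = ds/2$ and $\partial_t = 2\partial_s$ to give exactly $-\psi_{\su(2)}$ on the base circle. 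The $\tfrac12$ and the $2$ from the chain rule cancel in the derivative, and the change of measure accounts for halving $12\pi$ to $6\pi$.
\end{enumerate}
This bookkeeping of factors of two is the delicate point, and I would verify it by computing on $z^n\otimes V_1$ with $n$ even.

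Finally, \cite[Thm.~4.4.1]{PS86} shows that integrality of the restricted class is necessary. Since the inclusion induces an isomorphism on $\pi_2$, the period homomorphism of $c^P_\infty\psi^P_\infty$ is integral if and only if $c^P_\infty\in\Z$, and \cite{Ne02} then yields the integrability statement.
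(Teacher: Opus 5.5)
Your proposal is correct. Steps 1 and 2 match the paper exactly: apply Theorem~\ref{ThmCocycleTwistedBundles} to $(\bS^1\times\bS^2)/\Z_2\to\bS^1/\Z_2$, and use that $\Omega^1/d\Omega^0$ of the base circle is one-dimensional.

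For integrality you take a different route. The paper does not redo the homotopy argument in the twisted setting. It cites \cite[Lemma~3.10]{NW09}: over $M=\bS^1$, integrability of $\psi_{\lambda,\nabla}$ does not depend on the twist. It then notes that $\psi^P_\infty$ corresponds to the current $\lambda(\alpha)=\frac{12\pi}{\vol(\bS^2)^3}\oint_{\bS^1/\Z_2}\alpha$, which is the same current as for the untwisted generator, so Theorem~\ref{Thm:integralityCocycles} carries over.

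You instead use that $P=-\one$ centralises $\mathrm{SO}(3)$. The twisted loops in $\mathrm{SO}(3)$ are therefore an honest loop group over the base circle. You show that $L_P\mathrm{SO}(3)\hookrightarrow L_P\SDiff(\bS^2,\mu)$ is a homotopy equivalence and compute the restriction to $L\su(2)$ directly. For the homotopy equivalence, the cleanest way is to compare the evaluation-at-a-point fibrations, whose fibres are $\Omega\mathrm{SO}(3)$ and $\Omega\SDiff(\bS^2,\mu)$, and apply the five lemma.

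Each route buys something different:
\begin{itemize}
\item Yours is self-contained and checks the constant $6\pi$ independently. The cost is redoing the homotopy equivalence and the factor-of-two bookkeeping, and it relies on the twist commuting with a homotopy-equivalent compact subgroup.
\item The paper's is shorter and conceptual: the $6\pi$ is the ``same current on a circle of half the length''. It also applies to any twist $\Phi$, but it rests on an external lemma and is stated somewhat heuristically.
\end{itemize}

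Your verbal accounting in (iii) pairs the factors incorrectly, though the conclusion survives. Since $F\,\partial_tG\,dt=F\,dG$ is a $1$-form, the chain-rule factor $2$ and the measure factor $\tfrac12$ cancel each other. The remaining factor $2$ comes from integrating over $t\in[0,2\pi]$, which covers the base circle twice, and this compensates $6\pi$ versus $12\pi$. The net factor is $1$. So the restriction of $\psi^P_\infty$ to $L\su(2)$ over the base circle coincides with the untwisted restriction from the proof of Theorem~\ref{Thm:integralityCocycles}, and \cite{PS86} together with the $\pi_2$-isomorphism gives integrality exactly for $c^P_\infty\in\Z$.
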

For the integrality condition, note that it follows from Lemma 3.10 in \cite{NW09} that for $M = \bS^1$, 
the integrability 
of the cocycle \eqref{ThmCocycleTwistedBundles} does not depend on the `twist' $\Phi$. The integrability 
condition should therefore be the same as in Theorem~\ref{Thm:integralityCocycles}.
Since the base manifold for the sphere bundle $(\bS^1 \times \bS^2)/\Z_2$ is 
$\bS^1/\Z_2$ rather than $\bS^1$, the cocycle \eqref{eq:cocycleTwist} corresponds to \eqref{eq:cocycleLambda}
for $\lambda(\alpha) = \frac{12\pi}{\vol(\bS^2)^3}\oint \alpha$, the same current as for the cocycle \eqref{eq:cocycleTheoremA}. This explains why the factor 
$6\pi$ (rather than $12\pi$) occurs in \eqref{eq:cocycleTwist}.

\subsection{Classification of 2-cocycles for $C^{\infty}_{c}(M, \fk)$}\label{sec:CompactSupport}
The first step in the proof of Theorem~\ref{ThmCocycleTwistedBundles} is to consider the \emph{local} case. Since the sphere bundle $F \rightarrow M$
trivialises over sufficiently small open subsets $U \subseteq M$, 
the Lie algebra $\fg_{c}(U)$ restricted to these subsets is isomorphic to $C^{\infty}_{c}(U, \fk)$.
In this section we classify the continuous 2-cocycles on these Lie algebras.
In the second step of the proof, in Section~\ref{sec:sheafify}, we reduce the calculation of second Lie algebra cohomology to the local setting, 
closely following \cite{JW13}.

%
In the following we consider $C^{\infty}_{c}(M, \fk)$ as an LF-Lie algebra, with the 
topology of uniform convergence in $\fk$ of all derivatives on compact subsets of $M$.

\begin{Lemma}\label{Thm:OnManifolds}
Let $M$ be a smooth manifold, and let $\fk = C^{\infty}_{0}(\bS^2)$. Then every 2-cocycle on $C^{\infty}_{c}(M, \fk)$ 
is cohomologous to $\psi_{\lambda}$, where $\lambda \colon \Omega_{c}^1(M)/d\Omega_{c}^0(M) \rightarrow \R$ is a closed current
and
\[
\psi_{\lambda}(F,G) = \lambda(\kappa(FdG)).
\]
\end{Lemma}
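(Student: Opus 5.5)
The plan is to reduce to the algebraic classification of Lemma~\ref{Lem:NeebWagemannRefined}, applied to the commutative algebra $A = C^{\infty}_{c}(M)$ (or a unital substitute). Then I would upgrade the resulting algebraic data to a continuous closed current, and finally extend the trivialising $1$-cochain to the completed Lie algebra. This follows the same pattern as the proof of Theorem~\ref{ThmCocycleLoop}.

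\textbf{Step 1: reduce to the algebraic tensor product.} The algebraic tensor product $C^{\infty}_{c}(M)\otimes \fk$ is dense in $C^{\infty}_{c}(M,\fk)$, and a continuous cocycle $\psi$ restricts to a cocycle on it that is continuous in the $\fk$-variables. Since $C^{\infty}_c(M)$ is not unital, I would not apply \cite{NW08} to it directly. Instead I would fix a compact $K \subseteq M$ and a function $u \in C^{\infty}_c(M)$ with $u \equiv 1$ on a neighbourhood of $K$. I would then work in the unital algebra $A_K$ generated by $u$ and the functions supported in $K$, or equivalently in the unitalisation $\R\one\oplus C^\infty_c(M)$, since cocycles extend trivially along the central direction $\one\otimes\fk$ only up to checking compatibility.

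On each such piece, Lemma~\ref{Lem:NeebWagemannRefined} together with Remark~\ref{Rk:partCont1chain} gives
\[
\psi(a\otimes f, b\otimes g) = L(a,b)\,\kappa(f,g) + d\gamma(a\otimes f, b\otimes g).
\]
Here $L(a,b) = \lambda(a\,db - b\,da)$, and $\gamma$ is continuous in the $\fk$-variables. The uniqueness of $L$, which follows by evaluating on $f=g$ with $\kappa(f,f)\neq 0$ and using that $\fk$ is perfect, makes these local descriptions compatible as $K$ grows.

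\textbf{Step 2: obtain a closed current.} The bilinear form $L$ can be read off from $\psi$ by fixing $f,g$ with $\kappa(f,g)=1$, after correcting by the coboundary term. Using formula \eqref{eq:get1from2}, $\gamma$ is itself expressed through $\psi$ on $a\otimes\{f,g\}$, so both $\gamma$ and $L$ inherit continuity in the $C^{\infty}_c(M)$-variables. Moreover, $L$ is a continuous, skew, cyclic $1$-cocycle on $C^{\infty}_c(M)$. By \cite[Thm.~4.7(ii)]{NW08}, $L(a,b)=\lambda(a\,db)$ with $\lambda(dA)=0$.

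The point to address is that $\lambda$ is a priori a functional on Kähler differentials, not on smooth $1$-forms. I would handle this as follows. Every compactly supported $1$-form is a finite sum $\sum a_i\, db_i$, using local coordinates and a partition of unity. Continuity of $(a,b)\mapsto \lambda(a\,db)$ together with locality then shows that $\lambda$ factors through the smooth forms $\Omega^1_c(M)$. For locality, I would check that $\lambda(a\,db)=0$ when $\mathrm{supp}\,a\cap\mathrm{supp}\,b=\emptyset$, which follows from cyclicity by writing $a = a u$ with $u\equiv 1$ on $\mathrm{supp}\,a$ and $u b=0$. The resulting $\lambda$ is a continuous functional on $\Omega^1_c(M)$ that vanishes on $d\Omega^0_c(M)$, that is, a closed current. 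The cocycle part then equals $\lambda(\kappa(F\,dG))=\psi_\lambda(F,G)$ on elementary tensors.

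\textbf{Step 3: extend the $1$-cochain, the expected main obstacle.} It remains to show that $\gamma$ extends to a continuous linear functional on the LF-space $C^{\infty}_c(M,\fk)$. As in the proof of Theorem~\ref{ThmCocycleLoop}, $(a,f)\mapsto\gamma(a\otimes f)$ is separately continuous. On each Fréchet step $C^{\infty}_K(M)\times\fk$ this gives joint continuity, so $\gamma$ extends to $C^{\infty}_K(M)\hat\otimes_\pi\fk = C^{\infty}_K(M,\fk)$, using nuclearity of these spaces. Compatibility of the extensions over increasing $K$, together with the universal property of the inductive limit, gives a continuous $\gamma$ on $C^{\infty}_c(M,\fk)$.

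By density and continuity, $\psi-\psi_\lambda = d\gamma$ then holds on all of $C^{\infty}_c(M,\fk)$. I expect the delicate points to be the non-unitality in Step 1 and the factorisation of $\lambda$ through smooth forms in Step 2, rather than the final extension argument.
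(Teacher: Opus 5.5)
Your overall architecture is the paper's: unitalise, apply Lemma~\ref{Lem:NeebWagemannRefined}, upgrade $\lambda$ to a continuous closed current, and upgrade $\gamma$ from separate to joint continuity. Your Step~3, done on the Fr\'echet steps $C^{\infty}_K(M)\times\fk$, is fine. However, the two steps you yourself flag as delicate are not actually carried out, and the mechanisms you offer for them do not work.

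\emph{Unitalisation.} $\one\otimes\fk$ is not central in $A_{\one}\otimes\fk$, since $[\one\otimes h, a\otimes f]=a\otimes\{h,f\}$. Extending $\psi$ by zero on pairs involving $\one\otimes\fk$ satisfies the cocycle identity only if $\psi$ is $\mathrm{ad}(\one\otimes\fk)$-invariant. That fails for the coboundary part $-\gamma([\,\cdot\,,\cdot\,])$ unless $\gamma$ itself is invariant. Your $A_K$ is not unital either ($u^2\neq u$, so $u$ is not a unit). The missing ingredient is that every continuous cocycle on $C^{\infty}_c(M,\fk)$ is \emph{diagonal}: it vanishes on $a\otimes f$, $b\otimes g$ whenever $a$ and $b$ have disjoint supports. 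This comes from topological perfectness of $C^{\infty}_c(M,\fk)$ \cite[Prop.~II.4, Lemma~2.2]{JW13}. Diagonality is what lets you set $\psi'(a\otimes f,\one\otimes g):=\psi(a\otimes f,\chi\otimes g)$ for a cutoff $\chi\equiv 1$ on $\mathrm{supp}(a)$, independently of $\chi$, and then verify the cocycle identity. This is Lemma~\ref{Lemma:appendIdentity}. It must come before Neeb--Wagemann can be applied at all. It is also what gives meaning and continuity to $\psi(a\otimes f,\one\otimes g)$ in \eqref{eq:get1from2}.

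\emph{Descent of $\lambda$ to smooth forms.} That every compactly supported $1$-form is a finite sum $\sum_i a_i\,db_i$ only shows that $\Omega^1_A\to\Omega^1_c(M)$ is surjective. What you need is that $\lambda$ vanishes on its kernel, which is nonzero because the chain rule $d(h\circ b)=h'(b)\,db$ fails in $\Omega^1_A$ for general smooth $h$. Continuity and locality are the right inputs, but you must supply the argument that turns them into this vanishing. The paper uses Peetre's theorem, in three steps:
\begin{itemize}
\item The map $\check\lambda\colon b\mapsto\lambda(\,\cdot\,db)$ is a continuous local map from test functions to distributions, hence a differential operator.
\item The Leibniz rule in $\Omega^1_A$ gives $\check\lambda(\alpha\beta b)-\alpha\check\lambda(\beta b)-\beta\check\lambda(\alpha b)+\alpha\beta\check\lambda(b)=0$. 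So $\check\lambda$ has order at most one, and $\lambda(a\,db)=\phi_0(ab)+\phi_1^{\mu}(a\,\partial_\mu b)$.
\item Taking $b=\one$, with $d\one=0$, forces $\phi_0=0$.
\end{itemize}
Only then does $\lambda(a\,db)$ depend on $b$ through its de Rham differential alone, so that $\lambda$ descends to a continuous closed current on $\Omega^1_c(M)$. A workable alternative is to prove $\lambda(a\,d(h\circ y))=\sum_i\lambda\big(a\,(\partial_i h\circ y)\,dy^i\big)$ by polynomial approximation of $h$ together with continuity, and then use locality in charts.
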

From this we of course immediately recover Theorem~\ref{ThmCocycleLoop}. Indeed, 
for $M = \bS^1$ we have $\Omega_{c}^1(\bS^1)/d\Omega_{c}^0(\bS^1) = H^1_{\mathrm{dR}}(\bS^1, \R)$, so
 every closed current is a real multiple of $\lambda(adt) = \oint a(e^{it})dt$. 
Since $\psi_{\lambda}$ is proportional to the cocycle $\psi_{\infty}$ from \eqref{eq:cocycleTheorem1},
Theorem~\ref{ThmCocycleLoop} follows.


To prove Lemma~\ref{Thm:OnManifolds}, we will need a number of preparations.
We identify $C^{\infty}_{c}(M, \fk)$ with the projective tensor product $A\otimes_{\pi}\fk$, 
where $A = C^{\infty}_{c}(M)$ is the algebra of compactly supported smooth functions.
If $M$ is noncompact, then the ring $A = C^{\infty}_{c}(M)$ does not have a unit, so we cannot apply the results from
 \cite{NW08} directly. We first solve this by showing that appending a unit to $A$ leaves the situation essentially unchanged.
Indeed, adjoining the identity to $A = C^{\infty}_{c}(M)$ yields
the unital algebra $A_{\one} = C^{\infty}_{c}(M) \oplus \R \one$, and we now show that 
 $A_{\one} \otimes_{\pi}\fk = C^{\infty}_{c}(M, \fk) \rtimes \fk$ has the same continuous 
 second cohomology as $A\otimes_{\pi} \fk = C^{\infty}_{c}(M, \fk)$.

\begin{Lemma}\label{Lemma:appendIdentity}
If $M$ is noncompact, the inclusion $\iota \colon C^{\infty}_{c}(M, \fk) \hookrightarrow {C^{\infty}_{c}(M, \fk) \rtimes \fk}$
yields an isomorphism $\iota^* \colon H^2(C^{\infty}_{c}(M, \fk) \rtimes \fk, \R) \rightarrow H^2(C^{\infty}_{c}(M, \fk), \R)$ in continuous
second Lie algebra cohomology. 
\end{Lemma}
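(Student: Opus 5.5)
The plan is to prove surjectivity and injectivity of $\iota^*$ separately. Throughout write $\fg = C^{\infty}_{c}(M,\fk)\rtimes\fk$ and $I = C^{\infty}_{c}(M,\fk)$. Here $I$ is an ideal and $\fk$ acts on $I$ by the pointwise bracket $[x,F](m) = \{x, F(m)\}$. The facts used are:
\begin{itemize}
\item $\fk$ is perfect \cite{ALDM74}, so $I$ is perfect as well: any $H\in I$ supported in a compact $K$ is a finite sum of brackets $[F_i,G_i]$ with $F_i, G_i$ supported in any prescribed neighbourhood of $K$. To see this, write $H$ as a limit of sums $f\otimes\{a,b\}$ and use $f\otimes\{a,b\} = [f\otimes a, \chi\otimes b]$ with $\chi=1$ on $\mathrm{supp} f$. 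Making this a finite, continuous statement may require perfectness in the Fr\'echet sense; I will use a continuous formula in the spirit of \eqref{eq:get1from2}.
\item $H^2(\fk,\R)=0$ by Theorem~\ref{Thm:H2Poisson}.
\item The locality trick below.
\end{itemize}

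\emph{Key locality observation.} Let $\psi$ be a continuous cocycle on $I$, and let $\varphi\in C^{\infty}_c(M)$ vanish near $\mathrm{supp}F\cup\mathrm{supp}G$. The cocycle identity gives
\[
\psi(\varphi x,[F,G]) = -\psi(F,[G,\varphi x]) - \psi(G,[\varphi x,F]) = 0,
\]
because the brackets vanish pointwise. Hence $\gamma_x(H) := \psi(\chi x, H)$, with $\chi\in C^{\infty}_c(M)$ equal to $1$ near $\mathrm{supp}H$, is independent of $\chi$ when $H$ is a commutator of elements supported near $\mathrm{supp}H$. By perfectness of $I$ this defines a linear map $\gamma\colon \fk\times I\to\R$. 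Continuity of $\gamma$ is clear on each $C^{\infty}_K(M,\fk)$, since there one fixes a single $\chi$; hence $\gamma$ is continuous on the LF-space $I$.

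\emph{Surjectivity.} Extend $\psi$ to $\tilde\psi$ on $\fg$ by setting
\[
\tilde\psi(F,G)=\psi(F,G),\qquad \tilde\psi(x,F)=\gamma_x(F)=-\tilde\psi(F,x),\qquad \tilde\psi(x,y)=0 .
\]
The cocycle identity for triples $(F,G,H)$ is that of $\psi$. For $(x,F,G)$, choose one $\chi$ equal to $1$ near all supports; then $[x,F]=[\chi x,F]$, so the identity reduces to the cocycle identity of $\psi$ for $(\chi x,F,G)$. For $(x,y,F)$, one needs
\[
\psi([\chi x,\chi y],F)=\psi(\chi[x,y],F),
\]
which holds because $\chi^2$ and $\chi$ both equal $1$ near $\mathrm{supp}F$, so the difference vanishes by locality. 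For $(x,y,z)$ the identity is trivial. Thus $\iota^*[\tilde\psi]=[\psi]$.

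\emph{Injectivity.} Suppose $\tilde\psi$ is a continuous cocycle on $\fg$ with $\iota^*\tilde\psi = d\beta$ for a continuous $\beta\colon I\to\R$. Extend $\beta$ by zero on $\fk$ and subtract $d\beta$; then $\tilde\psi$ vanishes on $I\times I$. The cocycle identity for $(x,F,G)$ now gives
\[
\tilde\psi(x,[F,G]) = -\tilde\psi(F,[G,x]) - \tilde\psi(G,[x,F]) = 0,
\]
so $\tilde\psi(x,\cdot)=0$ on $I$ by perfectness of $I$. Therefore $\tilde\psi$ is the pullback of a continuous cocycle on $\fk=\fg/I$. That cocycle is a coboundary $d\eta$ by Theorem~\ref{Thm:H2Poisson}, so $\tilde\psi = d(\eta\circ p)$ with $p\colon\fg\to\fk$ the quotient map.

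\emph{Expected obstacle.} The main obstacle is the well-definedness and continuity of $\gamma$. Both hinge on writing compactly supported $H$ as (limits of) sums of commutators of elements supported near $\mathrm{supp}H$, in a way compatible with the topology. I would first try to avoid decompositions altogether by taking $\gamma_x(H):=\psi(\chi x,H)$ as the definition. Then I would show independence of $\chi$ directly: by the locality identity, $\psi(\varphi x,\cdot)$ vanishes on the closed span of commutators supported away from $\mathrm{supp}\varphi$, and this closed span is all of $C^{\infty}_{c}(U,\fk)$ for $U=M\setminus\mathrm{supp}\varphi$, by density of $\{\fk,\fk\}=\fk$.
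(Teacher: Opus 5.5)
Your proposal is correct and follows the paper's proof. Surjectivity is obtained by extending $\psi$ via $\tilde\psi(x,F)=\psi(\chi x,F)$ with a cutoff $\chi$. Injectivity is obtained by subtracting the coboundary of the zero-extended $\beta$, then using topological perfectness and $H^2(\fk,\R)=0$. The only difference is that you derive the independence of $\chi$ directly, from the cocycle identity and density of commutators in $C^{\infty}_{c}(U,\fk)$, taking $\chi=1$ on a neighbourhood of the support. The paper instead cites the diagonality of continuous cocycles from \cite[Lemma~2.2]{JW13}.
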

\begin{proof}
This essentially follows from the proof of \cite[Theorem~2.7]{JW13}. Since we require a few minor adaptations, 
we repeat the argument here for the convenience of the reader. 

Since $\fk = C^{\infty}_{0}(\bS^2)$ is perfect \cite{ALDM74}, the Lie algebra $A\otimes_{\pi}\fk$ is topologically perfect 
by \cite[Prop.~II.4]{JW13}. It follows that 
 every continuous 2-cocycle \[\psi \colon {A\otimes_{\pi}\fk \times A\otimes_{\pi}\fk} \rightarrow \R\]
 is \emph{diagonal} by \cite[Lemma~2.2]{JW13}, so that
 $\mathrm{supp}(a) \cap \mathrm{supp}(b) = \emptyset$ implies ${\psi(a\otimes f, b\otimes g)} = 0$.
To show that $\iota^*$ is surjective, let 
$\psi \colon {A\otimes_{\pi}\fk \times A\otimes_{\pi}\fk} \rightarrow \R$ be a continuous cocycle.
We extend $\psi$ to a cocycle $\psi'$ on $A_{\one}\otimes_{\pi}\fk$ by setting $\psi'(\one \otimes f, \one \otimes g) := 0$,
and by defining $\psi'(a \otimes f, \one \otimes g) := \psi(a \otimes f, \chi \otimes g)$ for some function $\chi \in C^{\infty}_{c}(M,\R)$
that satisfies $\chi|_{\mathrm{supp}(a)} = 1$. Since $\psi$ is diagonal, this expression does not depend on the choice of $\chi$.
To check that $\psi'$ is again a cocycle, note that 
\[
\psi'(a\otimes f, [\one \otimes g, \one \otimes h]) + \mathrm{cyclic} = 
\psi'(a\otimes f, [\chi \otimes g, \chi' \otimes h]) + \mathrm{cyclic} = 0
\] 
for $\chi, \chi' \in A$ where $\chi$ is equal to $1$ on the support of $a$, and $\chi'$ is equal to $1$ on the support of $\chi$.
This shows the cocycle identity on $(A\otimes_{\pi} \fk) \times \fk \times \fk$. The 
cocycle identity on $\fk \times (A\otimes_{\pi} \fk) \times (A \otimes_{\pi} \fk)$ is similar, and the cocycle identity on $\fk \times \fk \times \fk$ is trivial.

To show that $\iota^*$ is injective, let $\psi'$ be a continuous cocycle on $A_{\one} \otimes_{\pi} \fk$ for which $\iota^*\psi' = d \gamma$ 
is the coboundary of a continuous linear functional $\gamma$ on $A\otimes_{\pi} \fk$.
If we extend $\gamma$ by zero to $\gamma' \colon A_{\one} \otimes_{\pi}\fk \rightarrow \R$,
then $\psi'' := \psi - d\gamma'$ is zero on $A\otimes_{\pi}\fk \times A \otimes_{\pi} \fk$. So
\[
\psi''([a\otimes f, b\otimes g], \one \otimes h) + \mathrm{cyclic} = 0
\]
by the cocycle identity, and since $\psi''$ vanishes on $A\otimes_{\pi}\fk \times A \otimes_{\pi} \fk$ this implies 
that $\psi''([A\otimes \fk, A\otimes \fk], \one \otimes \fk) = \{0\}$.
Since $\fk$ is perfect and $A^2 = A$, this implies that $\psi''(A\otimes \fk, \fk) = \{0\}$.
Since $\psi''$ is continuous and $A\otimes \fk \subseteq A\otimes_{\pi}\fk$ is dense 
we have $\psi''(A\otimes_{\pi} \fk, \fk) = \{0\}$.
So $\psi''$ factors through a continuous cocycle on $\fk$, 
which is a coboundary of a continuous 1-cochain by Theorem~\ref{Thm:H2Poisson}.
\end{proof}

\begin{proof}[Proof of Lemma~\ref{Thm:OnManifolds}]
Let $\psi$ be a continuous cocycle on $A\otimes_{\pi}\fk$, where $A = C^{\infty}_{c}(M)$ and 
$\fk$ is the Poisson algebra $\fk = C^{\infty}_{0}(\bS^2)$.
If $M$ is compact we apply Lemma~\ref{Lem:NeebWagemannRefined} to the unital algebra $A$.
If $M$ is noncompact, then $\psi$ extends to a continuous cocycle on $A_{\one}\otimes_{\pi}\fk$
by Lemma~\ref{Lemma:appendIdentity}, and we apply Lemma~\ref{Lem:NeebWagemannRefined} to the unital algebra
$A_{\one} = {C^{\infty}_{c}(M) \oplus \R \one}$.

Either way, we find that $\psi(a\otimes f, b\otimes g) = \lambda(adb)\kappa(f,g) + \gamma(ab \otimes [f,g])$
for a linear functional $\lambda \colon \Omega^1_{A}/dA \rightarrow \R$, and a 
1-cochain $\gamma \colon A\otimes \fk \rightarrow \R$ that is continuous in the $\fk$-variables. 
Recall that the universal property of the K\"ahler differentials yields a canonical map 
$\Omega^1_{A}/dA \rightarrow \Omega_c^1(M)/d_{\mathrm{dR}}C^{\infty}_{c}(M)$.
We need to show that $\lambda$ factors through a continuous linear map $\Omega^1_{c}(M)/dC_{c}^{\infty}(M) \rightarrow \R$, 
and that the cocycle $\gamma$ is continuous.

Since $\fk$ is perfect \cite{ALDM74}, it follows from \cite[Prop.~II.4, Lemma~2.2]{JW13}
that $\psi$ is diagonal: $\mathrm{supp}(a) \cap \mathrm{supp}(b) = \emptyset$
implies $\lambda(adb) = 0$. If we choose $f\in \fk$ with $\kappa(f,f) = 1$, we find that 
the bilinear map $(a,b) \mapsto \lambda(adb) = \psi(a\otimes f, b \otimes f)$ 
is diagonal as well.

But then 
the map $\check{\lambda} \colon C^{\infty}_{c}(M) \rightarrow C^{\infty}_{c}(M)'\colon b\mapsto \lambda(\,\cdot\, db)$
from test functions to distributions is \emph{local}, in the sense that $\mathrm{supp}(\check{\lambda}(a)) \subseteq \mathrm{supp}(a)$
for all $a\in C^{\infty}_{c}(M)$. 
Since the cocycle is jointly continuous, the map $(a, b) \mapsto \lambda(adb)$ is jointly continuous as well, and 
$b \mapsto \lambda(\,\cdot\,db)$ is a differential operator by Peetre's Theorem.
It is a differential operator of first order because 
\[(\delta_{\alpha}\delta_{\beta}\check{\lambda})(b)  := \check{\lambda}(\alpha \beta b) - \alpha \check{\lambda}(\beta b) - 
\beta \check{\lambda}(\alpha b) + \alpha \beta\check{\lambda}( b) = 0
\]
for all $\alpha, \beta \in C^{\infty}_{c}(M)$,
so there exist distributions $\phi_0, \phi^{\mu}_1$ on $M$ such that
$\check{\lambda}(b)(a) = b\phi_0 + \partial_{\mu}b \phi^{\mu}_1$ in local coordinates (repeated index indicates summation).
So $\lambda(adb) = \phi_0(ab) + \phi^{\mu}_1(a\partial_{\mu}b)$. 
With $b=\one$ we see that $\phi_0 = 0$, so 
$\lambda$ drops to a continuous linear map 
$\phi_1 \colon \Omega^1_{c}(M)/dC^{\infty}_c(M) \rightarrow \R$ as required.

Finally, to show that the 1-cochain $\gamma$ is continuous we proceed as in \eqref{eq:get1from2}. 
The 1-cochain can be recovered from $\psi$ by 
\[\gamma(a \otimes \{f,g\}) = {\textstyle\frac{1}{2}}(\psi(a\otimes f, \one\otimes g) - \psi(a\otimes g, \one \otimes f)),\]
and since $\psi$ is continuous, the map $a \mapsto \gamma(a\otimes f)$ is continuous as well.
We already knew from Remark~\ref{Rk:partCont1chain} that $f \mapsto \gamma(a\otimes f)$ is continuous, so the bilinear map 
$(a, f) \mapsto \gamma(a\otimes f)$ is separately continuous. 
Since $A$ and $\fk$ are barrelled locally convex 
 vector spaces, the Banach-Steinhaus Theorem implies that the bilinear map is jointly continuous on $A\times \fk$. 
 This means that $\gamma$
 extends to a
a continuous linear functional $\gamma \colon A\otimes_{\pi} \fk \rightarrow \R$ by the universal property of projective tensor product, 
finishing the proof.
\end{proof}

\subsection{The proof of Theorem~\ref{ThmCocycleTwistedBundles}}\label{sec:sheafify}
In the second step, following \cite{JW13}, we reduce the calculation of second Lie algebra cohomology to the local setting. 

\begin{proof}[Proof of Theorem~\ref{ThmCocycleTwistedBundles}]
For an open $U\subseteq M$, let $\fg_{c}(U)$ be the Lie algebra associated to the restriction of $\pi \colon F \rightarrow M$ to $U$.
For sufficiently small open neighbourhoods $U \subseteq \R$, this is isomorphic to $C^{\infty}_{c}(U, \fk)$.
More generally, $\fg_{c}(U)$ is the space of sections of the locally trivial bundle of Fr\'echet--Lie algebras
$\mathfrak{K} \rightarrow M$, where the fibres are given by $\mathfrak{K}_{x} = \X(\pi^{-1}(x), |\nu|)$.

By \cite[Cor.~2.5]{JW13},
the presheaf $\mathcal{S}$ that assigns to an open subset $U\subseteq M$ the second continuous Lie algebra cohomology 
$\mathcal{S}(U) = H^2(\fg_{c}(U), \R)$ satisfies the local identity axiom. 
By \cite[Prop.~2.6]{JW13} the presheaf $\mathcal{F}$ defined by $\mathcal{F}(U) = \mathrm{Hom}(\Omega^1_{c}(U)/dC^{\infty}_{c}(U), \R)$
is a sheaf.
Since 
\[\lambda_{U} \colon \mathcal{F}(U) \rightarrow \mathcal{S}(U); \lambda  \mapsto [\psi_{\lambda, \nabla}]
\] 
is a monomorphism of presheaves that is an isomorphism on sufficiently small open subsets $U$, 
it is an isomorphism by \cite[Prop.~2.8]{JW13}. In particular $\mathcal{S}$ is a sheaf.
Evaluating on $U = M$, we find
\[
H^2(\fg_c(M), \R) \simeq \big(\Omega^1(M)/dC^{\infty}(M)\big)'
\]
as required.
\end{proof}

%

\section{Geometric quantisation}\label{Section:Recap}

In order to exhibit the cocycles \eqref{eq:cocycleTheoremA} on $LC^{\infty}_0(\bS^2)$ 
as limits for $k \rightarrow \infty$ of Kac-Moody cocycles on $L\su(k+1)$, we will need 
a number of well-known results from geometric quantisation. 
In this section we recall these results, and specialise to the 2-sphere. 
As we need to work with the complex structure explicitly, it will be convenient to identify $\bS^2$ with the projective line $\CP^1$.   
 
Let $M = \CP^1$, and let $\mu$ be an $\SU(2)$-invariant symplectic form on $\CP^1$.
We will assume that $\mu$ is positive for the orientation that comes from the complex structure.
Let $\bL =\tau^*$ be the dual of the tautological line bundle $\tau$ over $\CP^1$.
We equip $\bL$ with the $\SU(2)$-invariant connection $\nabla$ and Hermitian form $h$ that are induced by the inclusion of 
$\tau$ in the trivial bundle $\CP^1 \times \C^2$.
Then $(\bL, \nabla, h)$ is an $\SU(2)$-equivariant prequantum line bundle, with integral curvature
\[\textstyle F_{\nabla}  = -i \omega.\]
Every $\SU(2)$-invariant symplectic form $\mu$ on $\CP^1$ is proportional to $\omega$, and as $\vol_{\omega}(\CP^1) = 2\pi$ we have
\begin{equation}
\textstyle \mu = \frac{\vol_{\mu}(\bS^2)}{2\pi}\omega.
\end{equation}

\subsection{Quantomorphism group and Poisson Lie algebra}
Let $\cG = \mathrm{Aut}(\bL, \nabla, h)$ be the \emph{quantomorphism group}, the group of (smooth) automorphisms of the line bundle $\bL \rightarrow \CP^1$ 
that respect the connection and the Hermitian inner product.
An automorphism $g \colon \bL \rightarrow \bL$ covers a diffeomorphism $g_{M} \colon \CP^1 \rightarrow \CP^1$, 
and since $g$ preserves the connection, $g_M$ preserves the symplectic form. 
The map $g \mapsto g_M$ is a surjective group homomorphism from $\cG$ onto $\SDiff(\CP^1)$.
Its kernel, the group $\cZ \simeq \U(1)$ of vertical automorphisms of $(\bL, \nabla, h)$, is the centre of $\cG$.

The quantomorphism group is a Fr\'echet--Lie group \cite{RS81}, and its Lie algebra $\mathrm{Lie}(\cG)$ is isomorphic to the Poisson 
algebra $C^{\infty}(\CP^1)$. 
The isomorphism is perhaps most easily described in terms of the principal $\U(1)$-bundle $P = \{\lambda \in \bL \,;\, h(\lambda, \lambda) = 1\}$, 
and the connection 1-form $\theta \in \Omega^1(P, i\R)$ induced by~$\nabla$.
Then $\mathrm{Lie}(\cG)$ is the Lie algebra of $\U(1)$-equivariant vector fields $X$ on $P$, and 
 the requirement that $L_{X}\theta = 0$  
translates to $i_{X}d\theta + d(i_X\theta) = 0$.
Since both $X$ and $\theta$ are both equivariant, $i_{X}\theta$ is equivariant as well, and \[\textstyle i_{X}\theta = -i \pi^*f\] for some 
$f\in C^{\infty}(\CP^1, \R)$. Since $d\theta = \pi^* F_{\nabla} = -i\pi^*\omega$, the function $f$ determines 
the horizontal component $X_f\in \X(\CP^1, \omega)$ of $X$ by $i_{X_f}\omega + df = 0$.
Since $\CP^1$ is simply connected, every symplectic vector field is Hamiltonian, and the map 
$f \mapsto X_f$ is a surjective Lie algebra homomorphism from $C^{\infty}(\CP^1, \R)$ to the Lie algebra 
$\X(\CP^1, \omega)$ of symplectic vector fields.
Its kernel is the abelian Lie algebra $\R$ of constant functions, and the central extension 
\begin{equation}\label{QuantomorphismCover}
1 \rightarrow \U(1) \rightarrow \cG \rightarrow \SDiff(\CP^1) \rightarrow 1
\end{equation}
of Fr\'echet--Lie groups corresponds to the central extension 
\begin{equation}\label{PoissonCover}
 0 \rightarrow \R \rightarrow C^{\infty}(\CP^1,\R) \rightarrow \X(\CP^1, \omega) \rightarrow 0
\end{equation}
of Fr\'echet--Lie algebras. 

\begin{Remark}
Note that the Lie algebra extension \eqref{PoissonCover} is trivial because the constants split off as an ideal \cite[corr.~3.6]{JV16}. By contrast, 
the group extension \eqref{QuantomorphismCover} is nontrivial;
the natural action of $\SU(2)$ on  $\bL\rightarrow \CP^1$ preserves the connection, 
so it induces an inclusion $\SU(2) \hookrightarrow \cG$ that covers the inclusion $\SU(2)/\Z_2 \hookrightarrow \SDiff(\CP^1)$.
The restriction of \eqref{QuantomorphismCover} to $\SU(2)/\Z_2$ is the $\mathrm{spin}_c$ extension
\[1 \rightarrow \U(1) \rightarrow \SU(2)\times_{\Z_2}\U(1) \rightarrow \SU(2)/\Z_2 \rightarrow 1.\]
Since $\Diff(\CP^1)$ is a principal $\SDiff(\CP^1)$-bundle over the contractible space of smooth positive measures 
of constant mass \cite[Theorem~2.5.3]{Ham82}, and since the inclusion $\SU(2)/\Z_2 \hookrightarrow \SDiff(\CP^1)$
is a homotopy equivalence by Smale's Theorem \cite{Smale1959}, the quantomorphism group $\cG$ has 
the homotopy type of $\mathrm{spin}_c(3) = \SU(2) \times_{\Z_2}\U(1)$.
\end{Remark}

\subsection{Geometric quantisation}

For every $k\in \N$, we denote by $\cH_{k} := \Gamma_{L^2}(\CP^1, \bL^{\otimes k})$ 
the Hilbert space of square integrable sections of $\bL^{\otimes k}$, with inner product
\[
\lra{\sigma_1}{\sigma_2} :=  \frac{1}{2\pi}\int_{\CP^1}h^{\otimes k}(\sigma_1, \sigma_2) \omega.
\] 
The unitary representation of $\cG$ on $\cH_k$ is given by
\[\rho^{k}(g) s := g^{\otimes k} \circ s \circ g_{M}^{-1},\]
and it gives rise to the Lie algebra representation \[
d\rho^k(f) s = \left(ikf - \nabla^{\otimes k}_{X_{f}}\right) s
\]
of $\mathrm{Lie}(\cG) = C^{\infty}(\CP^1)$ on the smooth sections $\Gamma(\CP^1, \bL^{\otimes k})$.

Let $V_k := \Gamma_{\mathrm{hol}}(\CP^1, \bL^{\otimes k})$ be the space of holomorphic sections, and let 
$\Pi_k \colon \cH_k \rightarrow V_k$ be the orthogonal projection onto $V_k$ (the Szeg\H{o} projection). 
The group $\SU(2) \subseteq \cG$ acts on $\bL$ by holomorphic transformations, 
so the $V_k$ are unitary representations of $\SU(2)$. 
By the Borel--Weil Theorem they are irreducible, and all irreducible representations 
of $\SU(2)$ are of this form. To make the isomorphism explicit, we define for every $\alpha \in \C^2{}^*$ the holomorphic section $\sigma_{\alpha} \colon \CP^1 \rightarrow \bL^{\otimes k}$ 
by $\sigma_{\alpha}([\psi]) = \alpha^{\otimes k}|_{\C\psi}$.

\begin{Proposition} \label{Prop:explicitHolRep}
The map $S^k (\C^{2}{}^{*}) \rightarrow \Gamma_{\mathrm{hol}}(\CP^1, \bL^{\otimes k})$ defined by
\[
	\alpha^{\otimes k} \mapsto \sqrt{k+1}\sigma_{\alpha}
\]
is an isometric isomorphism 
of unitary $\SU(2)$-representations. 
\end{Proposition}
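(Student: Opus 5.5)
The plan is to build the map from the multilinear structure, check equivariance, get bijectivity from irreducibility and a dimension count, and fix the isometry constant with one explicit integral.

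\emph{Well-definedness and linearity.} For $[\psi]\in\CP^1$ the fibre of $\bL^{\otimes k}$ is $(\C\psi)^{*\otimes k}$. Given $\beta\in S^k(\C^2{}^*)\subseteq (\C^2{}^*)^{\otimes k}$, I set $\sigma_\beta([\psi]) := \beta|_{(\C\psi)^{\otimes k}}$. This is linear in $\beta$, and for $\beta=\alpha^{\otimes k}$ it equals $\sigma_\alpha$. Since the powers $\alpha^{\otimes k}$ span $S^k(\C^2{}^*)$ by polarization, the assignment $\alpha^{\otimes k}\mapsto\sqrt{k+1}\,\sigma_\alpha$ extends uniquely to the linear map $\beta\mapsto\sqrt{k+1}\,\sigma_\beta$. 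Holomorphy is checked in the affine chart $\psi=(1,w)$: there $\sigma_\alpha$ is $(\alpha_1+\alpha_2 w)^k$ times the holomorphic frame $e_1^{*\otimes k}|$, so it is holomorphic.

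\emph{Equivariance and bijectivity.} $\SU(2)$ acts on $\tau\subseteq\CP^1\times\C^2$ by $g\cdot(\,[\psi],v)=([g\psi],gv)$, hence on $\bL=\tau^*$ by the dual action. A direct check gives
\[
(g\cdot\sigma_\alpha)([\psi]) = g\cdot\big(\sigma_\alpha([g^{-1}\psi])\big) = (\alpha\circ g^{-1})^{\otimes k}|_{\C\psi} = \sigma_{\alpha\circ g^{-1}}([\psi]),
\]
which is exactly the action on $S^k(\C^2{}^*)$. The map is nonzero, and $S^k(\C^2{}^*)$ is irreducible, so Schur's lemma makes it injective. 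For surjectivity I compare dimensions: $\dim V_k=k+1$, either by Borel--Weil as quoted, or directly because a holomorphic section in the chart above is an entire function of $w$ of growth $O(|w|^k)$, hence a polynomial of degree at most $k$.

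\emph{Isometry.} Both Hermitian inner products are $\SU(2)$-invariant on an irreducible representation, so by Schur they are proportional. It therefore suffices to compare the norms of one vector, $\alpha=e_1^*$. On $S^k(\C^2{}^*)$, with the inner product induced from $(\C^2{}^*)^{\otimes k}$, we have $\|\alpha^{\otimes k}\|=1$. On the section side, the pointwise norm is $h^{\otimes k}(\sigma_\alpha,\sigma_\alpha)([\psi])=|\alpha(\psi)|^{2k}/|\psi|^{2k}$, which equals $(1+|w|^2)^{-k}$ in the chart. The curvature form with total area $2\pi$ is $\omega = 2r\,dr\,d\theta/(1+r^2)^2$ with $r=|w|$. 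Then
\[
\frac{1}{2\pi}\int_{\CP^1}(1+r^2)^{-k}\,\omega = \int_0^\infty \frac{2r\,dr}{(1+r^2)^{k+2}} = \frac{1}{k+1},
\]
so $\|\sqrt{k+1}\,\sigma_\alpha\|^2=1$, as required.

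The main obstacle is this last step. One must pin down the correct normalizations of $h$ and $\omega$ (the curvature convention $F_\nabla=-i\omega$ together with $\vol_\omega=2\pi$) and correctly identify the pointwise norm on $\tau^*$, since these fix the factor $\sqrt{k+1}$.
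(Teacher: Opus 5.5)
Your proposal is correct and follows essentially the same route as the paper: show that the map is an $\SU(2)$-equivariant linear isomorphism, use Schur's lemma to reduce the isometry claim to the norm of a single vector, and then compute $\|\sigma_{e_1^*}\|^2 = \frac{1}{k+1}$ in the affine chart with $\omega = 2(1+|w|^2)^{-2}\,dx\wedge dy$. The differences are cosmetic. The paper gets bijectivity from Riemann--Roch and evaluates the integral through a generating function in $k$. You instead use injectivity plus a dimension count, and integrate directly with the substitution $s=1+r^2$, which is simpler.
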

\noindent (They are the spin $s = k/2$ representations of dimension $d = k+1 = 2s+1$.)
\begin{proof}
 Since $\bL$ is the dual of the tautological line bundle $\tau$, the linear isomorphism follows from Riemann-Roch.
 The fact that it is an intertwiner follows straight from the definitions, and since both representations are irreducible, 
 it is an isometry up to scaling. The point is to determine the scaling factor.
 
 It is convenient to use the anti-unitary isomorphism $\C^2 \simeq \C^2{}^*$ to write $\alpha(\chi) = \lra{\Omega}{\chi}$ for an $\Omega \in \C^2$.
 Since
$\sigma_{\Omega}([\psi])(l) = \frac{\lra{\Omega}{\psi}^k}{\lra{\psi}{\psi}^k}\lra{\psi^{\otimes k}}{l}$ for all $l \in \tau^{\otimes k}_{[\psi]}$,
we have
\[
h^{\otimes k}(\sigma_{\Omega}, \sigma_{\Omega})_{[\psi]} = \left(\frac{\lra{\Omega}{\psi}\lra{\psi}{\Omega}}{\lra{\psi}{\psi}}\right)^k.
\]
In the projective chart $[1,z] \mapsto z$ around the origin $[1,0]$, the 
curvature of the canonical connection $\nabla$ on $\bL$ is $F_{\nabla} = 1/(1+|z|^2)^2 dz \wedge d\overline{z}$, 
so $F_{\nabla} = -i \omega $ for the symplectic form $\omega = \frac{2}{(1+|z|^2)^2} dx \wedge dy$ with total volume $\vol_{\omega}(\CP^1) = 2\pi$.
With $P_{\psi} = \frac{\ket{\psi}\bra{\psi}}{\lra{\psi}{\psi}}$ and $\Omega = (u, 0)$, we then find
\begin{eqnarray*}
\lra{\sigma_{\Omega}}{\sigma_{\Omega}} &=& \frac{1}{2\pi}\int_{\CP^1} h(\Omega, P_{\psi}\Omega)^k \omega \\
& = & \frac{1}{\pi}\int_{\C} \overline{u}^ku^k(1+|z|^2)^{-2-k}dxdy.
\end{eqnarray*}
The generating function $G(\lambda) = \sum_{k=0}^{\infty} \frac{1}{k!}(-\lambda)^k\lra{\sigma^{(k)}_{\Omega}}{\sigma^{(k)}_{\Omega}}$ is then
\begin{eqnarray*}
G(\lambda) &=& \frac{1}{\pi}\int_{0}^{2\pi}\int_{0}^{\infty} (1+r^2)^{-2}e^{-\lambda |u|^2(1+r^2)^{-1}} rdrd\phi\\
 &=& \int_{0}^{1}e^{-\lambda |u|^2 y} dy = 
\frac{1}{\lambda |u|^2}(1-e^{-\lambda|u|^2}),
\end{eqnarray*}
so $\lra{\sigma^{(k)}_{\Omega}}{\sigma^{(k)}_{\Omega}} = \frac{1}{k+1}\lra{\Omega}{\Omega}^k$. 
\end{proof}

Since $\rho^k$ is a unitary representation, the `conditional expectation' 
\[\Phi_k \colon \cG \rightarrow B(V_k), \quad g \mapsto \Pi_k \circ \rho^k(g) \circ \Pi_k\]
is positive: we have 
\[
	\sum_{i,j} \lra{v_i}{\Phi_k(g^{-1}_{i}g_j)v_j} \geq 0 
\]
for all finite sequences $(v_i, g_i)$ in $V_k \times \cG$. A fortiori, the map $\phi_{k}(g) := \lra{\Omega_k}{\Phi_k(g)\Omega_k}$
is positive for every $\Omega_k \in V_k$. If $\Omega_k$ is cyclic for $(\rho^k, \cH_k)$, then the $\cG$-representation can be recovered from 
$\phi_k$ by the GNS-construction.
Since $\SU(2)$ acts on $\bL$ by \emph{holomorphic} transformations, the 
map $\Phi_k$ is equivariant for both the left and the right action of $\SU(2)$.

Since every holomorphic section is smooth, and since every smooth section is a smooth vector in the $\cG$-representation 
$(\rho^k, \cH_k)$, the positive map $\Phi_k \colon \cG \rightarrow B(V_k)$ is smooth.
Its derivative 
\[
	d\Phi_k(f) = \Pi_k \circ (ikf \one - \nabla^{\otimes k}_{X_f}) \circ \Pi_k
\]
maps into the skew-Hermitian operators, $d\Phi_k \colon C^{\infty}(\CP^1) \rightarrow \fu(V_k)$. 
This map is $\su(2)$-equivariant; in particular, its pullback along the inclusion  $\iota \colon \su(2) \hookrightarrow  C^{\infty}(\CP^1)$ is precisely 
the spin $k+1$ representation $\pi_k \colon \su(2) \rightarrow  \fu(V_k)$.
In other words, the following diagram commutes:
\begin{equation}\label{unlooped}
\begin{tikzcd}[column sep=tiny]
C^{\infty}(\CP^1) \arrow[rr, "d\Phi_k"] &  & \fu(V_k)\\
& \su(2)\arrow[ul, hook, "\iota"]\arrow[ur, hook', "\pi_k"']
\end{tikzcd}
\end{equation}
(Note that $d\Phi_k$ is \emph{not} a Lie algebra homomorphism, but its composition with $\iota$ is.)
If we rescale $d\Phi_k$ by $\hbar := 1/k$, we obtain the \emph{geometric quantisation} map $Q_k := \frac{1}{k} d\Phi_k$,
\[
	Q_{k}(f) = \Pi_k\circ \Big(i f \one - \frac{1}{k}\nabla^{\otimes k}_{X_f}\Big) \circ \Pi_k.
\]
\subsection{Berezin-Toeplitz quantisation}

The \emph{Berezin-Toeplitz quantisation} of $f\in C^{\infty}(\CP^{1}, \R)$ is the operator 
\begin{equation}\label{BerezinToeplitz}
	T_k(f) := \Pi_k \circ (f \one) \circ \Pi_k.
\end{equation}
Note that $T_k \colon C^{\infty}(\CP^1, \R) \rightarrow i \fu(V_k)$ maps into the Hermitian operators.
Berezin-Toeplitz quantisation is related to geometric quantisation by Tuynman's Lemma:
\begin{Lemma}[Tuynman, \cite{Tuynman1987, BHSS91}]
Let $(M, \omega)$ be a prequantisable K\"ahler manifold, with prequantum line bundle $(\bL, \nabla)$.
Then 
\[\Pi_k \circ \nabla_{X_f} \circ \Pi_k = \Pi_k \circ (-i/2) \Delta f \circ \Pi_k,\]
where $\Delta$ is the Laplace-Beltrami operator for the Riemannian metric.
\end{Lemma}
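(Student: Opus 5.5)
Tuynman's Lemma is a local Kähler identity, so the plan is to verify it on the dense set of smooth sections and then reduce everything to one holomorphicity argument. Concretely, we show that for holomorphic sections $s_1,s_2\in V_k$ one has
\[
\lra{s_1}{\nabla_{X_f}s_2} = \Big\langle s_1, \big(-{\textstyle\frac{i}{2}}\Delta f\big) s_2\Big\rangle .
\]
Since $\Pi_k$ is the orthogonal projection onto $V_k$, this equality of matrix elements is exactly the claimed operator identity. The tensor power plays no role beyond the curvature bookkeeping, so we may write $\nabla$ for $\nabla^{\otimes k}$, which has curvature $-ik\omega$.

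The first step is to split $X_f$ into its type components, $X_f = X_f^{1,0}+X_f^{0,1}$. For holomorphic $s_2$ we have $\nabla_{X^{0,1}_f}s_2 = 0$, because the Chern connection has $\nabla^{0,1}=\bar\partial$. Hence
\[
\nabla_{X_f}s_2 = \nabla_{X_f^{1,0}}s_2 .
\]
The second step moves the derivative onto $s_1$ by integrating by parts. Metric compatibility gives
\[
X_f^{1,0}\,h(s_2,s_1) = h\big(\nabla_{X^{1,0}_f}s_2,s_1\big) + h\big(s_2,\nabla_{\overline{X^{1,0}_f}}s_1\big).
\]
The last term vanishes, since $\overline{X^{1,0}_f}=X^{0,1}_f$ is antiholomorphic and $s_1$ is holomorphic (with $h$ antilinear in the appropriate slot). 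Therefore
\[
\int h\big(\nabla_{X_f}s_2,s_1\big)\,\omega = \int X^{1,0}_f\big(h(s_2,s_1)\big)\,\omega = -\int \mathrm{div}_\omega\big(X^{1,0}_f\big)\, h(s_2,s_1)\,\omega .
\]

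The third step computes $\mathrm{div}_\omega(X^{1,0}_f)$, the complex divergence with respect to the Liouville form. In a local holomorphic coordinate write $\omega = i g\, dz\wedge d\bar z$; for $\CP^1$, in the chart used in Proposition~\ref{Prop:explicitHolRep}, $g=(1+|z|^2)^{-2}$. From $i_{X_f}\omega=-df$ one gets $X_f^{1,0} = c\, g^{-1}\partial_{\bar z}f\,\partial_z$ with $c=\pm i$ fixed by the sign conventions. Then
\[
\mathrm{div}_\omega\big(X^{1,0}_f\big) = g^{-1}\partial_z\big(g\cdot c\,g^{-1}\partial_{\bar z}f\big) = c\,g^{-1}\partial_z\partial_{\bar z}f ,
\]
which is a fixed multiple of $\Delta f$ for the Kähler metric; in dimension one the metric Laplacian is $4g^{-1}\partial_z\partial_{\bar z}$ up to the normalisation of $g$. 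Collecting constants should give $-\mathrm{div}(X^{1,0}_f) = -\tfrac{i}{2}\Delta f$ with the conventions $F_\nabla=-i\omega$ and the metric $g_{\mathrm{Riem}}=\omega(\cdot,J\cdot)$. The imaginary part of the result is also forced by skew-Hermiticity of $\nabla_{X_f}$, which provides a sanity check.

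The main obstacle is purely the constant and sign bookkeeping: the factor $\tfrac12$, the sign of $i$, and the convention for $\Delta$ (positive or negative, and normalisation relative to $\omega$). We will fix these by testing on $f=z$, the height function, for which $X_f$ generates the rotation action and $\Delta z = \pm 2z$ on the unit sphere. Comparing with the known action of $\pi_k$ on weight vectors $\sigma_\alpha$ (Proposition~\ref{Prop:explicitHolRep}) then pins down the normalisation. Finally, for a general Kähler manifold the same computation works with $\mathrm{div}(X^{1,0}_f)=c\,g^{j\bar l}\partial_j\partial_{\bar l}f$, so the proof is not specific to $\CP^1$.
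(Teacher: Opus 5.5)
The paper gives no proof of this lemma and only cites Tuynman and Bordemann--Hoppe--Schaller--Schlichenmaier. Your argument is the standard proof, and it is correct: holomorphicity of $s_2$ kills $\nabla_{X_f^{0,1}}$, and metric compatibility together with holomorphicity of $s_1$ lets you integrate $X_f^{1,0}$ by parts, leaving $-\mathrm{div}_\omega(X_f^{1,0})$. The curvature never enters. That is exactly why the identity holds on $\bL^{\otimes k}$ with the same $k$-independent right-hand side, which the paper needs for $Q_k(f)=iT_k(f+\frac{1}{2k}\Delta f)$.

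The one thing you leave open is the constant, and it is cleaner to compute it directly than to calibrate it.

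With $\omega=ig\,dz\wedge d\bar z$ and $i_{X_f}\omega=-df$, one gets $X_f^{1,0}=ig^{-1}(\partial_{\bar z}f)\,\partial_z$, so $\mathrm{div}_\omega(X_f^{1,0})=ig^{-1}\partial_z\partial_{\bar z}f$. The metric is $\omega(\cdot,J\cdot)=2g\,(dx^2+dy^2)$, whose Laplacian is $\Delta=\mathrm{div}\circ\mathrm{grad}=2g^{-1}\partial_z\partial_{\bar z}$. Taking $4g^{-1}\partial_z\partial_{\bar z}$ literally with this $g$ would give $-\frac{i}{4}$. Hence $\mathrm{div}_\omega(X_f^{1,0})=\frac{i}{2}\Delta f$, and the lemma holds with $\Delta$ the negative semidefinite Laplacian. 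In higher dimension the same computation needs $d\omega=0$ to kill the first-order term coming from $\partial_j(\det g\,g^{j\bar l})$.

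Your proposed calibration through $\pi_k$ has two traps.

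First, with $F_\nabla=-i\omega$ and $i_{X_f}\omega=-df$, one finds $\bar\partial(\nabla^{\otimes k}_{X_f}s)=-ik(\bar\partial f)\,s$ for holomorphic $s$ when $X_f$ preserves the complex structure. So it is $ikf+\nabla^{\otimes k}_{X_f}$ that preserves $V_k$, not the paper's displayed $ikf-\nabla^{\otimes k}_{X_f}$. Calibrating through the paper's formula would flip your sign.

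Second, $\Delta z=-2z$ is the unit-sphere value. For the area-$2\pi$ form $\omega$, linear functions have eigenvalue $-4$. Mixing scales is harmless only because the identity is invariant under $\omega\mapsto\lambda\omega$ when $X_f$ and $\Delta$ are rescaled together.
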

%
Since $\CP^1$ is a K\"ahler manifold, Tuynman's Lemma for $\bL^{\otimes k}$ implies that the geometric quantization $Q_k$ 
is related to the Berezin-Toeplitz quantization $T_k$
by
\[
	Q_k(f) := i T_k(f + {\textstyle \frac{1}{2k}} \Delta f).
\]
In particular, 
this means that $Q_k(f) \rightarrow i T_k(f)$ uniformly in $\|\,\cdot\,\|_{C^2}$ for $k \rightarrow \infty$.

The Berezin-Toeplitz quantisation enjoys the following properties, cf.~\cite{BMS94} Theorem 4.1 and 4.2 for the first two parts,
 and \cite{G79} for the third part.
\begin{Theorem}[\cite{BMS94}]\label{ThmBerezinToeplitzProperties} Let $(\Sigma, \omega)$ be a prequantizable compact K\"ahler manifold of dimension $2d$, and let $\bL \rightarrow \Sigma$
be a  very ample prequantum line bundle. 
\begin{itemize}
\item[1)] For every $f\in C^{\infty}(\Sigma)$, there exists a constant $c_f>0$ such that
\[
\|f\|_{\infty} - \frac{c_f}{k}\leq\|T_k(f)\| \leq \|f\|_{\infty}.
\] 
\item[2)] For every $f, g\in C^{\infty}(\Sigma)$, there exists a constant $c_{f,g}>0$ such that 
\[
	\|k \,[T_k(f), T_k(g)] - iT_k(\{f,g\})\| \leq \frac{c_{f,g}}{k}.
\]
\item[3)] For $f_1, \ldots f_n \in C^{\infty}(\Sigma)$, there exists a constant $c_{f_1, \ldots, f_n} >0$ such that 
\[
\Big| \frac{1}{\mathrm{dim}(V_k)}\tr(T_k(f_1)\cdots T_k(f_n)) - \frac{1}{\vol(\Sigma)}\int_{\Sigma}f_1\cdots f_n\; (\omega^{d}/d!) \Big| \leq \frac{c_{f_1,\ldots,f_n}}{k}.
\]
\end{itemize}
\end{Theorem}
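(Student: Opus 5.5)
The plan is to work with Rawnsley coherent states and Berezin symbols, and to reduce all three statements to one composition estimate for Toeplitz operators. For $x\in\Sigma$ and a unit vector $q$ in the fibre of $\bL^{\otimes k}$ over $x$, let $e^{(k)}_q\in V_k$ be defined by $\lra{e^{(k)}_q}{s} = h^{\otimes k}(q, s(x))$. The \emph{covariant symbol} of $A\in B(V_k)$ is
\[
\sigma_k(A)(x) := \lra{e^{(k)}_q}{A e^{(k)}_q}/\lra{e^{(k)}_q}{e^{(k)}_q},
\]
which is independent of the choice of $q$. Two standard facts will be used.
\begin{itemize}
\item[(i)] The Bergman kernel $B_k(x) = \lra{e_q}{e_q}$ satisfies $\frac{1}{\vol(\Sigma)}B_k = \frac{\dim V_k}{\vol(\Sigma)^2}(1+O(1/k))$. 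On homogeneous spaces such as $\CP^1$ this is exact, and Proposition~\ref{Prop:explicitHolRep} gives it explicitly. Consequently
\[
\tr(A) = \int_{\Sigma}\sigma_k(A)\,B_k\,\omega^d/d! .
\]
\item[(ii)] $\sigma_k(T_k(f)) = \mathcal{B}_kf$, where $\mathcal{B}_k$ is the Berezin transform. Its integral kernel $|\lra{e_x}{e_y}|^2/(B_k(x)B_k(y))$ concentrates at the diagonal like $e^{-k\,\mathrm{dist}^2}$. Stationary phase then gives $\mathcal{B}_kf = f + \frac{1}{k}\Delta' f + O(k^{-2})$ in $C^0$, where $\Delta'$ is a multiple of the Laplacian and the remainder is controlled by finitely many $C^N$-norms of $f$.
\end{itemize}

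For part 1, the upper bound is immediate: $\|\Pi_k f\Pi_k\|\le\|f\|_\infty$, since $\Pi_k$ is an orthogonal projection and multiplication by $f$ has norm $\|f\|_\infty$. For the lower bound, choose $x_0$ with $|f(x_0)|=\|f\|_\infty$. Then
\[
\|T_k(f)\|\ \ge\ |\sigma_k(T_k f)(x_0)| = |\mathcal{B}_kf(x_0)| \ \ge\ \|f\|_\infty - c_f/k
\]
by the expansion in (ii).

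For parts 2 and 3, the key intermediate statement is the composition estimate
\[
T_k(f)T_k(g) = T_k(fg) + \frac{1}{k}T_k(C_1(f,g)) + O(k^{-2})\quad\text{in operator norm},
\]
where $C_1$ is a bidifferential operator whose antisymmetric part is $C_1(f,g)-C_1(g,f) = i\{f,g\}$, with sign conventions matching $\omega$ and $\nabla$. Part 2 follows by antisymmetrising and using part 1 to bound the norm of the remainder. For part 3, iterate the estimate to get
\[
T_k(f_1)\cdots T_k(f_n) = T_k(f_1\cdots f_n) + O(1/k).
\]
Taking traces, $\frac{1}{\dim V_k}|\tr(R)|\le\|R\|$, so the remainder contributes $O(1/k)$. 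Finally, by (i) and (ii),
\[
\frac{1}{\dim V_k}\tr T_k(h) = \frac{1}{\vol(\Sigma)}\int_\Sigma h\,\omega^d/d! + O(1/k),
\]
since $\mathcal{B}_k h - h = O(1/k)$.

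The main obstacle is the composition estimate with a norm-controlled remainder. In the general Kähler case I would follow Bordemann--Meinrenken--Schlichenmaier. Lift sections to the unit circle bundle $P$, so that $V_k$ becomes the $k$-th Fourier mode of the Hardy space and $\Pi_k$ becomes the restriction of the Szeg\H{o} projector. Then use Boutet de Monvel--Guillemin's theory of generalized Toeplitz operators: $\Pi\, f\, \Pi$ is a Toeplitz operator of order $0$, and compositions have symbols computed by a Moyal-type star product, whose first-order term is $\frac{i}{2}\{f,g\}$ plus a symmetric part. The $O(k^{-2})$ norm bound on each Fourier mode follows from the $L^2$-continuity of Toeplitz operators of negative order. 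On $\CP^1$ I would first try a more elementary route. For polynomials $f,g$ of degree at most $l$ (spherical harmonics), $T_k(f)$ and $T_k(g)$ act through the finite-dimensional $\SU(2)$-calculus, so the expansion can be verified directly with Clebsch--Gordan coefficients. One then extends to smooth $f,g$ by density, using part 1's bound $\|T_k(f)\|\le\|f\|_\infty$ and a tail estimate on the spherical-harmonic expansion of $f$. The delicate point there is making the $O(1/k)$ constants uniform in the harmonic degree with polynomial growth, so that the rapid decay of the coefficients of smooth functions absorbs them.
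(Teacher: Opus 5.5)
The paper gives no proof of this theorem. It quotes Theorems~4.1 and 4.2 of \cite{BMS94} for parts 1) and 2), and Guillemin's trace asymptotics for Toeplitz operators \cite{G79} for part 3).

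Your proposal is correct in outline and rests on the same engine as \cite{BMS94}: Boutet de Monvel--Guillemin Toeplitz structures on the circle bundle. The key fact is that a circle-invariant Toeplitz operator $R$ of order $-2$ satisfies $\|R|_{V_k}\|\le Ck^{-2}$. The reason is that $D^2R$ has order $0$, hence is bounded, and the circle generator $D$ acts as $k$ on $V_k$. That, not ``$L^2$-continuity of operators of negative order'', is the mechanism.

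The organisation differs from \cite{BMS94}, which only needs the commutator. There, $i[T_f,T_g]$ and a suitable multiple of $T_{\{f,g\}}D^{-1}$ have the same principal symbol of order $-1$, so their difference has order $-2$. You go through a first-order composition formula instead. This is equally cheap in the same calculus:
\begin{itemize}
\item $T_fT_g-T_{fg}$ is circle-invariant of order $-1$;
\item so it equals $T_cD^{-1}$ modulo order $-2$, for a smooth $c$ on $\Sigma$;
\item the antisymmetric part of $c$ is read off from the commutator symbol.
\end{itemize}
Calling this a ``Moyal-type star product'' overstates what the calculus gives, but you never need $C_1$ to be bidifferential.

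What your organisation buys is part 3). With $T_k(f_1)\cdots T_k(f_n)=T_k(f_1\cdots f_n)+O(1/k)$ in norm and $|\tr R|\le\dim(V_k)\|R\|$, everything reduces to the exact identity $\tr T_k(h)=\int_\Sigma h\,B_k\,\omega^d/d!$. Guillemin's trace formula is thus replaced by uniform Bergman-density asymptotics, which are of comparable depth in general. The Berezin transform is not needed for part 3) at all.

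On $\CP^1$ your route is genuinely more elementary than the cited one.
\begin{itemize}
\item By $\SU(2)$-homogeneity $B_k$ is constant, so $\frac{1}{k+1}\tr T_k(h)=\frac{1}{2\pi}\int h\,\omega$ exactly.
\item Proposition~\ref{Prop:explicitHolRep} gives the Berezin kernel explicitly as a multiple of $\cos^{2k}(\theta/2)$. For real $f$ one has $df(x_0)=0$ at the extremum, so part 1) only needs the $O(1/k)$ second moment of this kernel.
\item The zeroth-order product estimate also avoids microlocal analysis. Write $T_k(f)T_k(g)-T_k(fg)=-\Pi_kf(1-\Pi_k)g\Pi_k$ and use the Hankel bound $\|(1-\Pi_k)g\Pi_k\|=O(k^{-1/2})$, which follows from the spectral gap of $\bar\partial^*\bar\partial$ on $\bL^{\otimes k}$.
\end{itemize}

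Your Clebsch--Gordan route to part 2) is harder than ``verified directly'' suggests. For $f,g$ of spins $l,l'$, the bracket $\{f,g\}$ has only spin-$L$ components with $l+l'+L$ odd. That is exactly where the leading (Gaunt) term of the $6j$-symbol asymptotics vanishes. So you need the subleading term of those asymptotics, uniformly with polynomial growth in $l,l'$.

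The tail, by contrast, is harmless. $T_k$ kills all spins above $k$, and for $l+l'>k$ any polynomial bound in $l,l'$ may be multiplied by $(l+l')/k>1$, so crude estimates suffice there.
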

Note that for $\Sigma = \CP^1$, we have $\mathrm{dim}(V_k) = k+1$ and $\vol(\Sigma) = 2\pi$.  

\section{Fuzzy sphere limits of Kac-Moody algebras}\label{Section:limits}

We identify $\fu(V_k)$ with $\fu(k+1)$, and $\su(V_k)$ with $\su(k+1)$.
Geometric quantisation yields a map $d\Phi_k \colon C^{\infty}(\CP^1) \rightarrow\fu(k+1)$,
which reduces to a map $d\overline{\Phi}_k \colon C^{\infty}_0(\CP^1) \rightarrow\su(k+1)$ by
tracing out the constants.
The corresponding map $Ld\overline{\Phi}_k \colon LC^{\infty}_0(\CP^1) \rightarrow L\su(k+1)$
at the level of loop algebras allows us to pull back the Kac-Moody cocycles $\psi_k$ on $L\su(k+1)$
to skew-symmetric bilinear forms $(Ld\overline{\Phi}_k)^*\psi_k$ on $LC^{\infty}_0(\CP^1)$ -- not quite cocycles because 
the map $Ld\overline{\Phi}_k$ is not quite a Lie algebra homomorphism. 
We show that in the limit $k \rightarrow \infty$, these bilinear forms converge to the cocycle $\psi_{\infty}$
from \eqref{eq:cocycleTheoremA}, provided that they are rescaled by a factor of $6/k^{3}$.

\subsection{Comparison of cocycles using geometric quantisation}
On $L\su(k+1)$, every Lie algebra cocycle is cohomologous to $c_k \psi_k$ for some constant $c_k \in \R$, where $\psi_k$ is the 
Kac-Moody cocycle
\[
	\psi_k(X,Y) = \frac{1}{2\pi}\int_{\bS^1}\kappa_{\su(k+1)}(X, {\textstyle\frac{d}{dt}}Y)dt.
\]
Here $\kappa_{\su(k+1)}(X,Y) = \tr(XY)$ is the Killing form of $\su(k+1)$, normalized in such a way 
that the coroots $h = \mathrm{diag}(i,-i,0, \ldots, 0)$ satisfy $\kappa(h, h) = -2$. 
The corresponding central extension integrates to the group level if and only if $c_k$ is an integer \cite[Theorem~ 4.4.1]{PS86}.

It will be convenient to extend the cocycle $\psi_k$ from $L\su(k+1)$ to $L\fu(k+1)$ by the same formula.
We can then pull back the bilinear maps $\psi_k$ to $LC^{\infty}(\CP^1)$ along the map $L(d\Phi_{k}) \colon LC^{\infty}(\CP^1) \rightarrow L\fu(k+1)$. 

With $d\Phi_k(f) = ikT_k(f + \frac{1}{2k}\Delta f)$ and the notation $g' = \frac{d}{dt}g$, we find
\begin{eqnarray*}\textstyle
(Ld\Phi_k)^*\psi_k(f,g)&=& \frac{1}{2\pi} \int_{\bS^1}\kappa_{\su(k+1)}(d\Phi_k(f), d\Phi_k(g')) dt\\
&=& -   \frac{k^2}{2\pi}\int_{\bS^1}\tr(T_k(f + { \textstyle\frac{1}{2k}}\Delta f)T_k(g' + { \textstyle\frac{1}{2k}}\Delta g'))dt\\
&=& - \frac{k^2}{2\pi} \int_{\bS^1} \Big(\frac{k+1}{\vol_{\omega}(\CP^1)}\int_{\CP^1} fg' \omega + \mathcal{O}(1)\Big)dt\\
&=& -  \frac{k^2(k+1)}{2\pi \vol_{\omega}(\CP^1)} \int_{\bS^1}\int_{\CP^1} fg' \omega \wedge dt + \mathcal{O}(k^2).
\end{eqnarray*}
(The last step uses the compatibility between the trace and the 
Liouville measure in Berezin-Toeplitz quantisation, Theorem~\ref{ThmBerezinToeplitzProperties} part 3. The symbol $\mathcal{O}(k^2)$ stands for an additive term that is dominated by $c_{f,g}k^2$, where the constant $c_{f,g}$ is allowed to depend on $f$ as well as $g$.)

Rather than $LC^{\infty}(\CP^1)$ and $L\fu(k+1)$, we would like to compare the perfect Lie algebras $LC^{\infty}_0(\CP^1)$ and $L\su(k+1)$.
Since $d\Phi_k$ does not necessarily map the functions $C^{\infty}_0(\CP^1)$ with zero mean to the trace-free matrices $\su(k+1)$, 
we trace out the constants and define: 
\[d\overline{\Phi}_{k}(f) :=  d\Phi_k(f) - \textstyle{\frac{1}{k+1}}\tr(d\Phi(f)) \one.\]
Pulling back by $Ld\overline{\Phi}_{k}(f)$, we find a correction term 
\[
(Ld\overline{\Phi}_{k})^*\psi_k = (Ld\Phi_{k})^*\psi_k - \textstyle{\frac{1}{2\pi(k+1)}}\int_{\bS^1}\tr(d\Phi_k(f))\tr(d\Phi_k(g'))dt.
\]
Since $\tr(d\Phi_k(f)) = \tr(ikT_k(f) + \mathcal{O}(1))$ and $\tr(T_k(f)) = \frac{(k+1)}{\vol_{\omega}(\CP^1)}\int_{\CP^1}f\omega + \mathcal{O}(1)$, we find
that this correction term is 
\[
+\frac{k^2(k+1)}{2\pi \vol_{\omega}(\CP^1)^2}\int_{\bS^1}\Big(\int_{\CP^1} f\omega\Big)\Big(\int_{\CP^1} g'\mu\Big)dt + \mathcal{O}(k^2).
\]
Denoting the average over $\CP^1$ by \[\overline{f} := \textstyle \frac{1}{\vol_{\omega}(\CP^1)}\int_{\CP^1}f\omega,\] we then find
\begin{equation}\label{eq:ToeplitzLimitCocycle}
(Ld\overline{\Phi}_k)^*\psi_k = - \frac{k^2(k+1)}{2\pi \vol_{\omega}(\CP^1)}\int_{\bS^1}
\int_{\CP^1}(f-\overline{f})(g -\overline{g})' \omega dt + \mathcal{O}(k^2).
\end{equation}
In particular, the correction terms do not affect the highest order term on the subalgebra $C^{\infty}_0(\CP^1)$
of functions that integrate to zero.

Since the cocycle $\psi_{\infty}$ (considered as a cocycle on the Lie algebra of Hamiltonian vector fields) is independent of the choice of volume form, we can choose $\mu = \omega$ to facilitate comparison with $(Ld\overline{\Phi}_k)^*\psi_k$. Since $\vol_{\omega}(\bS^2) = 2\pi$,
the prefactor in \eqref{eq:cocycleTheoremA} becomes $6/(2\pi)^2$, and
we obtain the (pointwise) limit
\begin{eqnarray}\label{limitk3}
\lim_{k \rightarrow \infty} -\frac{6}{k^3} (Ld\overline{\Phi}_k)^*\psi_k = \psi_{\infty}.
\end{eqnarray}


\subsection{Integrality of cocycles and the Limit Theorem}\label{Section:Compatibility}

We consider $\SU(2)$ as a subgroup of $\cG$ by the usual biholomorphic action on the line bundle $\bL \rightarrow \CP^1$, 
and we denote the inclusion by $\iota \colon \SU(2) \hookrightarrow \cG$. 
By the Borel-Weil Theorem, the unitary spin $s=k+1$ representation $\pi_k \colon \SU(2) \rightarrow U(V_k)$ 
is the pullback of 
$\Phi_{k} \colon \cG \rightarrow \mathrm{End}(V_k)$ 
along the inclusion $\iota \colon \SU(2)\hookrightarrow \cG$. Applying the loop functor, we obtain the following commutative diagram:
\begin{equation}\label{CDgrouplevel}
\begin{tikzcd}[column sep=tiny]
\,L\cG\, \arrow[rr, dashed, "L\Phi_k"] &  & L\mathrm{End}(V_k)\\
& LSU(2).\arrow[ul, hook, "L\iota"]\arrow[ur, hook', "L\pi_k"']
\end{tikzcd}
\end{equation}
Note that the (dashed) horizontal map is \emph{not} a group homomorphism, but the two diagonal maps are.
At the Lie algebra level, this yields the commutative diagram 
\begin{equation}\label{CDLieAlgebrawithtrace}
\begin{tikzcd}[column sep=tiny]
LC^{\infty}(\CP^1) \arrow[rr, dashed, "Ld\Phi_k"] &  & L\fu(V_k)\\
& L\su(2),\arrow[ul, hook, "L\iota"]\arrow[ur, hook', "L\pi_k"']
\end{tikzcd}
\end{equation}
where again the map $Ld\Phi_k$ is not a Lie algebra homomorphism.

Since $\su(2)$ maps to 
the space $C^{\infty}_0(\CP^1)$ of functions that integrate to zero, and since
$\overline{d\Phi}_k$ maps into the traceless matrices, we 
obtain the following commutative diagram:
\begin{equation}\label{CDLieAlgebrawithouttrace}
\begin{tikzcd}[column sep=tiny]
LC_0^{\infty}(\CP^1) \arrow[rr, dashed, "L\overline{d\Phi}_k"] &  & L\su(V_k)\\
& L\su(2).\arrow[ul, hook, "L\iota"]\arrow[ur, hook', "L\pi_k"']
\end{tikzcd}
\end{equation}
Again $L\iota$ and $L\pi_k$ are Lie algebra homomorphisms, but the horizontal map 
$L\overline{d\Phi}_k$ is not.
Nonetheless, the cocycles 
\begin{eqnarray}
\psi_{1}(\xi,\eta) &=& \frac{1}{2\pi}\int_{\bS^1}\kappa_{\su(2)}(\xi, \eta') dt\\
\psi_k(X,Y) &=& \frac{1}{2\pi}\int_{\bS^1}\kappa_{\su(k+1)}(X, Y') dt\\
\psi_{\infty}(f,g) &=& \frac{6}{(2\pi)^2}\int_{\bS^1}\Big(\int_{\bS^2}fg' \omega\Big)dt
\end{eqnarray}
on $L\su(2)$, $L\su(k+1)$ and $LC_0^{\infty}(\bS^2)$ can be arranged in a diagram 
\begin{equation}\label{CDcocycleswithouttrace2}
\begin{tikzcd}[column sep=tiny]
c_{\infty} \psi_{\infty} \arrow[dr, two heads, "L\iota^*"']&  & \arrow[dl, two heads, "L\pi^*_k"] \arrow[ll, dashed, "L\overline{d\Phi}^*_k"'] c_k \psi_{k}\\
& c_1 \psi_{1}.
\end{tikzcd}
\end{equation}
For judicious choices of the constants, this diagram is `asymptotically commutative' in the following sense:
\begin{Theorem}\label{Theorem:convergence}
The Lie algebra cocycles $c_1\psi_1$, $c_k\psi_k$ and $c_{\infty}\psi_{\infty}$ integrate to the group level if and only if the 
corresponding constants $c_1$, $c_k$ and $c_{\infty}$ are integers. 
\begin{itemize}
\item[a)]
We have $(L\iota)^* c_{\infty}\psi_{\infty} = c_1 \psi_1$ if and only if 
\[c_{1} = - c_{\infty}.\]
\item[b)]
We have $(L\pi_k)^* c_k \psi_k = c_1 \psi_1$ if and only if 
\[\textstyle c_{1} = \frac{k(k+1)(k+2)}{6} c_k.\]
\item[c)] 
Finally, if $c_k = - \frac{6}{k(k+1)(k+2)}c_\infty$ we have the (pointwise) limit
\[\lim_{k \rightarrow \infty} (Ld\overline{\Phi}_k)^* c_k \psi_k(f,g) = c_{\infty} \psi_{\infty}(f,g).\]
\end{itemize}
\end{Theorem}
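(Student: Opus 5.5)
The integrality statement needs no new work. For $c_1$ and $c_k$ it is \cite[Theorem~4.4.1]{PS86}, since $\psi_1$ and $\psi_k$ are the basic Kac--Moody cocycles normalised by $\kappa(h,h)=-2$ on a coroot. For $c_\infty$ it is Theorem~\ref{Thm:integralityCocycles}, because $\psi_\infty$ as written (with $\mu=\omega$, $\vol_\omega=2\pi$) is exactly the normalised cocycle \eqref{eq:LKcocycle}, whose prefactor $12\pi/(2\pi)^3$ equals $6/(2\pi)^2$. The remaining parts a), b), c) are explicit computations on loops. Since all cocycles have the form $\int\kappa(X,Y')\,dt$, it suffices to check them on elements $\phi X$, $\chi Y$ with $\phi,\chi\in C^\infty(\bS^1)$ and $X,Y\in\su(2)$.

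For a), I would reuse the computation in the proof of Theorem~\ref{Thm:integralityCocycles}. There $\iota^*\kappa_\infty=-\frac{1}{2\pi}\kappa_{\su(2)}$, which gives $(L\iota)^*\psi_\infty=-\frac{1}{2\pi}\int\kappa_{\su(2)}(\xi,\eta')dt=-\psi_1$. Hence $(L\iota)^*c_\infty\psi_\infty=c_1\psi_1$ holds if and only if $c_1=-c_\infty$, because $\psi_1\neq0$.

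For b), $(L\pi_k)^*\psi_k(\phi X,\chi Y)=\frac{1}{2\pi}\int\phi\chi'dt\;\tr(\pi_k(X)\pi_k(Y))$. The form $\tr(\pi_k(X)\pi_k(Y))$ is invariant on $\su(2)$, so it is the Dynkin index $j_k$ times $\kappa_{\su(2)}$. To find $j_k$, evaluate on $h=\mathrm{diag}(i,-i)$: $\pi_k(h)$ has eigenvalues $i(k-2m)$ for $m=0,\dots,k$, so $\tr(\pi_k(h)^2)=-\sum_{m}(k-2m)^2=-\frac{k(k+1)(k+2)}{3}$. Comparing with $\kappa_{\su(2)}(h,h)=-2$ gives $j_k=\frac{k(k+1)(k+2)}{6}$. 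Therefore $(L\pi_k)^*c_k\psi_k=c_kj_k\psi_1$, and this equals $c_1\psi_1$ if and only if $c_1=j_kc_k$.

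For c), I would start from \eqref{eq:ToeplitzLimitCocycle}, which already holds pointwise for $f,g\in LC^\infty_0(\CP^1)$ (so $\overline f=\overline g=0$):
\[(Ld\overline{\Phi}_k)^*\psi_k(f,g)=-\frac{k^2(k+1)}{(2\pi)^2}\int_{\bS^1}\int_{\CP^1}fg'\,\omega\,dt+\mathcal{O}(k^2).\]
Multiplying by $c_k=-6c_\infty/(k(k+1)(k+2))$ gives
\[c_\infty\frac{k}{k+2}\cdot\frac{6}{(2\pi)^2}\int\!\!\int fg'\omega\,dt+\mathcal{O}(k^{-1}),\]
which tends to $c_\infty\psi_\infty(f,g)$. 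This is consistent with \eqref{limitk3}, since $c_k\sim-6c_\infty/k^3$.

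The main obstacle is justifying the error terms in \eqref{eq:ToeplitzLimitCocycle} for loops rather than for single functions. Theorem~\ref{ThmBerezinToeplitzProperties}(3) gives a constant $c_{f_t,g'_t}$ for each fixed $t$, but we need a bound that is uniform in $t\in\bS^1$. I would get this by noting that the constants in \cite{BMS94,G79} are controlled by finitely many $C^N$-norms of the functions involved, and these norms are bounded uniformly in $t$ because $f,g$ are smooth on the compact set $\bS^1\times\CP^1$. The Laplacian correction $\frac{1}{2k}\Delta f$ enters at order $k^{-1}$ relative to the leading term and is covered by the same bounds. Integrating over $\bS^1$ then preserves the $\mathcal{O}(k^2)$ remainder, and the limit follows.
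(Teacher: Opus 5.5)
Your proposal is correct and follows the paper's proof step by step. Integrality comes from \cite{PS86} and Theorem~\ref{Thm:integralityCocycles}; part a) from $\iota^*\kappa_\infty=-\frac{1}{2\pi}\kappa_{\su(2)}$ as in \eqref{eq:Compatible2withInfty}; part b) from the Dynkin index evaluated on the coroot $h$; and part c) from the Berezin--Toeplitz expansion \eqref{eq:ToeplitzLimitCocycle} and \eqref{limitk3}. Your attention to the $t$-uniformity of the error constants fills a point the paper passes over silently. A Banach--Steinhaus equicontinuity argument on the compact families $\{f_t\}$ and $\{g'_t\}$ in $C^\infty(\CP^1)$ would settle it as well.
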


\begin{Remark}
Note that $\frac{k(k+1)(k+2)}{6}$ is an integer for all $k\in \Z$, and that the central charges $c_k$ in point c 
satisfy $0 < c_k <1$ for large $k$. In particular, the corresponding cocycles $c_k\psi_k$ do not integrate 
to the group level.
\end{Remark}
%
\begin{proof}
The fact that $c_k\psi_k$ integrates to the group level if and only if $c_k\in \Z$ is well known \cite[Theorem~4.4.1]{PS86}, 
and the fact that $c_{\infty}\psi_{\infty}$ integrates to the group level if and only if $c_{\infty}\in \Z$ is Theorem~\ref{Thm:integralityCocycles}.
Since $\vol_{\omega}(\CP^1) = 2\pi$, the fact that $(L\iota)^*\psi_{\infty} = \psi_1$ if and only if $c_{\infty} = -c_1$ follows from equation~\eqref{eq:Compatible2withInfty} in the proof of Theorem~\ref{Thm:integralityCocycles}.


The pullback of the invariant bilinear form $\kappa_{\su(k+1)}(X, Y) = \tr(XY)$
along the Lie algebra homomorphism $\pi_k \colon \su(2) \rightarrow \su(k+1)$ is an invariant bilinear form
on the simple Lie algebra $\su(2)$, and hence a multiple of $\kappa_{\su(2)}$. 
For part b), it suffices to determine this constant of proportionality.
The Killing form $\kappa_{\su(2)}$ on $\su(2)$ is normalised so that $\kappa_{\su(2)}(h,h) = -2$ for the coroot $h = \mathrm{diag}(i,-i)$.
The Killing form $\kappa_{\su(k+1)}(X, Y) = \tr(XY)$ is normalised so that it evaluates to $-2$ on the coroots of $\su(k+1)$.
Since $\mathrm{spec}(\pi_k(h)) = \{-ik, {-i(k-2),} \ldots, {i(k-2)}, ik\}$, we have 
\begin{eqnarray*}
\kappa_{\su(k+1)}(\pi_k(h), \pi_k(h)) &=& -\big((-k)^2 + (-k + 2)^2 + \ldots + (k-2)^2 + k^2)\\
 &=& -{\textstyle \frac{1}{3}}k(k+1)(k+2).
\end{eqnarray*}
So $\pi^*_k \kappa_{\su(k+1)} = \frac{1}{6}k(k+1)(k+2)\kappa_{\su(2)}$, and 
\[
(L\pi_k)^*\psi_k = {\textstyle\frac{k(k+1)(k+2)}{6}}\psi_1.
\]
Finally, with $c_k = -\frac{6}{k(k+1)(k+2)}c_\infty$, equation \eqref{limitk3} 
yields the pointwise limit $\lim_{k\rightarrow \infty} (L\overline{d\Phi}_k)^* c_k \psi_k(f,g) = c_{\infty}\psi_{\infty}(f,g)$ 
for all $f, g \in C^{\infty}_0(\CP^1)$.
\end{proof}

\subsection{Fuzzy sphere limits for twisted Kac-Moody algebras}\label{sec:twistlim}
The point reflection $P(x) = -x$ on the 2-sphere $\bS^2$ induces an automorphism $\phi \mapsto P \circ \phi \circ P^{-1}$ on $\SDiff(\bS^2, \mu)$, 
and therefore an automorphism $\alpha_{P}(X_f) = P_* X_{f} \circ P^{-1}$ on its Lie algebra $\X(\bS^2, \mu)$. 
Since $P^*\mu = -\mu$, this induces the automorphism $\alpha_{P}(f) = -(P^{-1})^* f$ on the Poisson algebra $C^{\infty}_0(\bS^2)$.

If we identify the 2-sphere with $\CP^1$, then the point reflection  
is the anti-holomorphic map 
\[
P \colon \CP^1 \rightarrow \CP^1, \quad P([z:w]) = [-\overline{w}: \overline{z}].
\]
On the tautological line bundle $\tau \rightarrow \CP^1$, 
it is covered by the anti-holomorphic bundle anti-isomorphism 
\[P^{\tau} \colon \tau \rightarrow \tau, \quad P^{\tau}(z,w) = (-\overline{w}: \overline{z}).
\]
Note that $P^{\tau}$ is antilinear on the fibres, and $(P^{\tau})^2 = -\mathrm{Id}$.
On the dual bundle $\bL = \tau^*$, we then obtain the anti-linear, anti-holomorphic bundle automorphism 
\[
P^{\bL} \colon \bL \rightarrow \bL, \quad P^{\bL}(\alpha) := \overline{\alpha} \circ (P^{\tau})^{-1}.
\]
If $\alpha \in \bL_{[z,w]}$, then $P^{\bL}(\alpha) \in \bL_{P[z:w]}$. Indeed, 
if $\alpha$ is a linear functional on $\tau_{[z:w]}$, then 
$P^{\bL}(\alpha)$ is the concatenation of the antilinear map $\overline{\alpha} \colon \tau_{[z:w]} \rightarrow \C$
with the antilinear map $(P^{\tau})^{-1} \colon \tau_{P[z:w]} \rightarrow \tau_{[z:w]}$, 
and therefore a linear map $P^{\bL}(\alpha) \colon \tau_{P[z:w]} \rightarrow \C$. 
The map $P^{\bL} \colon \bL_{[z:w]} \rightarrow \bL_{P[z:w]}$ is antilinear,  
and $P^{\bL} \colon \bL^{\otimes k} \rightarrow \bL^{\otimes k}$ is an anti-holomorphic 
automorphism that sends the holomorphic section $\sigma_{\Omega}$ from Proposition~\ref{Prop:explicitHolRep}
to the holomorphic section $P^{\bL}_*\sigma_{\Omega} = P^{\bL} \circ \sigma_{\Omega} \circ P^{-1} = \sigma_{(P_*)^{\otimes k} \Omega}$, 
where 
\begin{equation}\label{eq:coordinatesHolSections}
P_* \begin{pmatrix}\Omega_1 \\ \Omega_2\end{pmatrix} = \begin{pmatrix}- \overline{\Omega}_2 \\ \overline{\Omega}_1\end{pmatrix}.
\end{equation}
So $P^{\bL}_* \colon \Gamma_{\mathrm{Hol}}(\bL^{\otimes k}) \rightarrow \Gamma_{\mathrm{Hol}}(\bL^{\otimes k})$ is an anti-linear isomorphism, that can be 
identified with $(P_*)^{\otimes k} \colon S^k \C^2 \rightarrow S^k\C^2$ using Proposition~\ref{Prop:explicitHolRep}.
The map $d\Phi_k \colon C^{\infty}(\bS^2) \rightarrow \fu(V_k)$ as well as its traceless version 
$d\overline{\Phi}_{k} \colon C^{\infty}_0(\bS^2) \rightarrow \su(V_k)$
are equivariant under point reflections.
\begin{Lemma} For $f\in C^{\infty}(\bS^2)$, we have
\[d\Phi_k(\alpha_P(f)) = P^{\bL}_* d\Phi_k(f) (P^{\bL}_*)^{-1}.\]
The same equivariance holds for the traceless version $d\overline{\Phi}_{k}$.
\end{Lemma}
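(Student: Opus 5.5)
Write $\rho^k(P)$ for the antiunitary operator $s \mapsto P^{\bL}\circ s\circ P^{-1}$ on $\cH_k$. Its restriction to $V_k$ is $P^{\bL}_*$. We work directly with $d\Phi_k(f) = \Pi_k\circ(ikf - \nabla^{\otimes k}_{X_f})\circ\Pi_k$, and check three facts in order.

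\textbf{Step 1: $\rho^k(P)$ commutes with $\Pi_k$.} $P^{\bL}$ is antiholomorphic as a map of total spaces and covers the antiholomorphic map $P$. Hence $P^{\bL}\circ s\circ P^{-1}$ is holomorphic whenever $s$ is, as already noted for the sections $\sigma_\Omega$. So $\rho^k(P)$ preserves $V_k$. Next, $P^{\bL}$ preserves $h$ up to complex conjugation, $h(P^{\bL}\alpha,P^{\bL}\beta)\circ P = \overline{h(\alpha,\beta)}$, because $P^{\tau}$ is unitary-antilinear. Also $P$ preserves the measure $|\omega|$. Therefore $\rho^k(P)$ is antiunitary. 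An antiunitary map preserving $V_k$ also preserves $V_k^{\perp}$, so it commutes with the orthogonal projection $\Pi_k$.

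\textbf{Step 2: conjugating the prequantum operator.} Since $P^{\tau}$ comes from the antiunitary map $P_*$ on $\C^2$, which preserves the trivial connection on $\CP^1\times\C^2$ and its Hermitian structure, $P^{\bL}$ preserves $\nabla$. Consequently
\[\rho^k(P)\,\nabla^{\otimes k}_{X_f}\,\rho^k(P)^{-1} = \nabla^{\otimes k}_{P_*X_f},\qquad \rho^k(P)\,(ikf)\,\rho^k(P)^{-1} = -ik\,(P^{-1})^*f,\]
where the sign in the second identity comes from antilinearity. Since $P^*\omega=-\omega$, the vector field $P_*X_f$ is Hamiltonian with potential $-(P^{-1})^*f = \alpha_P(f)$. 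Combining the two identities gives
\[\rho^k(P)\,(ikf-\nabla^{\otimes k}_{X_f})\,\rho^k(P)^{-1} = -ik\,(P^{-1})^*f - \nabla^{\otimes k}_{X_{\alpha_P(f)}} = ik\,\alpha_P(f) - \nabla^{\otimes k}_{X_{\alpha_P(f)}},\]
which is $d\rho^k(\alpha_P(f))$. Sandwiching between projections and using Step 1 yields $d\Phi_k(\alpha_P(f)) = P^{\bL}_*\, d\Phi_k(f)\,(P^{\bL}_*)^{-1}$.

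\textbf{Step 3: the traceless version.} Conjugating $d\Phi_k(f)$ by an antiunitary $U$ changes its trace to $\overline{\tr\, d\Phi_k(f)}$. Since $d\Phi_k(f)$ is skew-Hermitian, its trace $\tau$ is purely imaginary, so $\overline{\tau} = -\tau$. On the other hand $U(\tau\one)U^{-1} = \overline{\tau}\,\one = -\tau\one$. So the correction term $\frac{1}{k+1}\tr(d\Phi_k(f))\one$ transforms in the same way as $d\Phi_k(f)$ itself, and the equivariance passes to $d\overline{\Phi}_k$.

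\textbf{Main obstacle.} The delicate part is the sign bookkeeping in Step 2. Two signs must cancel: one from antilinearity acting on $ikf$, and one from $P$ reversing $\omega$, which changes the potential of $P_*X_f$. They combine to give exactly $\alpha_P(f)$ rather than $-\alpha_P(f)$. This must be checked against the paper's conventions $i_{X_f}\omega + df = 0$ and $\alpha_P(f) = -(P^{-1})^*f$. The verification that $P^{\bL}$ is compatible with $\nabla$ is routine once $P^{\tau}$ is viewed as induced by $P_*$ on $\C^2$.
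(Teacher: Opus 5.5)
Your proof is correct, and it takes a different route from the paper's.

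The paper first applies Tuynman's lemma to write $d\Phi_k(f) = iT_k(kf + \frac{1}{2}\Delta f)$. It then uses that $P^*$ commutes with the Laplace--Beltrami operator to set $g := kf+\frac12\Delta f$. This reduces the claim to equivariance of Berezin--Toeplitz operators, $iT_k(\alpha_P g)\,P^{\bL}_* = P^{\bL}_*\, iT_k(g)$. That identity is a one-line computation with multiplication operators: the only sign comes from antilinearity acting on $i$, together with $\alpha_P g = -(P^{-1})^*g$.

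You instead conjugate the full prequantum operator $ikf - \nabla^{\otimes k}_{X_f}$ on all smooth sections before compressing by $\Pi_k$. This needs something the paper never checks: that $P^{\bL}$ preserves $\nabla$ and is fibrewise anti-isometric for $h$. It also needs the identity $P_*X_f = X_{\alpha_P(f)}$, which absorbs the second sign. Your reasoning for both points is sound. The antiunitary map $P_*$ on $\C^2$ commutes with real derivatives and with orthogonal projection onto complex lines, and this passes to the dual and to tensor powers.

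In exchange, your route gains the following.
\begin{itemize}
\item It proves equivariance already at the level of $d\rho^k$, before compression.
\item It uses neither Tuynman's lemma nor $[P^*,\Delta]=0$. It would therefore work for any antiholomorphic, anti-symplectic map that has an antilinear lift preserving $(\nabla,h)$. Note that $P^*|\omega| = |\omega|$ is automatic from $P^*\omega=-\omega$.
\item Your Step 1 supplies the justification, via antiunitarity of $\rho^k(P)$, that $P^{\bL}_*$ commutes with $\Pi_k$. The paper uses this fact without comment.
\end{itemize}
The paper's route is shorter, since Tuynman's lemma is already in hand for the limit computation.

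Your Step 3 agrees with the paper's trace computation. The skew-Hermitian observation there is not needed: $\tr(UAU^{-1}) = \overline{\tr A}$ and $U(c\one)U^{-1} = \bar c\,\one$ already match for every $c\in\C$.
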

\begin{proof}
Recall that $d\Phi_k(f) = iT_k(kf + \frac{1}{2}\Delta f)$. Since $P^*$ commutes with the Laplace-Beltrami 
operator $\Delta$, 
we have $d\Phi_k(\alpha_{P}f) = iT_k(\alpha_P g)$ for the smooth function $g := kf + \frac{1}{2}\Delta f$, so it suffices to show the  
equivariance of the Berezin-Toeplitz quantisation,  
$iT_k (\alpha_P g) P^{\bL}_*= P^{\bL}_* iT_k(g)$.

Since $P^{\bL}_*$ is antilinear and commutes with the projection $\Pi_k$ on holomorphic sections, 
we have 
\[
P^{\bL}_* iT_k(g) s = - i \Pi_k P^{\bL}_* (g s) = -i \Pi_k((P^{-1})^* g) P^{\bL}_*s
\]
for holomorphic sections $s$ of $\bL^{\otimes k}$, which coincides with 
\[
iT_k (\alpha_P g) P^{\bL}_*(s)  =  i \Pi_k (- P^{-1})^* g) P^{\bL}_*s
\]
as required.
From this we readily find the equivariance of the traceless version:
\begin{eqnarray*}
d\oline{\Phi}_k(\alpha_P(f)) & = & d\Phi_k(\alpha_P(f)) - \textstyle{\frac{1}{k+1}} \tr\big(d\Phi_k(\alpha_P(f))\big) \one\\
& = & P^{\bL}_* d\Phi_k(f) (P^{\bL}_*)^{-1} - \textstyle{\frac{1}{k+1}} \tr\big(P^{\bL}_* d\Phi_k(f) (P^{\bL}_*)^{-1}\big) \one\\
& = & P^{\bL}_* \Big( d\Phi_k(f) (P^{\bL}_*)^{-1} - \textstyle{\frac{1}{k+1}} \tr\big(d\Phi_k(f) \big) \one\Big) (P^{\bL}_*)^{-1}\\
& = & P^{\bL}_* d\oline{\Phi}_k(\alpha_P(f)) (P^{\bL}_*)^{-1}. 
\end{eqnarray*}%
\end{proof}

Since $\mathfrak{su}(2)$ is invariant under point reflections, the commutative diagram 
\eqref{CDLieAlgebrawithouttrace} restricts to the commutative diagram
\begin{equation}\label{CDLieAlgebrawithouttraceReflectionInvariant}
\begin{tikzcd}[column sep=tiny]
LC_0^{\infty}(\CP^1)^{P} \arrow[rr, dashed, "L\overline{d\Phi}_k"] &  & L\su(V_k)^{P}\\
& L\su(2).\arrow[ul, hook, "L\iota"]\arrow[ur, hook', "L\pi_k"']
\end{tikzcd}
\end{equation}
Here $LC_0^{\infty}(\CP^1)^{P}$ denotes the twisted loop algebra of all smooth functions 
$F \colon \R \rightarrow C^{\infty}_0(\bS^2)$ that satisfy $F_{t+1} = \alpha_P (F_t)$, and 
$L\su(V_k)^{P}$ denotes the twisted loop algebra of all smooth functions 
$\xi \colon \bS^1 \rightarrow \su(V_k)$
with $\xi_{t+1} = P^{\bL}_* \xi_t (P^{\bL}_*)^{-1}$.

If we use Proposition~\ref{Prop:explicitHolRep} to identify $\Gamma_{\mathrm{hol}}(\bL^{\otimes k})$  
with the $\mathrm{SU}(2)$-representation $S^k\C^2$, then by \eqref{eq:coordinatesHolSections}, the 
anti-linear map $P_*^{\otimes k} \colon S^k\C^2 \rightarrow S^k\C^2$ is given by 
$P_*^{\otimes k} = \pi_k(u)\circ J = J \circ \pi_k(u)$, where $J$ is complex conjugation on $S^k\C^2$ and 
\[
u  = \exp(-\textstyle{\frac{1}{2}}\pi i \sigma_y) = \begin{pmatrix} 0&-1\\1&0\end{pmatrix}.
\]
One readily checks that conjugation of $\su(2)$ by $P_*$ is trivial, so $L\su(2)^{P} = L\su(2)$.
For $k>1$, the conjugation by $\pi_k(u)$ does not affect the 
isomorphism class of $L\su(k+1)$ because it is an inner automorphism for the connected Lie group $\mathrm{SU}(k+1)$, 
but for $k>1$ conjugation by the antilinear map $J$ is not an inner automorphism: it is an antilinear intertwiner between 
the defining representation and its dual.

For $k>1$, then, the Lie algebra $L\su(V_k)^{P}$ is not isomorphic to the loop algebra $L\su(V_k)$, but
its complexification $L\su(V_k)^{P}_{\C}$ is the twisted loop algebra corresponding to the unique nontrivial 
diagram automorphism of $\mathfrak{sl}(k+1) \simeq A_k$.

We identify $LC^{\infty}_0(\CP^1)^{P}$ with $C^{\infty}_0(\bS^1 \times \bS^2)^{\Z_2}$ as before, 
and if we identify $L\su(V_k)^{P}$ with the fixed point algebra $C^{\infty}(\bS^1, \su(k+1))^{\Z_2}$ 
under conjugation by $\pi_k(u)J$. Then the 
cocycles on $L\su(2)$, $L\su(k+1)^{\Z_2}$ and $C^{\infty}_0(\bS^1 \times \bS^2)^{\Z_2}$ given by
\begin{eqnarray}
\psi_{1}(\xi,\eta) &=& \frac{1}{2\pi}\int_{\bS^1}\kappa_{\su(2)}(\xi, \eta') dt\\
\psi^{P}_k(X,Y) &=& \frac{1}{4\pi}\int_{\bS^1}\kappa_{\su(k+1)}(X, Y') dt\\
\psi_{\infty}^{P}(f,g) &=& \frac{3}{(2\pi)^2}\int_{\bS^1}\Big(\int_{\bS^2}fg' \omega\Big)dt
\end{eqnarray}
can again be arranged in a diagram 
\begin{equation}\label{CDLieAlgebrawithouttraceTwist}
\begin{tikzcd}[column sep=tiny]
c^P_{\infty} \psi^{P}_{\infty} \arrow[dr, two heads, "L\iota^*"']&  & \arrow[dl, two heads, "L\pi^*_k"] \arrow[ll, dashed, "L\overline{d\Phi}^*_k"'] c^P_k \psi^P_{k}\\
& c_1 \psi_{1}.
\end{tikzcd}
\end{equation}

The prefactors were chosen in such a way that integral multiples correspond to integrable cocycles;
see Theorem~\ref{ThmCocycleTwistedLoop} for $C^{\infty}_{0}(\bS^1 \times \bS^2)^{\Z_2}$, the argument for $C^{\infty}(\bS^1, \su(k+1))$
is similar.
By restricting Theorem~\ref{Theorem:convergence} to the fixed point algebras, we immediately find the corresponding result in the twisted case:
\begin{Theorem}\label{Theorem:convergenceTwist}
The Lie algebra cocycles $c_1\psi_1$, $c^P_k\psi^P_k$ and $c^P_{\infty}\psi^P_{\infty}$ integrate to the group level if and only if the 
corresponding constants $c_1$, $c^P_k$ and $c^P_{\infty}$ are integers. 
\begin{itemize}
\item[a)]
We have $(L\iota)^* c^P_{\infty}\psi^P_{\infty} = c_1 \psi_1$ if and only if 
\[c_{1} = - \textstyle{\frac{1}{2}}c^{P}_{\infty}.\]
\item[b)]
We have $(L\pi_k)^* c^P_k \psi_k = c_1 \psi_1$ if and only if 
\[\textstyle c_{1} = \frac{k(k+1)(k+2)}{12} c^{P}_k.\]
\item[c)] 
Finally, if $c^{P}_k = - \frac{6}{k(k+1)(k+2)}c^{P}_\infty$ we have the (pointwise) limit
\[\lim_{k \rightarrow \infty} (Ld\overline{\Phi}_k)^* c^{P}_k \psi_k(f,g) = c_{\infty} \psi^{P}_{\infty}(f,g).\]
\end{itemize}
\end{Theorem}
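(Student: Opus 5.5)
The plan is to deduce Theorem~\ref{Theorem:convergenceTwist} from Theorem~\ref{Theorem:convergence} by restricting everything to the fixed point algebras and tracking how the twisted prefactors compare with the untwisted ones. Under the identifications $L_P\fk \simeq C^{\infty}_0(\bS^1\times\bS^2)^{\Z_2}$ and $L\su(V_k)^P \simeq C^\infty(\bS^1,\su(k+1))^{\Z_2}$, the twisted algebras are subalgebras of the untwisted ones. By construction $\psi^P_\infty = \frac12\psi_\infty$ and $\psi^P_k = \frac12\psi_k$ on these subalgebras. The factor $\frac12$ reflects that the base circle of the twisted bundle is $\bS^1/\Z_2$, as explained after Theorem~\ref{ThmCocycleTwistedLoop}.

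\textbf{Integrality.} For $c^P_\infty$ this is exactly Theorem~\ref{ThmCocycleTwistedLoop}. For $c^P_k$ I would argue as indicated before the statement. By \cite[Lemma~3.10]{NW09}, the period map for a cocycle of the form $\lambda(\kappa(X,\nabla Y))$ on sections of a Lie algebra bundle over $\bS^1$ does not depend on the twist. The twisted Kac--Moody cocycle corresponds to the same current on the base circle $\bS^1/\Z_2$ as the untwisted one. Hence the integrality lattice is again $\Z$, by \cite[Thm.~4.4.1]{PS86}.

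\textbf{Part a) and the loop parameter.} Since $P_*$ acts trivially on $\su(2)$, the subalgebra $L\su(2)$ lies in the fixed point algebra. The one point needing care is the loop parameter. In the twisted picture, $L\su(2)$ sits inside the $\Z_2$-invariant functions on the double cover $\bS^1$ as the loops pulled back from $\bS^1/\Z_2$. I would therefore reparametrize: a loop $\xi$ in the base variable becomes $\xi(2t)$ upstairs. The untwisted computation \eqref{eq:Compatible2withInfty} then gives
\[
(L\iota)^*\psi^P_\infty = \tfrac12(L\iota)^*\psi_\infty = -\tfrac12\psi_1 .
\]
This holds once the factor from $\int \xi(2t)\eta'(2t)\,2\,dt = \int \xi\eta'\,dt$ is checked to cancel, as it does. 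So $c_1 = -\frac12 c^P_\infty$.

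\textbf{Part b).} The same argument gives $(L\pi_k)^*\psi^P_k = \frac12\cdot\frac{k(k+1)(k+2)}{6}\psi_1$, using $\pi_k^*\kappa_{\su(k+1)} = \frac{k(k+1)(k+2)}{6}\kappa_{\su(2)}$ from the proof of Theorem~\ref{Theorem:convergence}. Hence $c_1 = \frac{k(k+1)(k+2)}{12}c^P_k$.

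\textbf{Part c).} By the preceding Lemma, $Ld\overline{\Phi}_k$ maps the twisted subalgebra into the twisted subalgebra. The pointwise limit \eqref{limitk3} holds for all $f,g\in LC^\infty_0(\CP^1)$, in particular for $\Z_2$-invariant ones. Multiplying by $\frac12$ on both sides gives
\[
\lim_{k\to\infty} -\tfrac{6}{k^3}(Ld\overline{\Phi}_k)^*\psi^P_k = \psi^P_\infty .
\]
Since $k(k+1)(k+2)/k^3\to 1$, choosing $c^P_k = -\frac{6}{k(k+1)(k+2)}c^P_\infty$ gives the claim, reading the right-hand side as $c^P_\infty\psi^P_\infty$.

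\textbf{Main obstacle.} The hardest step is keeping track of the factors of $2$ from the double cover. This means getting consistent normalizations between the base circle $\bS^1/\Z_2$ and its cover, both in the period computation and in the reparametrization of $L\su(2)$. The analytic content is entirely inherited from the untwisted case.
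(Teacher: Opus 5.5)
Your route is the paper's: its whole proof is that the twisted statement follows by restricting Theorem~\ref{Theorem:convergence} to the $\Z_2$-fixed-point algebras. Your treatment of integrality and of part c) is fine (for c) you halve both sides of \eqref{limitk3} and use the equivariance Lemma). The problem is the step you yourself call the crux, the loop parameter in a) and b), where the computation is wrong.

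When you write a base loop as $\xi(2t)$ on the double cover, $s=2t$ runs \emph{twice} around $\bS^1/\Z_2$ while $t$ runs once around $\bS^1$. Hence
\[
\frac{1}{2\pi}\int_{0}^{2\pi}\kappa_{\su(2)}\Big(\xi(2t),\tfrac{d}{dt}\eta(2t)\Big)\,dt=\frac{1}{2\pi}\int_{0}^{4\pi}\kappa_{\su(2)}\big(\xi(s),\eta'(s)\big)\,ds=2\,\psi_1(\xi,\eta),
\]
and the factor does not cancel; you can check this on $\xi=X\cos s$, $\eta=Y\sin s$. Done correctly, your reparametrised computation gives $(L\iota)^*\psi^P_\infty=-\psi_1$ and $(L\pi_k)^*\psi^P_k=\frac{k(k+1)(k+2)}{6}\psi_1$. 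That is, $c_1=-c^P_\infty$ and $c_1=\frac{k(k+1)(k+2)}{6}c^P_k$, not the relations in a) and b). You reach the stated constants only through this slip.

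The slip hides a real normalisation tension in the statement. The relations in a) and b) are what bare restriction gives if $\psi_1$ is evaluated by $\frac{1}{2\pi}\int_{\bS^1}\kappa_{\su(2)}(\xi,\eta')\,dt$ over the double cover, with no reparametrisation. But on $\Z_2$-invariant loops this form is twice the basic cocycle of $L\SU(2)$ over the base circle. The Pressley--Segal criterion then says $c_1\psi_1$ integrates iff $2c_1\in\Z$, not $c_1\in\Z$.

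No normalisation of $\psi_1$ makes all clauses hold at once. Since $L\SU(2)$ is simply connected, its inclusion lifts to the universal cover of $(L_P\SDiff(\bS^2))_0$. Pulling back the integrable extension with $c^P_\infty=1$ must therefore give an integrable class, yet a) identifies that class as $-\frac12\psi_1$.

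A correct proof has to fix the normalisation explicitly. One option is to take $\psi_1$ to be the basic cocycle on loops of $\bS^1/\Z_2$, equivalently $\frac{1}{4\pi}\int_{\bS^1}$ over the double cover, the same normalisation as $\psi^P_k$. Then $c_1=-c^P_\infty$ and $c_1=\frac{k(k+1)(k+2)}{6}c^P_k$, which is consistent with c) and with all the integrality claims. The other option is to keep a) and b) as written, in which case the integrality condition on $c_1$ must read $2c_1\in\Z$.
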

\section{Beyond the 2-sphere}\label{sec:BeyondSpheres}

For definiteness we have restricted attention to the 2-sphere $\bS^2$, but it is natural to ask what happens 
for more general symplectic manifolds $(\Sigma, \omega)$. For the classification of cocycles, the natural 
starting point would be $C^{\infty}_{c}(M, \fk)$ with $\fk = C^{\infty}_{c, 0}(\Sigma)$ the Poisson Lie algebra
of compactly supported functions that integrate to zero. The `twisted' version can then be described 
in terms of an (unoriented) symplectic bundle $F \rightarrow M$ with typical fibre $(\Sigma, |\omega|)$.

Although we expect that the above approach to central extensions can be adapted to this setting,
we have not carried out the analysis in detail. At this point we would like to briefly point out some ideas, conjectures and 
open problems in this direction. 

\subsection{Conjectural classification of 2-cocycles}
For a (not necessarily compact) symplectic manifold $(\Sigma, \omega)$,  
the Poisson Lie algebra $\fk = C^{\infty}_{c, 0}(\Sigma)$ 
is perfect 
\cite{ALDM74, JV16}. One could attempt a localization procedure along the lines of \cite{JW13},
and if this works, one can restrict attention to the local case $\fg_{c}(M) = C^{\infty}_{c}(M, \fk)$.

Then \cite{NW08} suggests that one should expect two sources of cocycles: the invariant bilinear forms on $\fk$ and the 
second Lie algebra cohomology $H^2(\fk, \R)$. By \cite{JV16}, the second continuous Lie algebra cohomology
of $\fk$ is isomorphic to $H^1_{\mathrm{dR}}(M)$. A symplectic vector field $S$ on $\Sigma$
gives rise to the 2-cocycle $\psi(f,g) = \int_{\Sigma} (fL_{S}g - gL_{Sf})\omega^n/n!$, and this 2-cocycle 
is exact if and only if $S$ is Hamiltonian.
From this, one obtains 2-cocycles on $\fg_{c}(M)$ of the form 
\begin{equation}\label{eq:ExtraH2}
\psi_{\phi, S}(F, G) = \phi\Big(\int_{\Sigma} (F L_{S} G - G L_{S}F) \;\omega^{n}/n!\Big),
\end{equation}
where $S \colon M \rightarrow \X(\Sigma,\omega)$ is a smooth function with values in the symplectic 
vector fields, and $\phi$ is a distribution on $M$. They do not show up for $\bS^2$ because all symplectic 
vector fields on $\bS^2$ are hamiltonian (so they give rise to coboundaries), so these `vertical cocycles'
are an additional feature of the problem. 

The second source of cocycles to expect is the space of continuous invariant bilinear forms. One continuous invariant bilinear form on $C^{\infty}_{c, 0}(\Sigma)$ is 
the usual inner product \[\kappa(f,g) = \int_{\Sigma}fg \omega^n/n!,\] 
but as far as we are aware it is an open problem to prove that it is unique up to scaling.
One possible approach would be to first prove it for a suitable local model (e.g.~$T^*\R^n$, or $\CP^n$), 
and then use arguments in the spirit of \cite[Lemma~2.2]{JW13} to localize the problem by showing that every continuous 
invariant bilinear form on the perfect Lie algebra $\fk$ arises from a distribution on $\Sigma \times \Sigma$
that is supported on the diagonal $\Delta(\Sigma)$.

An interesting new feature compared to the case of $\bS^2$ is that for a (necessarily noncompact) symplectic manifold $\Sigma$
with a conformal symplectic vector field $E$ satisfying $L_{E} \omega = c \omega$, the Koszul map 
$\Gamma(\kappa)(f, g, h) = \kappa(\{f, g\}, h\})$ is \emph{not} injective, at least if $c\neq 0$. 
Indeed, if $X_f$ is a hamiltonian vector field with potential $f$, then $[E, X_f]$ is a hamiltonian vector field 
with potential $L_{E}f - cf$. With this, one readily sees that  
\[
L_{E}\{f, g\} 
= \{L_{E}f, g\} + \{f, L_{E}g\} - c\{f,g\},
\]
and that the 2-cochain $\eta(f,g) = \int_{\Sigma}(fL_{E}g - gL_{E}f)\omega^n/n!$ satisfies
\begin{eqnarray*}
d\eta(f,g,h) 
& = & (n+2)c \int_{\Sigma} \{f,g\}h \omega^n/n!.
\end{eqnarray*}
Following \cite{NW08}, we then obtain for \emph{any} (not necessarily closed) current $\lambda \in \Omega_{c}^1(M)'$
the \emph{coupled cocycle} 
\begin{equation}\label{eq:ExtracocycleKoszul}
\psi_{\lambda}(F,G) = \lambda \Big(\int_{\Sigma} (n+2) c \Big(Fd^{M}G - Gd^{M}F\Big) - d^{M}(FL_{E}G - GL_{E}F) \omega^n/n!\Big), 
\end{equation}
where $d^{M}$ denotes the de Rham differential on $M$.
Note that if $\lambda$ is closed, $\lambda(d\Omega_{c}^0(M)) = 0$, then this reduces up to scaling to the usual cocycle 
\begin{equation}\label{UsualCocycle}
\psi_{\lambda}(F,G) = \lambda\Big(\int_{\Sigma} \big(Fd^{M}g - Gd^{M}F\big)\omega^n/n!\Big)
\end{equation}
that we expect for compact manifolds. For symplectic vector fields $S$ (conformal vector fields $E$ with $c=0$), one recovers the cocycles from \eqref{eq:ExtraH2} 
that come from an exact current $\phi \in \Omega^0_{c}(M)'$, in the sense that $\phi = \lambda \circ d^M$.
 
Conjecturally, then, the picture would be that the 2-sphere $\Sigma = \CP^1$ admits only for the cocycles 
\eqref{UsualCocycle} coming from the invariant bilinear form.
A single puncture $\Sigma = \CP^1 \setminus \{z_0\}$ would result in additional cocycles 
of the coupled type \eqref{eq:ExtracocycleKoszul}, associated to the Euler vector field $E$ on $\CP^1\setminus \{z_0\} = T^*\R$.  
For every additional puncture $\Sigma = \CP^1 \setminus \{z_0, \ldots, z_n\}$, one
expects additional cocycles of `vertical' type \eqref{eq:ExtraH2}, associated to symplectic vector fields $S_n$
whose class in $H^1_{\mathrm{dR}}(\Sigma)$ is Poincar\'e dual to a loop around the extra puncture $z_n$. 

The cocycles where the integrability is best understood are probably those of general type \eqref{UsualCocycle}, where 
results from \cite{NW09} imply that integrability hinges on a period homomorphism $\pi_2(\mathrm{Ham}_{c}(\Sigma)) \rightarrow \R$
for the (simply connected cover of the) group of compactly supported hamiltonian diffeomorphisms with compactly supported potentials. 

For the `coupled' and `vertical' cocycles, the precise integrability conditions are an open problem. 
For loop groups, one possible approach could be to evaluate the period homomorphism on $\pi_2(L\mathrm{Ham}_c(\Sigma))$
provided in \cite{Ne02}. Alternatively, for the `vertical' cocycles of type \eqref{eq:ExtraH2},
the ideas from \cite{JV18} and \cite{DJNV21, DJNV24} might be useful 
to obtain at least sufficient conditions for integration. 

\subsection{Conjectural limits for compact K\"ahler manifolds}

For the `fuzzy sphere limits' the natural setting would be the loop algebra $L\fk$, where 
$\fk = C^{\infty}_{0}(\Sigma)$ is the Poisson algebra for a compact K\"ahler manifold $\Sigma$.
Needless to say, geometric quantization and Berezin-Toeplitz quantization are well adapted to this setting.
If $\bL \rightarrow \Sigma$ is a prequantum line bundle, then 
geometric quantization yields a map from $C^{\infty}_{0}(\Sigma)$ to $\mathrm{End}(\Gamma_{\mathrm{hol}}(\bL^{\otimes k}))$.
Since the space of holomorphic sections is finite dimensional, one can apply the loop functor to obtain natural 
maps from $L\fk$ to $L\su(d(k))$, where $d(k)$ is the dimension of the space of holomorphic sections of $\bL^{\otimes k}$ 
\cite{BHSS91,BMS94}.
Although we have carried out a detailed analysis only for $\Sigma = \CP^1$, it seems likely that a similar procedure 
should yield interesting `geometric quantization limits' for cocycles of the type \eqref{UsualCocycle}.

For the cocycles of `vertical' and `coupled' type, it would be interesting to investigate if -- and if so, how -- these arise 
from (possibly trivial) cocycles on loop algebras with finite dimensional fibres in the limit $k \rightarrow \infty$.

 \section*{Acknowledgments}
Z.W. is partially supported by ARO MURI contract W911NF-20-1-0082.
 
\bibliographystyle{alpha}

\end{document}